\documentclass[acmsmall]{acmart}

\setcopyright{none}
\usepackage{tcolorbox}
\tcbset{findingstyle/.style={colback=white,colframe=black!55,boxrule=0.4pt,arc=1pt,left=1mm,right=1mm,top=0.35mm,bottom=0.35mm,boxsep=0.35mm,before skip=2pt,after skip=2pt}}

\usepackage{aliascnt}
\usepackage{stmaryrd}
\usepackage{bm}
\usepackage{booktabs}
\usepackage{tikz}
\usetikzlibrary{positioning}
\usepackage{svg}
\usetikzlibrary{arrows.meta,positioning,calc}
\usepackage{cleveref}
\crefname{section}{\S}{\S\S}
\Crefname{section}{\S}{\S\S}
\crefformat{section}{\S#2#1#3}
\usepackage{forest}
\usepackage{multirow}
\usepackage{array}
\usepackage{amsmath}
\usepackage{amsfonts}
\usepackage{float}
\usepackage{algorithm}
\usepackage{algpseudocode}
\usetikzlibrary{arrows.meta,positioning,calc,decorations.pathreplacing}

\theoremstyle{plain}
\newtheorem{theorem}{Theorem}[section]

\newaliascnt{lemma}{theorem}
\newtheorem{lemma}[lemma]{Lemma}
\aliascntresetthe{lemma}

\newaliascnt{proposition}{theorem}
\newtheorem{proposition}[proposition]{Proposition}
\aliascntresetthe{proposition}

\newaliascnt{corollary}{theorem}

\aliascntresetthe{corollary}

\theoremstyle{definition}
\newaliascnt{definition}{theorem}
\newtheorem{definition}[definition]{Definition}
\aliascntresetthe{definition}

\newaliascnt{remark}{theorem}

\aliascntresetthe{remark}

\newaliascnt{example}{theorem}

\aliascntresetthe{example}

\crefname{theorem}{theorem}{theorems}
\Crefname{theorem}{Theorem}{Theorems}
\crefname{lemma}{lemma}{lemmas}
\Crefname{lemma}{Lemma}{Lemmas}
\crefname{proposition}{proposition}{propositions}
\Crefname{proposition}{Proposition}{Propositions}
\crefname{corollary}{corollary}{corollaries}
\Crefname{corollary}{Corollary}{Corollaries}
\crefname{definition}{definition}{definitions}
\Crefname{definition}{Definition}{Definitions}
\crefname{remark}{remark}{remarks}
\Crefname{remark}{Remark}{Remarks}
\crefname{example}{example}{examples}
\Crefname{example}{Example}{Examples}

\usepackage[dvipsnames]{xcolor,colortbl}
\usepackage{amsmath}
\usepackage{graphicx}
\usepackage{booktabs}
\usepackage{array}		
\usepackage{multirow}
\usepackage{subcaption}
\usepackage{algorithm}
\usepackage{algpseudocode}
\usepackage{paralist}   
\usepackage{xspace}
\usepackage{adjustbox}
\usepackage{balance}
\usepackage{rotating}   
\usepackage{listings}
\usepackage{pifont}
\usepackage{framed}
\usepackage[shortlabels]{enumitem}
\setlist{nosep,leftmargin=\parindent}
\usepackage{microtype}
\usepackage{amsmath, amsthm}

\makeatletter
\newcommand{\ostar}{\mathbin{\mathpalette\make@circled\star}}
\newcommand{\make@circled}[2]{%
  \ooalign{$\m@th#1\smallbigcirc{#1}$\cr\hidewidth$\m@th#1#2$\hidewidth\cr}%
}
\newcommand{\smallbigcirc}[1]{%
  \vcenter{\hbox{\scalebox{0.77778}{$\m@th#1\bigcirc$}}}%
}
\makeatother

\definecolor{mygreen}{rgb}{0,0.6,0}
\definecolor{mygray}{rgb}{0.5,0.5,0.5}
\definecolor{mymauve}{rgb}{0.58,0,0.82}
\definecolor{darkblue}{rgb}{0.0,0.0,0.6}
\definecolor{maroon}{RGB}{102, 0, 0}
\definecolor{Maroon}{cmyk}{0,0.87,0.68,0.32}
\definecolor{darkred}{RGB}{139, 0, 0}
\definecolor{forestgreen}{RGB}{34, 139, 34}

\lstdefinelanguage{json}{
  basicstyle=\ttfamily\small,
  numbers=left,
  numberstyle=\tiny,
  stepnumber=1,
  showstringspaces=false,
  breaklines=true,
  frame=single,
  xleftmargin=1.8em,        
  numbersep=6pt, 
  literate=
   *{0}{{{\color{blue}0}}}{1}
    {1}{{{\color{blue}1}}}{1}
    {2}{{{\color{blue}2}}}{1}
    {3}{{{\color{blue}3}}}{1}
    {4}{{{\color{blue}4}}}{1}
    {5}{{{\color{blue}5}}}{1}
    {6}{{{\color{blue}6}}}{1}
    {7}{{{\color{blue}7}}}{1}
    {8}{{{\color{blue}8}}}{1}
    {9}{{{\color{blue}9}}}{1}
    {:}{{{\color{red}:}}}{1}
    {,}{{{\color{red},}}}{1}
    {\{}{{{\color{red}{\{}}}}{1}
    {\}}{{{\color{red}{\}}}}}{1}
    {[}{{{\color{red}[}}}{1}
    {]}{{{\color{red}]}}}{1},
}

\newcommand{\twrchanged}[1]{{\color{black}{#1}}}

\newcommand{\cychanged}[1]{{\color{black}{#1}}}

\newcommand{\diwchanged}[1]{\textcolor{black}{#1}}
\newcommand{\diwnew}[1]{\textcolor{black}{#1}}

\newcommand{\tool}{\textsc{Dickens}\xspace}

\begin{document}

\title[\diwchanged{Moment-Based Analysis of Probabilistic Cost Structures}]{The Best of Times, the Worst of Times:}
\subtitle{\diwchanged{Moment-Based Analysis of Probabilistic Cost Structures}
}

\author{Chenyu Zhou}
\orcid{0009-0006-8493-6886}
\affiliation{%
  \institution{University of Southern California}
  \city{Los Angeles}
  \country{USA}
}
\email{czhou691@usc.edu}

\author{Di Wang}
\orcid{0000-0002-2418-7987}
\affiliation{%
  \institution{Peking University}
  \city{Beijing}
  \country{China}
}
\email{wangdi95@pku.edu.cn}

\author{Thomas Reps}
\orcid{0000-0002-5676-9949}
\affiliation{%
  \institution{University of Wisconsin-Madison}
  \city{Madison}
  \country{USA}
}
\email{reps@cs.wisc.edu}

\begin{abstract}
This paper studies how to compute the moments---mean, variance, and beyond---of the cost (e.g., running time) of certain probabilistic programs, in which local costs combine not only additively but also via the extremal operations $\max$ and $\min$.
Such costs arise naturally---for instance, the number of rounds of a contention-resolution protocol, the waiting time of a quantum repeater, and the completion time of a fork-join computation---but fall outside the scope of moment-based analyses developed for additive costs.
The difficulty is that $\max$ and $\min$ are nonlinear: the moments of $\max(X,Y)$ are not determined by those of $X$ and $Y$, so propagating moments alone fails.
In contrast, propagating full distributions would suffice, but is computationally intractable.

We present a compositional cost analysis \diwnew{
for a family of probabilistic programs whose cost structure can be represented as a \emph{hierarchical cost expression}.
}
\diwchanged{
The analysis proceeds bottom-up through the hierarchical structure, solving local recurrence equations at each node and summarizing each subproblem with a \emph{surrogate distribution}.
Each surrogate consists of an exact short-time prefix and a compact parametric tail.
}
The analysis tracks two kinds of error bounds---one on distributional shape, one on the moment estimates---yielding sound bounds on the first, second, and higher moments.
Our approach computes the mean \diwnew{of the cost distribution} with a sound error bound, and systematically lifts to second and higher moments.
\diwnew{In addition,} precision can be increased by refining the surrogate representation, trading additional computation for tighter bounds.

\twrchanged{
We implemented our method in a tool, called \tool, and evaluated its capabilities on three problems:
quantum repeater waiting times,
RFID collision resolution, and
completion times of fork–join computations.
}
\end{abstract}

\begin{CCSXML}
<ccs2012>
   <concept>
       <concept_id>10002950.10003648.10003649</concept_id>
       <concept_desc>Mathematics of computing~Probabilistic representations</concept_desc>
       <concept_significance>500</concept_significance>
       </concept>
   <concept>
       <concept_id>10002950.10003648.10003670</concept_id>
       <concept_desc>Mathematics of computing~Probabilistic reasoning algorithms</concept_desc>
       <concept_significance>500</concept_significance>
       </concept>
   <concept>
       <concept_id>10003752.10010124.10010138.10010143</concept_id>
       <concept_desc>Theory of computation~Program analysis</concept_desc>
       <concept_significance>500</concept_significance>
       </concept>
 </ccs2012>
\end{CCSXML}

\ccsdesc[500]{Mathematics of computing~Probabilistic representations}
\ccsdesc[500]{Mathematics of computing~Probabilistic reasoning algorithms}
\ccsdesc[500]{Theory of computation~Program analysis}

\keywords{Moment-based cost analysis, prefix--tail abstraction}

\maketitle

\section{Introduction}
\label{sec:introduction}

Probabilistic programs are a natural way to describe randomized protocols, stochastic systems, and generative processes.
In this paper, we study the \emph{costs} of probabilistic programs, particularly their moments, such as the mean and variance.
When such programs involve procedure calls, their unfolding exposes a tree-shaped structure: leaves are base cases, internal nodes are composition points that specify how the costs of their children are combined, and root-to-leaf paths record nested fragments of the problem decomposition---a structure that determines how local costs compose into a global cost distribution. Examples include quantum-network protocols, contention-resolution protocols, hierarchical physical systems, and size-indexed protocol families.

We focus on a family of probabilistic programs whose cost structure admits a sound decomposition; that is, the cost of a subproblem is probabilistically \emph{independent} of its siblings.
This focus was motivated by prior work on quantum repeater waiting-time analysis \cite{repeater-paper, sangouard2011quantum, shchukin2022, inesta2023}. A repeater scheme is a divide-and-conquer protocol for building long-distance entanglement from elementary links; its waiting time is governed by parallel synchronization, probabilistic swapping, and retries.
To analyze systems with a number of repeaters, Shchukin et al.\ exploited the decomposable structure and proposed hierarchical approximations that nest exact small-system results.
In computer science, there are other kinds of problems that have a similar structure---for instance, the number of rounds of a contention-resolution protocol and the completion time of a fork-join computation.
Our goal is to provide an automated analysis framework for such problems: given a probabilistic cost structure (or a generator of it), compute a distribution-valued approximation for it, such as expectations.
Moreover, Shchukin et al.'s work targets only the mean waiting time, comes without error bounds, and demonstrates that the mean alone is an inadequate summary: the relative variability (i.e., $\frac{\text{standard deviation}}{|\text{mean}|}$)---which indicates how large fluctuations are, compared to the typical value---remains on the order of one across practically relevant parameter regimes. Therefore, our framework also aims to analyze higher moments and provide error bounds.

Our target probabilistic cost structures are captured by \emph{hierarchical cost expressions}, i.e., trees where each subtree denotes a random variable for a cost metric (e.g., waiting time, number of protocol slots, number of rounds, completion latency, or absorption time), which is determined by the subtree root and the children random variables (e.g., $RV_{root} = \max(RV_{child_1},RV_{child_2})$). The whole tree then induces a distribution over non-negative cost values.
In this paper, the quantities of interest (the \emph{query target(s)}) are the expectation (\Cref{sec:method}) and higher moments (\Cref{sec:moment-method}) of the induced cost distribution.
We wish to improve on the work of Shchukin et al.\ in three ways, by
\begin{itemize}
  \item
    providing two kinds of error bounds (on first, second, and higher moments): one on distributional shape, and one on the moment estimates;
  \item
    improving on the precision of the hand-derived large-system approximations, when applied to the quantum-repeater problems from Shchukin et al.; and
  \item 
    formulating a cost-expression language (\Cref{sec:background-recursive-cost-programs}) with a fixed-point semantics for recursive definitions (\Cref{sec:unbounded-extension}), generalizing the hand-crafted, topology-specific models of Shchukin et al.\ to a setting where the cost structure is described by a program.
\end{itemize}
The fixed-point treatment appears in \Cref{sec:unbounded-extension}; 
\Cref{sec:overview}--\Cref{sec:experiment} establish the framework for the non-recursive case that it builds on.
\diwnew{
The language provides a mechanism to write tree-generators as 
generative specifications
for cost structures.
}
For a deterministic tree-generator, there is one such cost distribution for each generated instance.
For a probabilistic tree-generator, the induced cost distribution is the mixture obtained by first sampling a structure and then sampling its cost.

A key obstacle is that cost composition is not purely additive.
Let $X$ and $Y$ be random variables for the costs of two computations, respectively.
Sequential execution composes by addition ($X + Y$), but synchronization constructs such as barriers involve maximum (the completion time is $\max(X, Y)$), while races and redundancy involve minimum (the first successful result arrives at time $\min(X, Y)$). Probabilistic control flow introduces mixtures, and repetition introduces random sums.

Quantitative analyses often interpret program executions over algebraic structures such as semirings, relying on---or imposing---a compositional cost semantics (e.g., semiring-based logic programming \cite{bistarelli2001semiring} and weighted rewriting \cite{ahrens2025weighted}).
However, these approaches rely on closure properties that exclude nonlinear operators like $\max$ and $\min$.
Consequently, the class of problems we wish to address lies outside the scope of semiring-based analyses.
Expectation illustrates the problem.
While it is compositional for some operators (e.g., $\mathbb E[X+Y]=\mathbb E[X]+\mathbb E[Y]$), it is not for others.
In particular, $\mathbb E[\max(X,Y)]$ and $\mathbb E[\min(X,Y)]$ cannot be determined from $\mathbb E[X]$ and $\mathbb E[Y]$ alone.
For example, let $X \equiv 1$, let $Y_1 \equiv 1$, and let $Y_2$ be $0$ or $2$ with equal probability.
Then $\mathbb E[X] = \mathbb E[Y_1] = \mathbb E[Y_2] = 1$,
yet $\mathbb E[\max(X,Y_1)] = 1$ while $\mathbb E[\max(X,Y_2)] = 3/2$, and $\mathbb E[\min(X,Y_1)] = 1$ while $\mathbb E[\min(X,Y_2)] = 1/2$.

The issue is not that $\max$ and $\min$ are intractable given full distributions, but that they fall outside scalar expectation transformers and semiring-valued summaries, whereas exact distributions are often too costly to maintain compositionally.\footnote{
  There exist quantitative verification frameworks that assign numeric values to executions using models such as weighted automata~\cite{esparza2005ppda} and quantitative games~\cite{vcerny2011quantitative,chatterjee2010expressiveness}.
  While these frameworks support aggregation operators, including $\max$ and sum, they generally do not support compositional reasoning based on compact summaries of subcomputations, instead relying on global fixed-point or game-theoretic analyses.
  In contrast, our work develops a compositional analysis that applies to recursive probabilistic systems and goes beyond semiring structures.
  The analysis computes with (representations of) distributions, propagating compact summaries (``surrogate distributions'') to soundly compute expectations and higher moments under extremal cost composition.
}

Using an exact distributional semantics gives correct answers, but fails to scale as problem size increases. A small generative specification can expand into a large stochastic state space, and exact 
distribution-based
methods may become infeasible even when the high-level problem structure is highly regular.
Scalar summaries have the opposite problem: 
they scale, but for some operators, such as $\max$ and $\min$, it is not possible to compute the correct scalar summary from scalar-valued child summaries (as illustrated above).

This paper adopts a middle ground between these two extremes. The intuition behind the approach is that moments of a program’s cost distribution can be obtained compositionally only if the analysis retains some information about the \emph{shape} of the underlying distribution.
Accordingly, we summarize each subproblem by a \emph{prefix--tail}
\emph{surrogate distribution} consisting of (i) an exact prefix and (ii) a compact parametric tail capturing the remaining mass.\footnote{
  Henceforth, we use ``summary'' and ``surrogate distribution'' interchangeably. 
}

We formulate the problem of computing surrogate distributions and equipping them with error guarantees.
Given a cost expression $e$, a target query $Q$—here we will use expectation—and a precision parameter $\kappa$, the goal is to compute a compact summary $s_\kappa$, a surrogate distribution $\gamma(s_\kappa)$, and an estimated expectation $\widehat{\mathbb{E}}(s_\kappa)$, with error bounds $\epsilon_\kappa$ and $\delta_\kappa$ such that
\[
  d(\llbracket e\rrbracket,\gamma(s_\kappa)) \le \epsilon_\kappa
  \qquad
  \text{and}
  \quad
  |\mathbb{E}(\llbracket e\rrbracket)-\widehat{\mathbb{E}}(s_\kappa)| \le \delta_\kappa,
\]
where $d$ is a distance on cost distributions.
Refining $\kappa$---for example, by increasing the prefix depth or replacing a subproblem with an exact summary---can improve both the distributional approximation and the resulting bounds.



The concrete semantic object is the full cost distribution $\llbracket P\rrbracket$, whereas a summary serves as an abstract representation of $\llbracket e\rrbracket$. 
For a specified parameter $\kappa$, which includes a prefix depth $H$, the summary stores the probability mass of the first $H$ cost values $1,2,\cdots,H$ explicitly, and encodes the remaining mass in a finite representation.
This approach yields a tunable representation: larger $H$ keeps more of the exact short-time distribution, while smaller $H$ yields cheaper summaries.
In our instantiation for expectation queries, we use a geometric tail, which provides closed-form probabilities and expectations, and supports efficient composition and error propagation.
\diwnew{\Cref{fig:teaser} illustrates the two-part approximation: in each subplot, the blue curve represents the ground-truth cost distribution, and the red/purple dashed curve represents the surrogate: the part on the left of the dotted vertical line is the precise prefix and the part on the right is the approximate tail.}
The approach is not tied to this particular choice: richer tails, such as mixtures of geometric components or phase-type distributions, can be incorporated when suitable abstractions and error bounds are available for higher-moment queries.

Abstracting into the summary domain is query-directed.
For each query, the abstraction preserves the queried moment
of the distribution being abstracted, along with all lower moments.
Abstraction may therefore change the shape of the surrogate distribution, but it does not introduce a moment bias;
error is introduced only at composition points.
The quality of the final answer can be assessed because it comes equipped with error bounds---one on distributional shape, one on the moment estimates.

The abstract semantics propagates surrogate distributions compositionally over the hierarchical structure (working bottom-up).
For operators for which moments compose, error bounds propagate directly. For non-linear operators such as $\max$ and $\min$, however, output moments are not determined by input moments, and the analysis instead computes a distributional error bound.
This bound is measured using a moment-weighted survival distance of the appropriate order, which controls the induced error in the corresponding moment;
for expectation, this bound corresponds to the Wasserstein-1 distance.

The analysis handles expectation and arbitrary-order raw moments uniformly: abstraction is parameterized by the target moment order, and the resulting error bounds specialize accordingly.

The framework also supports refinement. 
Increasing the prefix depth improves precision locally, while solving a subproblem exactly and reusing it as an atomic summary eliminates internal approximation errors,
replacing them with a single boundary error. 
In this way,
changing from exact small-instance analysis to exact slightly-larger-instance analysis
becomes a refinement operation within the abstraction.
For instance, \Cref{fig:teaser} shows how surrogate distributions compare to the actual distribution as prefix depth is increased, for both expectation queries ((a) $\rightarrow$ (c) $\rightarrow$ (e)) and second-moment queries ((b) $\rightarrow$ (d) $\rightarrow$ (f)).

\begin{figure}[!tb]
\centering
\includegraphics[width=1\linewidth]{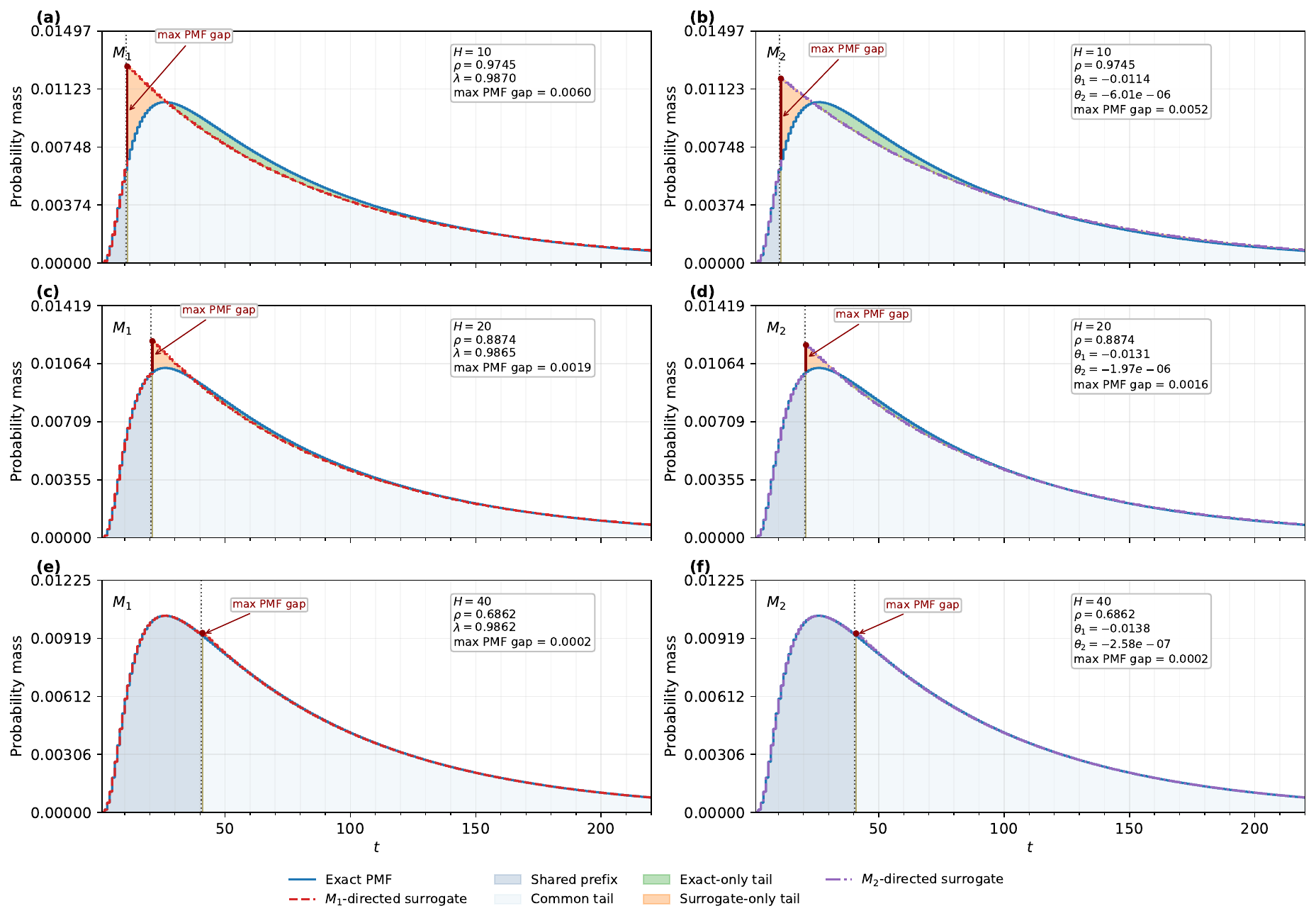}

\caption{
Prefix--tail surrogate PMFs versus the exact-cost distribution on a
four-leaf repeater subtree, for \(H\in\{10,20,40\}\).  Costs are
integer-valued, and the traces are stair-step plots with one step per integer time value.  
The left column ((a), (c), and (e)) shows expectation-directed abstractions;
the right column ((b), (d), and (f)) shows second-moment-directed
abstractions.
In each panel, \(\rho\) is the tail mass after time \(H\).
On the left, \(\lambda\) is the geometric-tail parameter; 
on the right, \(\theta_1,\theta_2\) are the maximum-entropy tail parameters (\Cref{sec:moment-tail}).
The red marker shows the largest pointwise surrogate-over-exact PMF gap.
}
\label{fig:teaser}
\vspace{-2.0ex}
\end{figure}

We implemented the framework in a tool, \tool, and applied it to three extremal-cost problems.
The first is quantum repeater waiting-time analysis~\cite{repeater-paper}, where a repeater is modeled as a tree in which each internal node combines its child processes via a maximum, followed by geometric retries. This setting highlights the non-linear-composition issue that defeats scalar summaries, and is amenable to exact analysis only at small scales.

The second instantiation is tree-splitting collision resolution in RFID systems~\cite{yan2015memoryless}, where a hierarchical subdivision scheme
induces distributions over the number of rounds or slots required for completion. Here again, scalar expected-cost summaries lose information at composition points where non-linear operations are performed.

The third instantiation is response-time analysis of fork–join parallel real-time tasks~\cite{saifullah2013parallel,lakshmanan2010forkjoin}.
Tasks are represented as trees combining sequential composition, fork–join barriers (modeled by $\max$), and probabilistic branching.
The quantities of interest include response-time distributions and deadline-miss probabilities, both of which depend critically on the distributional structure
that would be lost when scalar summaries are used.

Our work makes the following contributions:
\begin{itemize}
  \item
    We formulate the problem of \emph{cost-distribution summarization} for 
    (recursive) stochastic cost programs: compute (i) a compact surrogate distribution while (ii) tracking two kinds of error bounds---one on distributional shape, one on the moment estimates---yielding sound bounds on the first, second, and higher moments.


  \item
    We introduce a \textit{prefix--tail abstraction}
   \emph{mechanism}
    for discrete finite-mean cost distributions.
    The abstraction stores an exact prefix and a compact representation of the remaining mass, where the abstraction preserves the queried moment of the distribution being abstracted (as well as its lower moments).

  \item
    We give an abstract semantics that propagates bounds on both the distributional error and moment error.
    The semantics supports local precision refinement and exact-subproblem promotion as sound ways to trade analysis cost for tighter bounds.

  \item
    We implemented a tool, called \tool, to support this framework, and applied it to quantum repeater waiting-time analysis, RFID collision resolution, and completion-time analysis of fork-join parallel computations.
    The quantum-repeater results show that
    (i) on baseline instances for which it is feasible to compute exact solutions, \tool has far lower mean relative error than prior approximation methods (\Cref{tab:repeater-baselines-small}), and
    (ii) for problem sizes for which exact computation is infeasible, \tool obtains solutions (with error bounds) that, relative to a Monte Carlo simulation, are much closer than prior approximation methods (\Cref{tab:large-n-mc64}).
    The other two applications demonstrate that the ideas transfer to entirely different families of stochastic protocols.
\end{itemize}

\paragraph{Organization.}
\Cref{sec:overview} outlines our framework.
\Cref{sec:background}
details the formal setting and the problem statement.
\Cref{sec:method} and \Cref{sec:moment-method} present the core methodology:
\Cref{sec:method} presents the prefix--tail analysis for 
expectation
queries, which \Cref{sec:moment-method} extends to higher-order moments.
\Cref{sec:refinement} describes how analysis results can be refined to obtain more precise results.
\Cref{sec:experiment} presents empirical results
obtained with \tool.
\Cref{sec:extensions} discusses potential extensions.
\cref{sec:related-work} discusses related work.
\Cref{sec:conclusion} presents conclusions.
          
\section{Overview}
\label{sec:overview}

\begin{figure}[!tb]
\centering
\begin{tikzpicture}[
  node distance=5mm and 5mm,
  io/.style   ={draw, rounded corners=1pt, align=center,
                minimum height=10.5mm, minimum width=19mm, font=\footnotesize,
                inner sep=2pt, fill=gray!4},
  stage/.style={draw, ellipse, align=center, font=\footnotesize,
                minimum height=12mm, minimum width=26mm,
                inner sep=1.5pt, fill=blue!5},
  qnode/.style={draw, rounded corners=1pt, align=center, font=\footnotesize,
                minimum height=8.5mm, minimum width=14mm,
                inner sep=2pt, fill=gray!4},
  flow/.style ={-{Stealth[length=2.2mm,width=2.2mm]}, line width=.8pt},
  pbrace/.style={decorate,
                 decoration={brace, mirror, amplitude=8pt, raise=1.5pt},
                 line width=.8pt},
]

\node[io] (in)
  {generator parameters\\ \scriptsize size / budget / \dots\\[1pt]
   {\scriptsize\itshape\Cref{Overview:ConcreteInput},\,\Cref{sec:background-recursive-cost-programs},\,\Cref{sec:front-ends}}};

\node[stage] (gen) [right=of in]
  {(probabilistic)\\ tree generator\\[1pt]
   {\scriptsize\itshape Fig.\,\ref{fig:overview-numeric-repeater},\,\Cref{sec:background-recursive-cost-programs},\,\Cref{sec:front-ends}}};

\node[io] (dist) [right=of gen]
  {cost-expression tree \\$e$\\[1pt]
   {\scriptsize\itshape\Cref{Overview:ConcreteInput},\,\Cref{sec:background-cost-composition},\,\Cref{sec:operators-contracts}}};

\node[stage] (interp) [right=10mm of dist]
  {tree interpreter\\ \scriptsize prefix--tail analysis\\[1pt]
   {\scriptsize\itshape\Cref{sec:overview-compute-summary},\,\Cref{sec:method},\,\Cref{sec:moment-method},\,\Cref{sec:refinement}}};

\node[qnode] (out) [right=of interp]
  {$\widehat{Q}\pm\text{err}$\\[1pt]
   {\scriptsize\itshape\Cref{sec:overview-compute-summary},\,\Cref{sec:Overview:Bounds},\,\Cref{sec:error-bounded-semantics}}};

\node[qnode, minimum width=27mm] (sem)
  at ($(interp.north)+(0,11mm)$)
  {semantic parameters\\[-1pt] $\omega,\theta$\\[1pt]
   {\scriptsize\itshape\Cref{sec:operators-contracts},\,\Cref{sec:front-ends}}};

\node[qnode] (q) [left=6mm of sem]
  {query $Q$\\[1pt]
   {\scriptsize\itshape\Cref{sec:background-error-bounded-summarization},\,\Cref{sec:moment-domain}}};

\draw[flow] (in)     -- (gen);
\draw[flow] (gen)    -- (dist);
\draw[flow] (dist)   -- (interp);
\draw[flow] (interp) -- (out);

\draw[flow] (q.south)   -- ($(interp.north)+(-5mm,0)$);
\draw[flow] (sem.south) -- (interp.north);

\draw[pbrace]
  ($(gen.north east)+(1.5mm,4mm)$) --   
  ($(in.north west)+(-1.5mm,4mm)$)     
  node[midway, yshift=16pt, font=\footnotesize]
  {optional programmatic interface};

\end{tikzpicture}
\caption{
  End-to-end analysis pipeline.
\twrchanged{
  A user-supplied cost-expression tree, together with the semantics of the cost-expression operators and a query $Q$, is passed to the tree interpreter.
  The interpreter returns the query result $\widehat{Q}$ with an error bound.
  Optionally, the user can use a tree generator to create the cost-expression tree programmatically.
}
}
\vspace{-2.0ex}
\label{fig:framework}
\end{figure}

This section gives an end-to-end tour of the analysis
method, applying it to find the expectation of the waiting time for a specific simple quantum repeater, along with error bounds.
The purpose is to make the workflow visible before the formal development:
what the user gives as input, what
kind of intermediate representation is constructed, what is stored at each node of the intermediate representation, how surrogate distributions
are computed for leaf nodes and internal nodes (including the root node), and what the error bounds mean.
The analysis pipeline is shown in \Cref{fig:framework}.

\subsection{Task and Baseline Approaches}

A user
specifies a stochastic-cost scheme
and asks for an estimate of its expected cost (along with error bounds).
In a quantum-repeater instance, the input is a repeater scheme together with physical success probabilities.
The output is an interval
\(
  \widehat K \pm \delta,
\) 
where \(\widehat K\) is the analyzer's estimate of the expected waiting time and \(\delta\) is an error bound on $\widehat K$:
\(
  |K-\widehat K|\le \delta.
\)
Here \(K\) denotes the exact expected waiting time of the input scheme.

The physics literature~\cite{repeater-paper} provides two useful reference points.
First, for small systems, one can build an exact Markov-chain or transition-matrix model and solve for the waiting-time distribution.
This approach finds exact answers, but the cost is exponential in the number of elementary links, and thus is too expensive beyond some relatively small threshold.
Second, for larger systems, physicists have derived closed forms and approximations for specific regimes or specific tree families: for example, deterministic swapping~\cite{repeater-paper, bernardes2011rate}, the standard doubling approximation~\cite{sangouard2011quantum}, and recursive effective-probability approximations~\cite{repeater-paper} that collapse a sub-repeater into an effective link.
These approximations are valuable, but they are not applicable, in general, to an arbitrary tree (or tree generator).

Our analysis
follows a different approach:
it propagates compact \emph{distribution summaries} up the tree, where each summary has both (a) some exact information about the actual distribution, and (b) some approximate information, including bounds on the deviation of the summary from the exact distribution.
The desired expectation can be estimated from the summary at the root.

\subsection{The Concrete Input}
\label{Overview:ConcreteInput}

\Cref{fig:overview-numeric-repeater} shows
an example that will be used throughout this section.
It is a four-segment repeater with four elementary links \(u_1,u_2,u_3,u_4\);
two intermediate swapping nodes \(v_L,v_R\);
and root \(r\).
To keep the example simple, we use the homogeneous probabilities
\[
  \begin{array}{ll}
    \textrm{Link entanglement generation:} & p_1 = p_2 = p_3 = p_4 = \frac12 \\
    \text{Swapping success:}               & a_L = a_R = a_r = \frac12
  \end{array}
\]
and a prefix depth $H=4$. These are the semantic parameters in Figure ~\ref{fig:framework}.
All numerical values below are rounded to four decimal places.

\begin{figure}[htbp]
\centering
\begin{tikzpicture}[
  x=0.68cm,
  y=0.52cm,
  vtx/.style={circle,draw,minimum size=7.5mm,inner sep=0pt,font=\small},
  every label/.style={font=\scriptsize,inner sep=1pt}
]
  \node[vtx,label=above:{$a_r=1/2$}] (root) at (0,3.25) {$r$};
  \node[vtx,label=left:{$a_L=1/2$}]  (vl)   at (-2.25,1.65) {$v_L$};
  \node[vtx,label=right:{$a_R=1/2$}] (vr)   at ( 2.25,1.65) {$v_R$};

  \node[vtx,label=below:{$p_1=1/2$}] (u1) at (-3.60,0) {$u_1$};
  \node[vtx,label=below:{$p_2=1/2$}] (u2) at (-1.25,0) {$u_2$};
  \node[vtx,label=below:{$p_3=1/2$}] (u3) at ( 1.25,0) {$u_3$};
  \node[vtx,label=below:{$p_4=1/2$}] (u4) at ( 3.60,0) {$u_4$};

  \draw (root) -- (vl);
  \draw (root) -- (vr);
  \draw (vl) -- (u1);
  \draw (vl) -- (u2);
  \draw (vr) -- (u3);
  \draw (vr) -- (u4);
\end{tikzpicture}
\caption{A concrete four-segment repeater used throughout this section.}
\label{fig:overview-numeric-repeater}
\end{figure}

\twrchanged{
\cychanged{Following Figure ~\ref{fig:framework}'s process, }the user's task is to specify a cost expression $e$ that describes the semantics of the tree shown in \Cref{fig:overview-numeric-repeater}.
This requirement can be fulfilled by providing the following system of equations (defining a linearized cost-expression tree):
}
\[
\begin{aligned}
  u_i &= \mathsf{Atom}(\mathrm{Geom}_{\ge 1}(1/2))
            \qquad & v_L &= \mathsf{Retry}_{1/2}(\max(u_1,u_2)) \\
  v_R &= \mathsf{Retry}_{1/2}(\max(u_3,u_4))
            \qquad & r   &= \mathsf{Retry}_{1/2}(\max(v_L,v_R)).
\end{aligned}
\]
\twrchanged{
where $\mathsf{Atom}(\mathrm{Geom}_{\ge 1}(1/2))$
}
denotes a geometric distribution starting from 1 with probability 1/2.
$\mathsf{Retry_{1/2}(X)}$ means that, after $X$ finishes, there is another geometric distribution with probability 1/2 to retry $X$ to obtain the final node distribution.
(See \Cref{def:cost-operators}.)
Alternatively---e.g., for a larger repeater tree---the task can be accomplished programmatically by creating a tree generator that produces the cost-expression tree.

Such cost-expression equations are the intermediate form that the analyzer's reasoning engine processes:
it sees a cost expression built from atomic distributions, $\max$, and $\mathsf{Retry}$.

\subsection{Summaries}
\label{Overview:Summaries}

We now introduce the 
surrogate distributions used as distributional summaries
in our analysis. 

\subsubsection{Distributional Summaries}
A scalar-valued expected-time summary would not provide adequate information. At node $v_L$ of \Cref{fig:overview-numeric-repeater}, for example, $v_L$ waits for both leaves, so the cost at $v_L$
is \(\max(u_1,u_2)\).
The expected value of a maximum is not determined by the two child expected values alone.
Therefore, the parent needs information about the shape of the child distributions.

A full distribution would contain enough information, but it is too large to propagate through large  hierarchical structures.
The analyzer therefore stores a prefix--tail summary at each node. With \(H=4\), a summary stores a 6-tuple
\[
  s=(f_1,f_2,f_3,f_4,\rho,\lambda).
\]
The entries \(f_1,\ldots,f_4\) are the first four probability masses.  The value \(\rho\) is the remaining mass after time \(4\), and \(\lambda\) is the parameter of a geometric tail that has mass $\rho$.

The prefix is exact in a finite-horizon sense.  For the operators used in this example, the first four masses at a parent can be computed from the first four masses of its children.  The tail is approximate, but the abstraction chooses \(\lambda\) so that the expected value of the distribution being abstracted is preserved.

\subsubsection{Leaf summaries}

Each leaf has a geometric waiting-time distribution with a success probability \(1/2\). Therefore, the first four masses are immediate:
$
  (f_1,f_2,f_3,f_4)=(0.5000,0.2500,0.1250,0.0625).
$
The tail mass is $\rho = 0.0625$, and the expectation is exactly $\mathbb E[u_i] = 2$.
Because the leaf distribution is geometric, the prefix--tail surrogate coincides with the exact leaf distribution: the tail parameter is \(\lambda=0.5000\), and both
bounds
are zero: $\epsilon_{u_i} = 0$ and $\delta_{u_i} = 0$.

\subsection{Computing Summaries}
\label{sec:overview-compute-summary}

Readers may wonder how \cychanged{the tree interpreter in Figure ~\ref{fig:framework}} computes these summaries, especially the exact probability of the prefix part. Here we
explain how these summaries are computed.
\Cref{fig:overview-node-surrogates} gives the distribution-level view of the same bottom-up computation.  Panel~(a) shows a leaf: the geometric leaf already belongs to the prefix--tail family, so the solid exact curve and the dashed surrogate coincide, and both error bounds are zero.
Panel~(b) shows the two-leaf subtrees \(v_L\) and \(v_R\): the analyzer keeps the first four masses explicitly, compresses the rest into a geometric tail, and records the tail-shape discrepancy in \(\epsilon\), while \(\delta\) remains zero because the abstraction preserves expectation.  Panel~(c) shows the root: the root is computed from the two child surrogates, so child distributional error can affect the expected maximum and produces the nonzero
error bound
\(\delta_r\).
Thus, the figure mirrors the workflow of the analyzer: exact atoms at leaves, local max/retry computations at internal nodes, abstraction back to a compact summary, and 
error-bound
propagation to the root.

\begin{figure}[!tb]
\centering
\includegraphics[width=\linewidth]{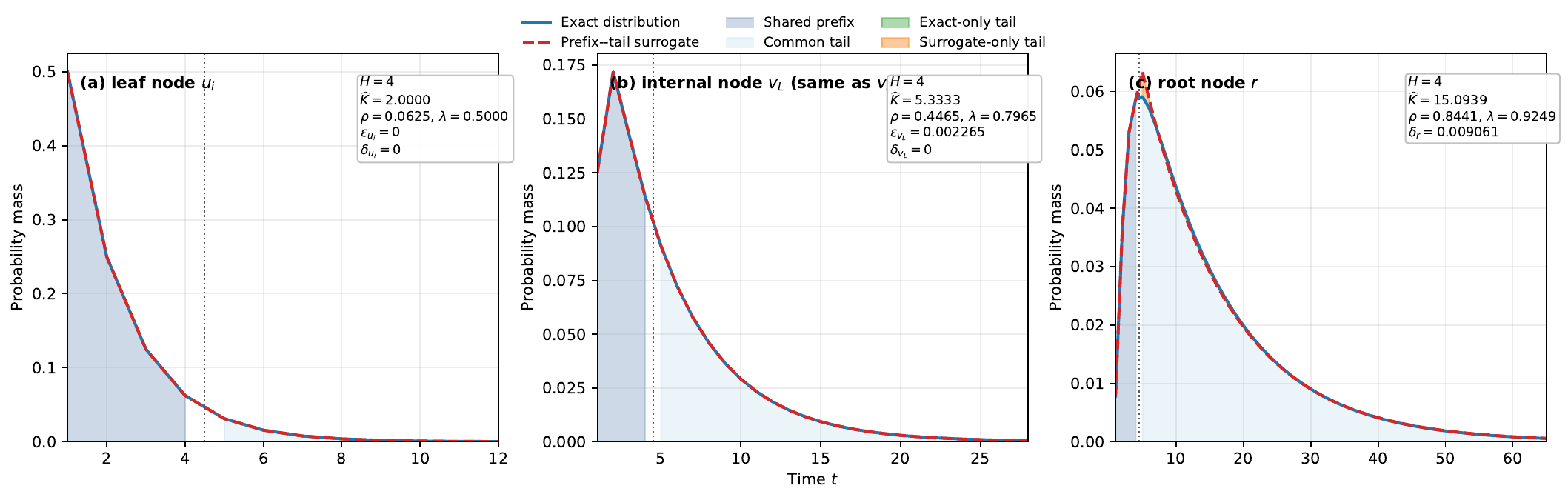}
\caption{Exact distributions and prefix--tail surrogates for the running example.  Solid curves are exact distributions, shown here because the example is small enough to solve exactly; dashed curves are the analyzer's prefix--tail surrogates.  The shaded prefix is stored explicitly, while the tail is compressed into a geometric component.  The annotation boxes show the extracted estimate and the 
error bounds computed for
the leaf, internal, and root nodes.}
\label{fig:overview-node-surrogates}
\end{figure}

\subsubsection{Computing the summary of an internal node}

Consider internal node $v_L$, whose cost equation is $v_L = \mathsf{Retry}_{1/2}(\max(u_1,u_2))$.
(The same computation applies to \(v_R\), by symmetry.)
The summary at $v_L$ is computed in two stages:
$B_L = \max(u_1, u_2)$ and $v_L = \mathsf{Retry}_{1/2}(B_L)$.

To interpret $\max(u_1, u_2)$, the distribution prefix can be computed from the distribution prefixes of nodes $u_1$ and $u_2$.
For example, the probability that $B_L = \max(u_1, u_2)$ finishes at time \(1\) is the probability that both children finish by time \(1\), namely \(0.5^2=0.25\).
The probability that $B_L$ finishes at time $2$ is the probability that one of the children finishes at time $2$ and the other finishes at some time $\leq 2$:
$0.25 \times 0.5 + 0.5 \times 0.25 + 0.25 \times 0.25 = 0.3125$.
Continuing to time \(4\), the analyzer obtains
\[
  \Pr(B_L=t)_{t=1}^4
  =
  (0.2500,0.3125,0.2031,0.1133).
\]
The expectation of the $B_L$
distribution is computed by
\[
  \mathbb E[B_L]
  =
  \sum_{t\ge0}\Pr(B_L>t)
  =
  1+\sum_{t\ge1}\left(2^{1-t}-2^{-2t}\right)
  =
  \frac{8}{3}
  \approx 2.6667.
\]
The analyzer abstracts \(B_L\) into the prefix--tail family.
The abstraction
keeps the four prefix values above, maintains the tail mass \(0.1211\), and selects \(\lambda=0.5053\) for the tail
distribution
so that the expectation remains \(2.6667\).  The abstraction changes the tail shape, and this change is recorded as a local distributional error.
In this example, the error is
\[
  \eta_B=0.000835.
\]

Next, the analyzer applies \(\mathsf{Retry}\) with success probability
\(a_L=1/2\). Let \(T_{v_L}\) denote the total waiting time of the sub-repeater
rooted at \(v_L\), after accounting for possible failed swap attempts.  The
prefix of this retried distribution is computed by a finite recurrence
equation.
If
\(b_t=\Pr(B_L=t)\) and \(q_t=\Pr(T_{v_L}=t)\), then
\begin{equation}
  \label{Eq:RetryRecurrence}
  q_t
  =
  \frac12 b_t+\frac12\sum_{u=1}^{t-1}b_uq_{t-u}.
\end{equation}
For example, \(q_1=\frac12\cdot 0.25=0.125\).  Using the same recurrence up to \(H=4\), the analyzer obtains
\[
  \Pr(v_L=t)_{t=1}^4
  =
  (0.1250,0.1719,0.1426,0.1140).
\]
The abstracted summary for \(v_L\) has tail mass \(0.4465\), tail parameter \(\lambda=0.7965\), and estimated expectation
\(
  \twrchanged{\widehat{\mathbb{E}}}(v_L)=5.3333.
\)
This value is exact for this two-leaf subtree's expectation. In our abstraction,  the leaf summaries are exact, the $\mathsf{Retry}$ is now expectation-compositional,
and abstraction preserves expectation.
However, bounds on the distributional error must be computed, because the two abstractions changed the
tail shapes.  For this node, the analyzer obtains $\epsilon_{v_L}=0.00227$ and $\delta_{v_L}=0$.

\begin{table}[!tb]
\centering
\small
\begin{tabular}{l|rrrrrrr|rr}
\hline
Node or stage & $f_1$ & $f_2$ & $f_3$ & $f_4$ & $\rho$ & $\lambda$ & mean & $\epsilon$ & $\delta$ \\
\hline
Leaf $u_i$
& 0.5000 & 0.2500 & 0.1250 & 0.0625
& 0.0625 & 0.5000 & 2.0000
& 0 & 0
\\
$B_L=\max(u_1,u_2)$
& 0.2500 & 0.3125 & 0.2031 & 0.1133
& 0.1211 & 0.5053 & 2.6667
& 0.000835 & 0
\\
Subtree $v_L=\mathsf{Retry}_{1/2}(B_L)$
& 0.1250 & 0.1719 & 0.1426 & 0.1140
& 0.4465 & 0.7965 & 5.3333
& 0.00227 & 0
\\
$B_r=\max(v_L,v_R)$
& 0.0156 & 0.0725 & 0.1050 & 0.1132
& 0.6937 & 0.8195 & 7.5470
& 0.1750 & 0.00453
\\
Root $r=\mathsf{Retry}_{1/2}(B_r)$
& 0.0078 & 0.0363 & 0.0531 & 0.0587
& 0.8441 & 0.9249 & 15.0939
& 0.4034 & 0.0091
\\
\hline
\end{tabular}
\caption{Bottom-up computation for the running example with $H=4$ and all
success probabilities equal to $1/2$.
The values for $v_R$ are the same as those for $v_L$ by symmetry.
The
\twrchanged{
rows for $B_L$ and $B_r$
}
report the abstracted one-round maximum distributions before $\mathsf{Retry}$.
For $B_L$, the children are exact leaves, so their distributional error bounds are zero.
For $B_r$, the children $v_L$ and $v_R$ have nonzero distributional
error bounds, so the $\max$ converts child shape error into a nonzero
error bound:
$
  \delta_{B_r}
  =
  \epsilon_{v_L}+\epsilon_{v_R}
  =
  0.00227+0.00227
  =
  0.00453.
$
The root $\mathsf{Retry}$ then scales this bound by $1/a_r=2$, giving
$
  \delta_r
  =
  \frac{\delta_{B_r}}{a_r}
  =
  \frac{0.00453}{1/2}
  \approx 0.0091.
$
}
\label{tab:overview-first-step}
\end{table}

Table~\ref{tab:overview-first-step} summarizes this first bottom-up step.
The important point is not the particular numbers, but the kind of computation being performed:
the prefix is propagated by finite local computations, while the tail is compressed and 
bounds on the distributional error are computed.

\subsubsection{Computing the root}

The root repeats the same pattern using the
surrogate distributions for the children $v_L$ and $v_R$.
First, it forms a surrogate distribution for $B_r = \max(v_L,v_R)$.
From the child prefixes in \Cref{tab:overview-first-step}, the first four masses of the $B_r$ surrogate distribution are computed locally.
Then $\mathsf{Retry}$ with success probability \(a_r=1/2\) is applied and the result is abstracted.

The root summary begins with the prefix
\(
  \Pr(r=t)_{t=1}^4
  =
  (0.0078,0.0363,0.0531,0.0587).
\)
Its expectation, extracted from the abstracted root surrogate distribution, is
\(
  \widehat K(r)=15.0939.
\)
The
distributional error bound on the root's surrogate distribution is obtained from the bounds on the child surrogate distributions:
\[
  \delta_r
  =
  \frac{\epsilon_{v_L}+\epsilon_{v_R}}{a_r}
  =
  \frac{0.00227+0.00227}{1/2}
  =
  0.0091.
\]
Thus, the analyzer reports \(15.0939\pm 0.0091.\)
For this small example, an exact calculation is still feasible and gives approximately \(15.0939068\).  The actual error of the abstract estimate is about \(6.8\times 10^{-6}\),
well inside the
computed bound.
In larger examples, the exact value is not available, but a sound interval bound can be computed via a recurrence equation like \Cref{Eq:RetryRecurrence} and propagated query error bound of Theorem~\ref{thm:generic-soundness}.
As illustrated here, the value computed from such a recurrence equation provides a sound interval that bounds the exact answer.

\subsection{Bounds}
\label{sec:Overview:Bounds}

The analyzer tracks two kinds of error bounds.
The distributional bound \(\epsilon\) bounds shape error: how far the surrogate distribution may be from the exact one.
Shape error matters because a future maximum can be sensitive to the entire distribution, not just the expectation. At \(v_L\), for instance, the local distributional bound is
\[
  \epsilon_{v_L}
  =
  \frac{\eta_B}{a_L}+\eta_T
  =
  \frac{0.000835}{1/2}+0.000596
  \approx
  0.00227.
\]
The first abstraction error is divided by \(a_L\) because it occurs before the $\mathsf{Retry}$ and can be repeated an expected \(1/a_L\) times.  The second abstraction error occurs after the $\mathsf{Retry}$ and is added directly.


The query 
bound
\(\delta\) bounds the error in the expected waiting time. It can be
small
because the abstraction is designed to preserve expectation. Abstraction changes the shape of a
distribution's
tail, so it must be counted in \(\epsilon\), but it does not create a local expectation bias at the node where it is performed. 
For this reason,
\(\delta_{v_L}=0\) even though \(\epsilon_{v_L}\ne 0\).
At the root, the nonzero query error comes from the fact that the root takes a maximum of the two approximate child distributions; child shape errors can affect the expected maximum.

\twrchanged{
This separation is the main insight behind our method:
a single distributional bound would be sound but too conservative as an expectation estimate, while a pure expectation bound would be too weak to propagate through future maxima.
}
Consequently, the analysis maintains both.

\twrchanged{
\paragraph{Why moments can be preserved locally yet still generate global error.}
The reader may notice a tension: abstraction preserves the queried moment, yet the root
carries a nonzero moment-error bound $\delta_r = 0.0091$.
The resolution is that preservation and error refer to different steps.

At each node, abstraction fits a geometric tail to an intermediate distribution $\tilde{X}$
so that the surrogate mean matches $\tilde{X}$'s mean exactly---so abstraction itself
does not increase $\delta$.
What it does change is the tail \emph{shape} relative to $\tilde{X}$,
recorded in the distributional bound~$\epsilon$.

Shape error becomes moment error only when propagated through a nonlinear operator.
$\tilde{X}$ is formed by applying the operator to the \emph{child surrogates},
not the true child distributions,
so the shape gap in the children (their $\epsilon$ values) creates a gap between
$\tilde{X}$ and the true distribution.
At the root, forming $\max(v_L, v_R)$ is sensitive to the full shapes of
$v_L$ and $v_R$ (not just their means, as illustrated in \Cref{sec:introduction}),
so $\epsilon_{v_L} = \epsilon_{v_R} = 0.00227$ translates into the nonzero~$\delta_r$.

In short: \emph{abstraction introduces shape error but no moment error;
nonlinear composition converts shape error into moment error.}
The analysis tracks both separately because they arise at different places
and propagate differently---a separation that distinguishes our approach
from prior uses of moment matching, as discussed in \Cref{sec:related-work}.
}

\subsection{\twrchanged{Precision Control and Refinement}}

Before turning to refinement, we summarize how the prefix depth $H$ behaves as a parameter, since it recurs throughout the paper.  Three questions matter in practice.  
\begin{itemize}
  \item
    \emph{What makes a good $H$?}  A larger $H$ stores more of the distribution exactly before the geometric tail takes over, so the tail-compression error shrinks as the explicit prefix captures more of the mass; a good $H$ is one large enough that the residual tail mass beyond $H$ is small. 
  \item
    \emph{Do the error bounds improve monotonically?}  Yes: increasing $H$ only weakens local abstraction losses, and the error-bound recurrence is monotone in those losses, so neither error bound can grow as $H$ increases (we make this precise in \Cref{prop:monotonicity}, and \Cref{tab:firstpass-convergence} shows the resulting contraction empirically).  
  \item
    \emph{What does it cost?}  The local computation at each node is dominated by convolution- and retry-style operations over the finite horizon, which scale quadratically in $H$;
    thus, larger $H$ trades roughly $\Theta(H^2)$ work per node for tighter bounds.  The formal domain, loss bounds, and cost model behind these statements are developed in \Cref{sec:method} and \Cref{sec:refinement}.
\end{itemize}

The result of one run is an interval, not a final immutable answer.  If the interval is too wide, the user can
refine the analysis to tighten both the distributional and moment bounds, at the cost of additional work.

One option is to increase the prefix depth.  Larger \(H\) keeps more probability mass explicitly before compressing the tail, which usually reduces local abstraction
errors.
Another option is \emph{exact-subtree promotion}:
if a subtree is small enough for an exact solver, the analyzer computes its exact distribution, abstracts it once into the summary domain, and replaces the whole subtree by an oracle leaf. This approach removes all approximation
errors
inside that subtree and keeps only one boundary abstraction
error.

The error bounds that our method computes
also supports comparing repeater designs.  Suppose that two candidate schemes produce intervals $\widehat K_1\pm\delta_1$ and $\widehat K_2\pm\delta_2$.
If the intervals are separated, the analysis
has shown
which scheme has smaller expected waiting time. If they overlap, the 
results indicates that more effort must be spent on analysis before a reliable comparison is possible.

\subsection{From the Example to the Framework}

The example illustrates the 
components of our analysis framework.
\begin{itemize}
  \item
\twrchanged{
    The input is a cost-expression tree (or, optionally, a programmatically generated cost-expression tree) that describes the semantics of a probabilistic system.
    Randomized generators are handled by compiling random choices into mixture operators. The analyzer then computes moments of the cost distribution described by the cost expression.
}
  \item
    The core analyzer works bottom-up over the cost expression.
    In the repeater example, the main operators are $\max$ and $\mathsf{Retry}$.
    Other applications may use $+$, $\min$, mixtures, or other operators. To participate in the exact-prefix regime, an operator provides a finite-prefix transformer.
    Each operator must supply two components:
    a finite-prefix transformer for the exact-prefix regime, and error-propagation rules for the error-bounds computation.
  \item 
    Each abstract value is a prefix--tail summary. The prefix gives the finite-horizon exact information needed by local operators.  The tail keeps the surrogate distribution finite and query-directed.
\end{itemize}
Moreover, the answers computed by our method come with error bounds:
the analyzer returns an analysis result together with a quantitative error bound derived from local-abstraction 
errors
and the effects of the operators.



\Cref{sec:background}, \Cref{sec:method}, and \Cref{sec:refinement} formalize these pieces as well as how to handle second and higher moments (\Cref{sec:moment-method}).
To  demonstrate the generality of our framework, we also show how it applies to 
hierarchical/divide-and-conquer protocol families from other domains, including response-time analysis of fork--join real-time systems~\cite{davis2019survey} and RFID-style tree-splitting collision resolution~\cite{yan2015memoryless} (\Cref{sec:instantiations}, \Cref{subsec:exp-transfer}).

\section{Formal Setting and Problem Statement}
\label{sec:background}

This section introduces the formal setting used throughout the paper and defines the cost-distribution summarization problem studied in the remainder of the paper.
The main technical development
initially focuses on the expectation query (\Cref{sec:method}), and is then extended to higher raw moments (\Cref{sec:moment-method}).
The central \emph{semantic objects} are non-negative integer-valued cost distributions induced by
cost expressions (and, in the extension of \Cref{sec:unbounded-extension}, by unbounded tree generators), and the analysis computes error-bounded summaries of those distributions.
The definitions here define the semantic objects and formalize
the \emph{problem} (summarization with error bounds). 
The analysis-specific machinery---the domain with its finite-moment conditions, the distance on cost distributions, and the per-operator contracts—is introduced where it is first used, in \Cref{sec:method}.

\subsection{Recursive Stochastic-Cost Programs}
\label{sec:background-recursive-cost-programs}


As described in \Cref{sec:introduction}, the object the analysis takes as input is a single \emph{cost-expression tree}: a finite tree whose leaves are base costs and whose internal nodes are cost-composition operators.  The tree structure is fixed; all randomness lives in the internal operators (for example the geometric count of a retry or the branch chosen by a mixture, or a deterministic operator such as $\max$ acting on random child costs), so a fixed tree will induce a \emph{distribution} over costs. \cychanged{Such a tree is the user's formal description of the cost behavior of the stochastic system under study.} The user may write this tree down directly, or obtain it from an optional generator that produces a single such tree from a compact program; the concrete generators are deferred to \Cref{sec:front-ends}. 
Throughout, $e$ denotes the input cost expression---the whole tree the analysis receives---while $e$, $e_1, \dots, e_m$ range over arbitrary (sub)expressions.
We now make the induced object precise.

\begin{definition}[Cost and induced cost distribution]
\label{def:induced-cost}
A \emph{cost} is a non-negative integer-valued random variable.
\cychanged{A cost expression $e$ denotes a cost $\llbracket e \rrbracket$, its \emph{induced cost distribution}, defined by structural induction over the tree: each leaf is interpreted as its base cost distribution, and each internal node is interpreted by the corresponding cost-composition operator (\Cref{sec:background-cost-composition}), with the operands of every node combined under the independence discipline of \Cref{sec:independence}.}
\end{definition}

\subsection{Cost Composition}
\label{sec:background-cost-composition}

Cost composition is not limited to addition.  The following operators formalize the composition patterns that appear across recursive stochastic systems. 

\begin{definition}[Cost-composition operators]
\label{def:cost-operators}
Let $X, Y, X_1, \dots, X_k$ be independent costs. The cost-composition operators are:
\begin{itemize}
  \item \emph{Sequential composition}: $X + Y$, the sum of two independent costs, which accumulates cost;
  \item \emph{Barrier composition}: $\max(X_1, \dots, X_k)$, which waits for all subcomputations to finish;
  \item \emph{Race composition}: $\min(X_1, \dots, X_k)$, which waits for the first subcomputation to finish;
  \item \emph{Probabilistic choice}: $\mathrm{Mix}_w(X_1, \dots, X_k) := X_I$, where $w = (w_1, \dots, w_k)$ is a probability vector and $I$ is independent of the $X_i$ with $\Pr(I = i) = w_i$;
  \item \emph{Random repetition}: $\mathrm{Repeat}_N(X) := \sum_{j=1}^{N} X^{(j)}$, where the count $N$ is an independent non-negative integer-valued random variable with finite mean, the $X^{(j)}$ are i.i.d.\ copies of $X$, and the empty sum is $0$.
\end{itemize}
\emph{Geometric retry} is the special case $\mathrm{Retry}_a(X) := \mathrm{Repeat}_N(X)$ with $N \sim \mathrm{Geom}(a)$, so that $\mathbb E[N] = 1/a$; it models a subcomputation that is restarted until an outcome succeeds independently with probability $a$. 
\cychanged{Throughout, $\mathsf{Geom}_{\ge 1}(a)$ and $\mathsf{Geom}_{\ge 0}(a)$ denote the geometric distributions with success probability $a$ supported on $\{1,2,\dots\}$ and $\{0,1,\dots\}$, respectively.}
\end{definition}

Definition~\ref{def:cost-operators} fixes the operators as mathematical operations on distributions. To participate in the analysis, an operator must additionally be equipped with a concrete distribution transformer, a finite-prefix transformer when exact prefixes are desired, and error-propagation contracts for the error bounds maintained by the analysis (\Cref{sec:method}).
This separation
yields a framework built around a small set of primitives,
while allowing new application domains to add their own cost constructors.

\subsection{\twrchanged{Independence}}
\label{sec:independence}

Throughout the paper, cost composition is interpreted under a standard independence 
assumption:
distinct subcomputations are sampled independently, and repeated executions use fresh independent samples of the repeated computation.
Independence is a modeling assumption required for the analysis to be sound:
in the three applications we study, it is a property of the problem being modeled, not a convenience of the analysis.
The quantum-repeater model we build on treats elementary-link generation and successive swap attempts as independent events, as in the Markov-chain formulation of Shchukin et al.~\cite{repeater-paper};\footnote{
  They also provides a non-independent setting as a generalization, we also discussed that as an extension of our method in ~\Cref{sec:cutoff}.
}
the RFID protocol resolves disjoint tag subpopulations, which respond independently after a split; and the fork--join model assumes parallel branches with independent execution times.  Where a physical realization violates this---for example, swap retries that share the same quantum memories, or parallel branches that contend for a shared resource---the induced correlation is outside the current unary-summary framework and would require the joint-summary extension of \Cref{sec:kernel-limitations}.
All soundness guarantees established in the paper hold under the independence assumption;
when the framework is instantiated for a new application, the user is responsible for verifying that the subcomputations satisfy it.

\subsection{Cost-Distribution Summarization with Error Bounds}
\label{sec:background-error-bounded-summarization}

The paper studies the middle ground between exact distributions and scalar summaries.  The goal is to maintain compact distribution summaries that are rich enough to compose through non-scalar cost operators, while also quantifying the error introduced by summarization.
The problem statement is as follows:

\begin{definition}[Error-bounded cost-distribution summarization]
\label{def:err-bounded-summarization}
Fix a distance $d$ on cost distributions, a target query $Q$ (in our main
development $Q(X) = \mathbb E[X]$), and a precision profile $\kappa$.  An
\emph{error-bounded summarizer} maps a cost expression $e$ to a
summary $s_\kappa$, a surrogate distribution $\gamma(s_\kappa)$, analysis result
$\widehat Q(s_\kappa)$, and error bounds $\epsilon_\kappa, \delta_\kappa \ge 0$
such that
\[
  d\bigl(\llbracket e\rrbracket, \gamma(s_\kappa)\bigr) \le \epsilon_\kappa
  \qquad \textrm{and} \qquad
  \bigl|\,Q(\llbracket e\rrbracket) - \widehat Q(s_\kappa)\,\bigr| \le \delta_\kappa.
\]
The first inequality bounds the surrogate distribution with respect to $d$; the second bounds the reported query value.  Varying $\kappa$---for example by increasing a local prefix depth or by replacing a subproblem with an exact summary---changes both the surrogate and the error-bound interval.  The summary domain (the family of summaries $s_\kappa$ and the concretization $\gamma$), the distance $d$, and the projection that produces $s_\kappa$ are instantiated in \Cref{sec:method}; the main instantiation takes $d$ to be a Wasserstein-1 survival distance, which controls the expectation error.
\end{definition}

The distance $d$ and the values $\epsilon_\kappa$ and $\delta_\kappa$ from \Cref{def:err-bounded-summarization} characterize the error guarantees of the analysis.
These guarantees are not sampling confidence intervals or empirical coverage claims.
Rather, they are static upper bounds derived
compositionally from local operator contracts and local projection-loss bounds.

If a search procedure is used to refine an analysis profile, it can be heuristic, but every profile returned by the analyzer comes with error bounds justified by the same compositional semantics.

The framework is built around a small set of primitives (the \emph{kernel}) and a collection of application-specific components (\emph{plugins}).
The kernel consists of
the core cost-expression interpretation, the prefix--tail abstraction schema, the error-bounded abstract semantics, exact-prefix propagation, and the refinement mechanism (\Cref{sec:refinement}).
The plugins instantiate
the cost operators (Definition~\ref{def:cost-operators}), the front end that produces cost expressions, the tail family used by summaries, and the query/distance pair.

The main body of the paper instantiates this framework for the expectation query with a geometric prefix--tail summary.
Higher-moment queries use the same architecture, but require moment-indexed error bounds (\Cref{sec:moment-method}).           

\section{Error-Bounded Prefix--Tail Analysis for Expectations}
\label{sec:method}

This section gives the expectation instance of the analysis kernel.  The input is a finite cost-expression tree over the stochastic-cost operators of \Cref{sec:background}; the query is \(Q(X)=\mathbb E[X]\).  The analyzer stores a compact distributional surrogate at each node, propagates it bottom-up through operator contracts, and reports two bounds: a distributional bound \(\epsilon\) and a query-specific expectation bound \(\delta\).  The section is self-contained: it specifies the operator interface, the prefix--tail domain, the bottom-up error semantics, and the finite front ends used by the implementation.

\subsection{Cost Operators}
\label{sec:operators-contracts}

For the expectation analysis, the concrete domain is \(\mathcal D_1=\{X\mid X\text{ is }\mathbb N\text{-valued and }\mathbb E[X]<\infty\}\).  Distributional error is measured by the survival form of Wasserstein-1, \(d_W(X,Y)=\sum_{t\ge0}|S_X(t)-S_Y(t)|\), where \(S_X(t)=\Pr(X>t)\).  Since \(\mathbb E[X]=\sum_{t\ge0}S_X(t)\) for non-negative integer-valued costs, \(|\mathbb E[X]-\mathbb E[Y]|\le d_W(X,Y)\).  This distance is deliberately stronger than the query: it keeps enough shape information to handle operators such as \(\max\) and \(\min\), whose expectations cannot be computed from child expectations alone.

\subsection{Prefix--Tail Abstract Domain}
\label{sec:prefix-tail-domain}

For a uniform prefix depth \(H\), a summary is
\[
  s=(f_0,\ldots,f_H,\rho,\lambda), \qquad \sum_{t=0}^{H}f_t+\rho=1,
\]
where the prefix stores masses at \(0,\ldots,H\), \(\rho\) is the remaining mass, and \(\lambda\in[0,1)\) parametrizes a geometric residual tail.  Its concretization is
\begin{equation}
\label{Eqn:Concretization}
  \Pr(\gamma_H(s)=t)=
  \begin{cases}
    f_t, & 0\le t\le H,\\
    \rho(1-\lambda)\lambda^{t-H-1}, & t\ge H+1.
  \end{cases}
\end{equation}
Abstraction \(\Pi_H(X)\) copies the exact prefix and tail mass of \(X\).  If \(\rho>0\), it sets \(\lambda=m_H(X)/(1+m_H(X))\), where \(m_H(X)=\mathbb E[X-(H+1)\mid X>H]\), so the geometric residual tail has the same conditional mean as the true residual tail.  The extracted expectation \(\widehat Q_H(s)=\sum_{t=0}^{H}t f_t+\rho(H+1+\lambda/(1-\lambda))\) therefore satisfies \(\widehat Q_H(\Pi_H(X))=\mathbb E[X]\).  Thus abstraction can change distributional shape, but it introduces no local mean bias in the expectation instance.  Its local distributional loss is \(\eta_H(X)=d_W(X,\gamma_H(\Pi_H(X)))\), and the implementation may use any computable upper bound \(\bar\eta_H(X)\ge\eta_H(X)\).  One executable bound truncates the survival comparison at an evaluation horizon \(J\ge H\) and charges both uncomputed residual survival tails; the bound is sound and becomes tighter as \(J\) increases.

\subsection{Cost-Operator Contracts}
\label{sec:cost-operator-contracts}

Each admitted operator \(\omega\) of arity \(m\) provides four components.  First, a concrete distribution transformer
\(F_\omega:\mathcal D_1^m\to\mathcal D_1\).  Second, a finite-prefix transformer satisfying prefix completeness,
\begin{equation}
\label{Eqn:PropertyP}
  \alpha_H(F_\omega(X_1,\ldots,X_m))
  =F^{\#}_{\omega,H}(\alpha_H(X_1),\ldots,\alpha_H(X_m)),
\end{equation}
where \(\alpha_H(X)=(\Pr(X=0),\ldots,\Pr(X=H))\).  Third, distributional and query-error contracts: for constants \(L_{\omega,i},A_{\omega,i},C_{\omega,i}\ge0\),
\begin{equation}
\label{Eqn:PropertyD}
  d_W(F_\omega(\vec X),F_\omega(\vec Y))
  \le \sum_{i=1}^{m}L_{\omega,i}d_W(X_i,Y_i),
\end{equation}
\begin{equation}
\label{Eqn:PropertyQ}
  |Q(F_\omega(\vec X))-Q(F_\omega(\vec Y))|
  \le
  \sum_{i=1}^{m}A_{\omega,i}d_W(X_i,Y_i)
  +\sum_{i=1}^{m}C_{\omega,i}|Q(X_i)-Q(Y_i)|.
\end{equation}
Fourth, finite computability:
\begin{equation}
\label{Eqn:PropertyF}
  F_\omega(\gamma_H(s_1),\ldots,\gamma_H(s_m))
  \text{ is computable from the finite parameters of }s_1,\ldots,s_m.
\end{equation}
\Cref{tab:operator-contracts} lists the built-in contracts used in the evaluation.  The \(C\)-coefficients identify expectation-compositional operators.  Sum, mixture, and random repetition propagate expectation error directly through \(C\); maximum and minimum use the distributional route through \(A\).

\begin{table}[t]
\centering
\small
\setlength{\tabcolsep}{4pt}
\begin{tabular}{@{}llll@{}}
\toprule
Operator & Transformer & \(L_{\omega,i}\) & \((A_{\omega,i},C_{\omega,i})\) \\
\midrule
\(\mathsf{Mix}_w\) & \(X_I,\ \Pr(I=i)=w_i\) & \(w_i\) & \((0,w_i)\) \\
\(+\) & \(\sum_i X_i\) & \(1\) & \((0,1)\) \\
\(\max\) & \(\max_i X_i\) & \(1\) & \((1,0)\) \\
\(\min\) & \(\min_i X_i\) & \(1\) & \((1,0)\) \\
\(\mathsf{Repeat}_N\) & \(\sum_{j=1}^{N}X^{(j)}\) & \(\mathbb E[N]\) & \((0,\mathbb E[N])\) \\
\bottomrule
\end{tabular}
\caption{Expectation-query contracts for the built-in operators.  For \(\mathsf{Repeat}_N\), \(N\) is independent of the repeated cost and has finite mean; geometric retry is the case \(N\sim\mathsf{Geom}_{\ge1}(a)\).}
\label{tab:operator-contracts}
\end{table}

\paragraph{Example: geometric retry.}
For the retry operator used in the repeater running example, let \(N\sim\mathsf{Geom}_{\ge1}(a)\) be the number of attempts, independent of the i.i.d. attempt costs \(X^{(j)}\).  Then \(\mathsf{Retry}_a(X)=\sum_{j=1}^{N}X^{(j)}\), \(\mathbb E[\mathsf{Retry}_a(X)]=\mathbb E[X]/a\), and \(d_W(\mathsf{Retry}_a(X),\mathsf{Retry}_a(Y))\le d_W(X,Y)/a\).  Thus low success probability is both a performance multiplier and an error-amplification multiplier.  If \(b_t=\Pr(X=t)\) and \(q_t=\Pr(\mathsf{Retry}_a(X)=t)\), then for positive-time attempt distributions the retry prefix satisfies
\[
  q_t=a b_t+(1-a)\sum_{u=1}^{t-1}b_u q_{t-u}.
\]
The recurrence computes the first \(H\) output masses from the first \(H\) input masses, witnessing prefix completeness for \(\mathsf{Retry}_a\).  More general repetition laws can be admitted when their transformer, prefix transformer, and error contracts satisfy the same four-part interface.

\subsection{Exact-Prefix Propagation}
\label{sec:exact-prefix}

\begin{lemma}[Exact-prefix propagation]
\label{lem:exact-prefix}
Fix \(H\).  Consider a finite cost expression built from operators satisfying \Cref{Eqn:PropertyP}.  If every atom summary stores the true prefix up to \(H\), then every summary produced by the bottom-up analysis stores the true prefix of the corresponding concrete distribution up to \(H\):
\[
  f_v[t]=\Pr(\llbracket v\rrbracket=t),\qquad 0\le t\le H.
\]
\end{lemma}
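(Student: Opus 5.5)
The plan is a structural induction over the finite cost-expression tree, proving for every node \(v\) that the stored prefix equals \(\alpha_H(\llbracket v\rrbracket)\). The subtlety is that the analysis does not apply \(F_\omega\) to the true child distributions. It applies it to the child \emph{surrogates} \(\gamma_H(s_i)\), or equivalently computes the parent prefix from the child summaries. So the invariant must be phrased about prefixes only: the prefix of the summary at \(v\) coincides with the prefix of \(\llbracket v\rrbracket\). I also need that the projection \(\Pi_H\) leaves prefixes unchanged.

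\textbf{Base case.} For an atom, the statement holds by the hypothesis that atom summaries store the true prefix up to \(H\). This includes oracle leaves arising from exact-subtree promotion, since \(\Pi_H\) copies the exact prefix.

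\textbf{Inductive step.} Let \(v\) have operator \(\omega\) and children \(v_1,\ldots,v_m\) with summaries \(s_1,\ldots,s_m\). By the induction hypothesis, \(\alpha_H(\gamma_H(s_i))=\alpha_H(\llbracket v_i\rrbracket)\). This uses the concretization in \Cref{Eqn:Concretization}: \(\gamma_H(s)\) places exactly mass \(f_t\) at each \(t\le H\), and the tail puts mass only on \(t\ge H+1\). The analysis forms the intermediate \(\tilde X_v=F_\omega(\gamma_H(s_1),\ldots,\gamma_H(s_m))\), which is computable by \Cref{Eqn:PropertyF}. Applying prefix completeness \Cref{Eqn:PropertyP} twice gives
\[
\alpha_H(\tilde X_v)=F^{\#}_{\omega,H}\bigl(\alpha_H(\gamma_H(s_1)),\ldots,\alpha_H(\gamma_H(s_m))\bigr)=F^{\#}_{\omega,H}\bigl(\alpha_H(\llbracket v_1\rrbracket),\ldots,\alpha_H(\llbracket v_m\rrbracket)\bigr)=\alpha_H(\llbracket v\rrbracket),
\]
where the last equality uses \(\llbracket v\rrbracket=F_\omega(\llbracket v_1\rrbracket,\ldots,\llbracket v_m\rrbracket)\). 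That identity holds by \Cref{def:induced-cost} together with the independence discipline of \Cref{sec:independence}. Finally, the stored summary is \(s_v=\Pi_H(\tilde X_v)\), and \(\Pi_H\) copies the exact prefix of its argument. So \(f_v[t]=\Pr(\tilde X_v=t)=\Pr(\llbracket v\rrbracket=t)\) for \(0\le t\le H\). The argument is identical if the implementation computes the prefix directly via \(F^{\#}_{\omega,H}\) from the stored child prefixes rather than materializing \(\tilde X_v\).

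\textbf{Main obstacle.} The only delicate point is the equation \(\llbracket v\rrbracket=F_\omega(\llbracket v_1\rrbracket,\ldots)\) as an identity of distributions. This requires that the operands be independent both in the concrete semantics and in the surrogate computation, so that \(F_\omega\) is a function of the marginals alone. Concretely, \(F_\omega\) must be well defined on \(\mathcal D_1^m\) as a map on laws. Once this is granted by the independence assumption, the rest is bookkeeping. One should also note that no tail information enters the prefix computation, precisely because \Cref{Eqn:PropertyP} says the output prefix depends only on the input prefixes. This is why approximate tails never contaminate the prefix.
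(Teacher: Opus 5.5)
Your proposal is correct and follows essentially the same route as the paper's proof. Both use structural induction. In the inductive step, prefix completeness (P) is applied once to the child surrogates $\gamma_H(s_i)$ and once to the true children $\llbracket v_i\rrbracket$, and $\Pi_H$ then copies the resulting exact prefix into the parent summary. Your additional remarks simply make explicit steps that the paper leaves implicit: that $\gamma_H(s)$ places exactly mass $f_t$ at each $t\le H$, and that $\llbracket v\rrbracket=F_\omega(\llbracket v_1\rrbracket,\ldots,\llbracket v_m\rrbracket)$ holds by the induced-cost definition under independence.
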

\begin{proof}
By structural induction on the expression tree.  The atom case is immediate from the atom rule.  For an internal expression \(e=\omega(e_1,\ldots,e_m)\), the induction hypothesis gives that every child summary has the same prefix as the corresponding concrete child distribution.  Prefix completeness of \(\omega\), property~\labelcref{Eqn:PropertyP}, then implies that the prefix of \(F_\omega\) applied to the child surrogates equals the prefix of \(F_\omega\) applied to the concrete child distributions.  The abstraction \(\Pi_H\) used in the operator rule copies that prefix into the parent summary, so the parent prefix is exact.
\end{proof}
This property is separate from error-bound soundness: an operator without prefix completeness can still be analyzed soundly, but its stored prefix is then a prefix of the surrogate rather than a concrete prefix.

\subsection{Error-Bounded Abstract Semantics}
\label{sec:error-bounded-semantics}

The judgment
\[
  \vdash_H e\Downarrow(s,\epsilon,\delta)
\]
is generated by the bottom-up rules and is intended to bound
\[
  d_W(\llbracket e\rrbracket,\gamma_H(s))\le\epsilon,
  \qquad
  |\mathbb E[\llbracket e\rrbracket]-\widehat Q_H(s)|\le\delta.
\]
For an exact atom or promoted oracle subtree with exact distribution \(X\), the analyzer records
\[
  s=\Pi_H(X),\qquad \epsilon=\bar\eta_H(X),\qquad \delta=0.
\]
For an operator node \(e=\omega(e_1,\ldots,e_m)\), with child judgments \(\vdash_H e_i\Downarrow(s_i,\epsilon_i,\delta_i)\), the analyzer computes
\[
  \widetilde X=F_\omega(\gamma_H(s_1),\ldots,\gamma_H(s_m)),
  \qquad s=\Pi_H(\widetilde X),
\]
records a local abstraction bound \(\eta\ge d_W(\widetilde X,\gamma_H(s))\), and combines bounds by
\[
  \epsilon=\sum_{i=1}^{m}L_{\omega,i}\epsilon_i+\eta,
  \qquad
  \delta=\sum_{i=1}^{m}A_{\omega,i}\epsilon_i+
          \sum_{i=1}^{m}C_{\omega,i}\delta_i+\beta.
\]
For the expectation-preserving geometric abstraction, \(\beta=0\); other query instances may use a nonzero local query-bias bound.

\begin{theorem}[Generic soundness for the expectation instance]
\label{thm:generic-soundness}
Fix $H\ge0$. Let $e$ be a finite closed cost expression whose
leaves denote distributions in $\mathcal D_1$. Suppose that every
$m$-ary operator $\omega$ occurring in $e$ has a transformer
$F_\omega:\mathcal D_1^m\to\mathcal D_1$ satisfying
Equations~(4) and~(5), for all relevant inputs, with nonnegative
coefficients.

Assume moreover that all bounds used by the analyzer are sound.
Specifically, for every atom or promoted-oracle distribution $X$,
\[
  \bar\eta_H(X)
  \ge d_W\bigl(X,\gamma_H(\Pi_H(X))\bigr),
  \qquad
  \widehat Q_H(\Pi_H(X))=\mathbb E[X],
\]
and, at every operator node, if
\[
  \widetilde X
  =F_\omega\bigl(\gamma_H(s_1),\ldots,\gamma_H(s_m)\bigr),
  \qquad
  s=\Pi_H(\widetilde X),
\]
then the recorded local bounds satisfy
\[
  \eta\ge d_W\bigl(\widetilde X,\gamma_H(s)\bigr),
  \qquad
  \beta\ge
  \bigl|\mathbb E[\widetilde X]-\widehat Q_H(s)\bigr|.
\]
If the bottom-up rules derive
$\vdash_H e\Downarrow(s,\epsilon,\delta)$, then
\[
  d_W\bigl(\llbracket e\rrbracket,\gamma_H(s)\bigr)
  \le\epsilon
  \qquad\text{and}\qquad
  \bigl|\mathbb E[\llbracket e\rrbracket]
        -\widehat Q_H(s)\bigr|
  \le\delta .
\]
In particular, the conclusion remains valid when exact local
losses are replaced by sound computable upper bounds.
\end{theorem}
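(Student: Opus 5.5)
The proof is by structural induction on the derivation of \(\vdash_H e\Downarrow(s,\epsilon,\delta)\), which mirrors the structure of the finite expression tree. The invariant is exactly the conclusion: for every subexpression \(e'\) with derived judgment \((s',\epsilon',\delta')\), we have \(d_W(\llbracket e'\rrbracket,\gamma_H(s'))\le\epsilon'\) and \(|\mathbb E[\llbracket e'\rrbracket]-\widehat Q_H(s')|\le\delta'\). We also need that every surrogate \(\gamma_H(s')\) lies in \(\mathcal D_1\), so that the contracts (Equations~(4) and~(5), i.e.\ \Cref{Eqn:PropertyD} and \Cref{Eqn:PropertyQ}) may be applied with the surrogates as the second argument vector. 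This holds because a geometric tail with \(\lambda<1\) has finite mean.

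\emph{Base case.} Take an atom or promoted oracle with exact distribution \(X=\llbracket e'\rrbracket\). The rule records \(s=\Pi_H(X)\), \(\epsilon=\bar\eta_H(X)\) and \(\delta=0\). The distributional bound is the assumed soundness \(\bar\eta_H(X)\ge d_W(X,\gamma_H(\Pi_H(X)))\). The query bound is the assumed identity \(\widehat Q_H(\Pi_H(X))=\mathbb E[X]\).

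\emph{Inductive step.} Let \(e'=\omega(e_1,\ldots,e_m)\). Write \(X_i=\llbracket e_i\rrbracket\) and \(Y_i=\gamma_H(s_i)\). By independence (\Cref{def:induced-cost}), \(\llbracket e'\rrbracket=F_\omega(\vec X)\), and the analyzer forms \(\widetilde X=F_\omega(\vec Y)\).

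For the distributional bound, insert \(\widetilde X\) and use the triangle inequality:
\[
d_W(F_\omega(\vec X),\gamma_H(s))\le d_W(F_\omega(\vec X),F_\omega(\vec Y))+d_W(\widetilde X,\gamma_H(s)).
\]
The first term is at most \(\sum_i L_{\omega,i}\,d_W(X_i,Y_i)\le\sum_i L_{\omega,i}\epsilon_i\), by \Cref{Eqn:PropertyD}, the induction hypothesis, and \(L_{\omega,i}\ge0\). The second term is at most \(\eta\). Their sum is the recorded \(\epsilon\).

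For the query bound, split again through \(\widetilde X\):
\[
|\mathbb E[F_\omega(\vec X)]-\widehat Q_H(s)|\le|\mathbb E[F_\omega(\vec X)]-\mathbb E[\widetilde X]|+|\mathbb E[\widetilde X]-\widehat Q_H(s)|.
\]
The second term is at most \(\beta\) by hypothesis. The first term is at most \(\sum_i A_{\omega,i}\,d_W(X_i,Y_i)+\sum_i C_{\omega,i}\,|\mathbb E[X_i]-\mathbb E[Y_i]|\), by \Cref{Eqn:PropertyQ}.

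The one point needing care is the second sum. The induction hypothesis bounds \(|\mathbb E[X_i]-\widehat Q_H(s_i)|\), not \(|\mathbb E[X_i]-\mathbb E[Y_i]|\). We therefore need \(\widehat Q_H(s_i)=\mathbb E[\gamma_H(s_i)]\), i.e.\ that the extracted expectation is the true mean of the concretization. This follows by direct computation from \Cref{Eqn:Concretization}. The prefix contributes \(\sum_{t\le H}t f_t\). The tail is \(H+1\) plus \(\rho\) times a \(\mathsf{Geom}_{\ge0}\) variable with mean \(\lambda/(1-\lambda)\), so it contributes \(\rho(H+1+\lambda/(1-\lambda))\). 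Together these give exactly \(\widehat Q_H(s_i)\).

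With that identity and nonnegativity of \(A_{\omega,i}\) and \(C_{\omega,i}\), the first term is at most \(\sum_i A_{\omega,i}\epsilon_i+\sum_i C_{\omega,i}\delta_i\). Adding \(\beta\) reproduces the recorded \(\delta\).

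Finally, the statement about computable bounds needs no separate argument. The induction only ever uses \(\eta\), \(\bar\eta_H\) and \(\beta\) as upper bounds on the true local losses, so any sound over-approximation gives the same conclusion. Monotonicity of the combination rules in these quantities, which follows from the nonnegative coefficients, shows the resulting \(\epsilon\) and \(\delta\) remain valid.
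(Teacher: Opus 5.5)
Your proof is correct and follows the same structural induction as the paper. The base case comes from the assumed atom/oracle soundness. At operator nodes you apply a triangle inequality through $\widetilde X$, combining the distributional and query contracts with the local terms $\eta$ and $\beta$, and the clause about computable bounds follows from monotonicity. Your explicit check that $\widehat Q_H(s_i)=\mathbb E[\gamma_H(s_i)]$ fills a step the paper leaves implicit: it is what lets the child bound $\delta_i$ control the term $|\mathbb E[X_i]-\mathbb E[\gamma_H(s_i)]|$ that the query contract actually requires.
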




\paragraph{Recovering the max--retry rule.}
The quantum-repeater composition used in the overview is the composite operator \(\mathsf{Retry}_a(\max(L,R))\).  If \(\eta_B\) is the local abstraction loss after the maximum and \(\eta_T\) the loss after retry, then \(\max\) has distributional coefficients \(1,1\), and \(\mathsf{Retry}_a\) amplifies distributional error by \(1/a\).  Hence
\(
  \epsilon=\frac{\epsilon_L+\epsilon_R+\eta_B}{a}+\eta_T.
\)
For the expectation query, \(\max\) is not expectation-compositional, so its query error is routed through the child distributional bounds, \(\delta_B=\epsilon_L+\epsilon_R\).  Retry is expectation-compositional and scales this query error by \(1/a\), while abstraction preserves the expectation; therefore \(\delta=(\epsilon_L+\epsilon_R)/a\).  This derivation explains why low swap probabilities amplify upstream distributional losses and motivates the refinement heuristic in \Cref{sec:refinement}.

\subsection{Instantiations}
\label{sec:front-ends}
\label{sec:general-generator-constructions}
\label{sec:instantiations}

Instantiating the framework means constructing a finite cost-expression tree and supplying operator plugins satisfying the contracts above.  We use three finite front-end patterns.  First, an explicit cost tree is just a closed expression generated by atoms and the operators in \Cref{def:cost-operators}.  Second, a size-budgeted generator produces an expression by recursion on a decreasing size parameter; because each recursive call has smaller size, the generated tree is finite.  Third, an ordered-interval generator recursively splits an interval \([i,j]\) and therefore also terminates.  The unbounded case is handled separately in \Cref{sec:unbounded-extension}.

The evaluation uses three concrete instantiations.  A homogeneous quantum repeater is an ordered-interval generator: an interval \([i,j]\) denotes the waiting-time cost of building entanglement across elementary links \(i,\ldots,j\), with
\[
  G(i,i)=\mathsf{Atom}(\mathsf{Geom}_{\ge1}(p_i)),\qquad
  G(i,j)=\mathsf{Retry}_{a_{i,j}}(\max(G(i,k),G(k+1,j))).
\]
The split point \(k\) can be supplied by a fixed tree, a balancing heuristic, or a search procedure.  RFID collision resolution uses a size-budgeted recurrence: a no-progress split is represented by geometric repetition, a useful split mixes over \(K\sim\mathsf{Binom}(n,1/2), 1 \le K \le n-1\), and parallel resolution combines the two recursive subpopulations by \(\max\).  Fork--join response-time analysis uses a structure-indexed front end over a series--parallel task graph: leaves are right-skewed execution-time atoms, sequential composition maps to \(+\), synchronization maps to \(\max\), and probabilistic branching maps to \(\mathsf{Mix}\).  In each case the front end only constructs a cost-expression tree; the soundness guarantee comes from the shared operator contracts above.

\section{Moment-Parametric Analysis}
\label{sec:moment-method}

This section lifts the expectation analysis to fixed raw moments up to order \(k\).  The architecture is unchanged: summaries are still prefix--tail distributions, the semantics is still bottom-up over the cost-expression tree, and the analyzer still reports distributional and query bounds.  The differences are that the concrete domain requires finite \(k\)-th moment, the scalar bounds become vectors, and the distance is weighted to match the queried moment.

\subsection{Moment Domains and Weighted Survival Distances}
\label{sec:moment-domain}

For fixed $k\ge1$, write $\mathcal D_k=\{X\mid X\text{ is }\mathbb N\text{-valued and }
\mathbb E[X^k]<\infty\}$ and $M_j(X)=\mathbb E[X^j]$. For the order-$k$ analysis, every occurrence of $\operatorname{Repeat}_N(X)$ additionally requires $\mathbb{E}[N^k]<\infty$; together with $X\in\mathcal D_k$, this ensures $\operatorname{Repeat}_N(X)\in\mathcal D_k$. This condition holds automatically for geometric retry.  For \(1\le j\le k\), we use the weighted survival distance
\[
  d_j(X,Y)=\sum_{t\ge0}\bigl((t+1)^j-t^j\bigr)|S_X(t)-S_Y(t)|.
\]
By the standard discrete survival identity for raw moments, \(|M_j(X)-M_j(Y)|\le d_j(X,Y)\).  The case \(j=1\) is exactly the \(d_W\) distance used in \Cref{sec:operators-contracts}; higher orders penalize tail mismatch more strongly because the weight grows like \(j t^{j-1}\).

\subsection{Moment-Parametric Prefix--Tail Summaries}
\label{sec:moment-tail}

A moment-parametric summary has the form
\[
  s=(f_0,\ldots,f_H,\rho,\theta),
  \qquad
  \Pr(\gamma_H(s)=t)=
  \begin{cases}
    f_t, & 0\le t\le H,\\
    \rho\,\tau_\theta(t-H-1), & t\ge H+1,
  \end{cases}
\]
where \(\tau_\theta\) is a residual-tail family over offsets.  Abstraction copies the prefix and tail mass exactly.  It then either chooses \(\theta\) so that the residual tail matches \(\mathbb E[R^j]\) for \(1\le j\le k\), preserving all raw moments up to \(k\); or it uses a cheaper tail family and reports local moment-bias bounds \(\beta_j(X)\ge |M_j(X)-M_j(\gamma_H(\Pi_H(X)))|\).  The expectation instance is the \(k=1\) geometric member of this hierarchy.  A convenient moment-parametric family is the maximum-entropy residual tail \(\tau_\theta(r)=Z(\theta)^{-1}\exp(\sum_{i=1}^{k}\theta_i r^i)\), with parameters chosen to solve \(\mathbb E_{\tau_\theta}[R^j]=\mathbb E[R_X^j]\) for \(1\le j\le k\).  The \(k=1\) case is geometric: setting \(\lambda=e^{\theta_1}\) and matching the residual mean \(\mu_1\) gives \(\lambda=\mu_1/(1+\mu_1)\).  For \(k=2\), the quadratic member is fitted by damped Newton iteration; if the requested residual moments fall outside the fitted family, the analyzer keeps a simpler tail and records the corresponding local biases \(\beta_j\).

\subsection{Moment-Aware Abstract Semantics}
\label{sec:moment-semantics}

The judgment becomes
\[
  \vdash_H^{(k)}e\Downarrow(s,\boldsymbol\epsilon,\boldsymbol\delta),
  \qquad
  \boldsymbol\epsilon=(\epsilon^{(1)},\ldots,\epsilon^{(k)}),\quad
  \boldsymbol\delta=(\delta^{(1)},\ldots,\delta^{(k)}),
\]
with intended meaning
\[
  d_j(\llbracket e\rrbracket,\gamma_H(s))\le\epsilon^{(j)},
  \qquad
  |M_j(\llbracket e\rrbracket)-\widehat M_j(s)|\le\delta^{(j)}
  \quad(1\le j\le k).
\]
For an operator node, the analyzer applies the concrete transformer to child concretizations, abstracts the result, records local losses \(\eta^{(j)}\) and biases \(\beta^{(j)}\), and propagates vector bounds through moment-aware contracts:
\[
  \epsilon^{(j)}=\sum_{i=1}^{m}\sum_{i'=1}^{j}L^{(j,i')}_{\omega,i}\epsilon_i^{(i')}+\eta^{(j)},
\]
\[
  \delta^{(j)}=\sum_{i=1}^{m}A^{(j)}_{\omega,i}\epsilon_i^{(j)}
  +\sum_{i=1}^{m}\sum_{i'=1}^{j}C^{(j,i')}_{\omega,i}\delta_i^{(i')}
  +\beta^{(j)}.
\]

\begin{theorem}[Moment-parametric soundness]
\label{thm:moment-soundness}
Fix \(k\ge1\).  For any finite closed cost expression \(e\) built from operators that provide moment-aware distributional and query-error contracts up to order \(k\), if \(\vdash_H^{(k)}e\Downarrow(s,\boldsymbol\epsilon,\boldsymbol\delta)\), then for every \(1\le j\le k\),
\[
  d_j(\llbracket e\rrbracket,\gamma_H(s))\le\epsilon^{(j)}
  \qquad\text{and}\qquad
  |M_j(\llbracket e\rrbracket)-\widehat M_j(s)|\le\delta^{(j)}.
\]
\end{theorem}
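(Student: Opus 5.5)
The proof is a structural induction on the derivation of \(\vdash_H^{(k)}e\Downarrow(s,\boldsymbol\epsilon,\boldsymbol\delta)\), proving all \(2k\) inequalities together at each node. Treating all orders at once is needed because the order-\(j\) rules consume child bounds of every order \(i'\le j\). The argument mirrors \Cref{thm:generic-soundness}. The only new ingredients are the vector indexing and the weighted distances \(d_j\). I read the hypotheses ``moment-aware contracts'' and ``recorded local losses'' as the order-\(j\) analogues of \Cref{Eqn:PropertyD} and \Cref{Eqn:PropertyQ}, namely
\[
  d_j(F_\omega(\vec X),F_\omega(\vec Y))\le\sum_{i}\sum_{i'\le j}L^{(j,i')}_{\omega,i}\,d_{i'}(X_i,Y_i),
\]
\[
  |M_j(F_\omega(\vec X))-M_j(F_\omega(\vec Y))|\le\sum_i A^{(j)}_{\omega,i}d_j(X_i,Y_i)+\sum_i\sum_{i'\le j}C^{(j,i')}_{\omega,i}|M_{i'}(X_i)-M_{i'}(Y_i)|,
\]
together with the local soundness conditions \(\eta^{(j)}\ge d_j(\widetilde X,\gamma_H(s))\) and \(\beta^{(j)}\ge|M_j(\widetilde X)-\widehat M_j(s)|\). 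All of these are assumed for inputs in \(\mathcal D_k\). I would state this reading explicitly at the start of the proof.

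\emph{Base case.} For an atom or promoted oracle \(X\), we have \(s=\Pi_H(X)\), the recorded \(\epsilon^{(j)}\) is a sound upper bound on \(d_j(X,\gamma_H(\Pi_H(X)))\), and \(\delta^{(j)}\) is either \(0\) (exact moment matching) or \(\beta_j(X)\). Both claims then hold directly from the definitions.

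\emph{Inductive step.} For \(e=\omega(e_1,\ldots,e_m)\), let \(X_i=\llbracket e_i\rrbracket\), \(Y_i=\gamma_H(s_i)\), and \(\widetilde X=F_\omega(\vec Y)\). By the independence discipline of \Cref{def:induced-cost}, \(\llbracket e\rrbracket=F_\omega(\vec X)\). Each \(d_j\) is a weighted \(\ell^1\) distance between survival functions, so it satisfies the triangle inequality. This gives
\[
  d_j(\llbracket e\rrbracket,\gamma_H(s))\le d_j(F_\omega(\vec X),\widetilde X)+d_j(\widetilde X,\gamma_H(s)).
\]
The first term is bounded by the distributional contract plus the induction hypotheses \(d_{i'}(X_i,Y_i)\le\epsilon_i^{(i')}\). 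This step uses nonnegativity of the \(L\) coefficients. The second term is at most \(\eta^{(j)}\). Likewise,
\[
  |M_j(\llbracket e\rrbracket)-\widehat M_j(s)|\le|M_j(F_\omega(\vec X))-M_j(\widetilde X)|+|M_j(\widetilde X)-\widehat M_j(s)|,
\]
and the query contract, the induction hypotheses on \(\epsilon_i^{(j)}\) and \(\delta_i^{(i')}\), and \(\beta^{(j)}\) give exactly the propagation rule for \(\delta^{(j)}\).

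\emph{Main obstacle: well-definedness.} The step I expect to need the most care is checking that every object lives in \(\mathcal D_k\), so that the contracts apply and the moments are finite. Four facts are needed. First, each \(\gamma_H(s)\) has finite \(k\)-th moment; for the maximum-entropy tail this requires \(\theta_k<0\) or, in the geometric fallback, \(\lambda<1\). Second, each \(F_\omega\) preserves \(\mathcal D_k\); for \(\mathrm{Repeat}_N\) this uses \(\mathbb E[N^k]<\infty\). Third, all \(d_j\) values are finite; this follows because \(d_j(X,Y)\le M_j(X)+M_j(Y)\) when the survival identity is applied to each distribution separately. Fourth, the survival identity \(M_j(X)=\sum_t((t+1)^j-t^j)S_X(t)\) justifies the relation between \(d_j\) and \(M_j\). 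With these facts in place, the induction closes, and the statement holds for every \(1\le j\le k\).
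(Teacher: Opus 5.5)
Your proposal is correct and follows the paper's proof: a simultaneous structural induction over all orders $j\le k$ that combines the moment-aware contracts, the triangle inequality for $d_j$, the local terms $\eta^{(j)},\beta^{(j)}$, and monotonicity in sound upper bounds, while your $\mathcal D_k$ well-definedness checks make explicit what the paper leaves implicit. One refinement: for $+$ and $\mathsf{Repeat}_N$ at orders $j\ge2$, no constant coefficients satisfy your contract uniformly over $\mathcal D_k$, because the off-diagonal coefficients are operand moments; the contract should therefore be applied only at the pair $(\llbracket e_i\rrbracket,\gamma_H(s_i))$, with the state-dependent coefficients justified by the children's lower-order bounds that your all-orders-at-once induction already supplies.
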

The proof is the vectorized form of \Cref{thm:generic-soundness}: each component uses the corresponding distributional and query contract, and the recurrence is monotone in sound local upper bounds.  Prefix completeness is moment-invariant, so \Cref{lem:exact-prefix} applies simultaneously to all orders; all moment-order-specific approximation is therefore in the compressed tail.

\subsection{Operator Contracts for Higher Moments}
\label{sec:higher-operator-contracts}

The operator interface splits into two classes.  Mixture, maximum, and minimum are same-index survival operators: their \(d_j\) contracts use the same constants for every \(j\), e.g. with independent inputs, replacing \(X\) by \(Y\) inside either \(\max(\cdot,Z)\) or \(\min(\cdot,Z)\) changes \(d_j\) by at most \(d_j(X,Y)\).  Sums and random repetitions are convolutional.  They are simple for expectations, but for \(j\ge2\) their moment formulas contain lower-order moments; the familiar identities for \(M_2(X+Y)\) and \(M_2(\mathsf{Retry}_a(X))\) therefore make the coefficients \(L^{(j,i')}_{\omega,i}\) and \(C^{(j,i')}_{\omega,i}\) depend on error-bounded lower-moment intervals.  These coefficients are discharged by interval arithmetic.  At the distributional level, convolution is likewise cross-order.  If \(Z\) is independent of \(X\) and \(Y\), then for every \(j\ge1\),
\[
  d_j(X+Z,Y+Z)\le\sum_{a=1}^{j}\binom{j}{a}M_{j-a}(Z)d_a(X,Y),
\]
with the convention \(M_0=1\).  For random repetition at order two,
\[
  d_2(\mathsf{Repeat}_N(X),\mathsf{Repeat}_N(Y))\le
  \mathbb E[N]d_2(X,Y)+\mathbb E[N(N-1)](M_1(X)+M_1(Y))d_1(X,Y).
\]
Thus the diagonal coefficient is the familiar expectation coefficient, but the off-diagonal terms depend on lower moments of the operands.  The implementation evaluates those operand moments as intervals and uses outward-rounded arithmetic in the coefficient calculation; this is the main additional cost beyond the expectation instance.

\subsection{Variance and Cost of Higher Orders}
\label{sec:variance-query}
\label{sec:moment-cost}

Variance is derived from the \(k=2\) raw-moment analysis: if \(M_1\in[L_1,U_1]\) and \(M_2\in[L_2,U_2]\), then \(\mathrm{Var}(\llbracket e\rrbracket)\in[\max(0,L_2-U_1^2),\ U_2-L_1^2]\).  Increasing \(k\) increases cost in three places: summaries need richer tail parameters, local abstraction losses become more tail-sensitive through the \(d_j\) weights, and propagation is vector-valued for convolutional operators.  The refinement machinery of \Cref{sec:refinement} is unchanged, but its objective is evaluated on the selected moment or derived query interval.

\section{Refinement}
\label{sec:refinement}

A run of the analyzer returns an interval.  Refinement tightens that interval by changing the analysis profile \(\kappa=(F,H)\), where \(F\) is an antichain of closed subexpressions promoted to an exact oracle and \(H\) is the uniform prefix depth.  Promoting \(f\in F\) replaces the whole subtree by \((\Pi_H(X_f),\epsilon_f,0)\), where \(X_f=\llbracket f\rrbracket\); all internal abstraction losses disappear and only the boundary abstraction loss remains.  The antichain condition prevents nested promotions from charging the same exact work twice.

\subsection{Profiles, Amplification, and Objective}
\label{sec:refinement-profiles}
\label{sec:error-amplification}
\label{sec:refinement-cost}

The error-bound recurrence of \Cref{sec:method} can be unrolled into a root-level attribution.  For a profile \(\kappa\), let \(\mathcal L_\kappa\) be the multiset of local distributional abstraction losses \(\eta_\ell\), and let \(W_\kappa(\ell)\) be the product of distributional coefficients along the path from \(\ell\) to the root.  Then
\begin{equation}
\label{eq:amp-generic}
  \epsilon_r^\kappa=\sum_{\ell\in\mathcal L_\kappa}W_\kappa(\ell)\eta_\ell.
\end{equation}
For the repeater operator \(\mathsf{Retry}_a(\max(L,R))\), \(\max\) contributes coefficient \(1\) and retry contributes \(1/a\), so losses on paths with low swap probability are amplified most strongly.

A cost model assigns each non-oracle active node an abstract-computation cost \(c_v^{\mathrm{abs}}(H)\) and each oracle subtree an exact cost \(c_f^{\mathrm{ex}}\); the total cost is \(\operatorname{Cost}(\kappa)=\sum_{f\in F}c_f^{\mathrm{ex}}+\sum_{v\in A_\kappa\setminus F}c_v^{\mathrm{abs}}(H)\).  The refinement objective used in the implementation is to minimize the reported root bound, typically \(\delta_r^\kappa\), subject to \(\operatorname{Cost}(\kappa)\le B\).  Other objectives, such as \(\epsilon_r^\kappa\) or a weighted combination of interval width and point-estimate movement, use the same profile space.

\subsection{Greedy Refinement}
\label{sec:greedy-refinement}

The implementation performs a greedy profile search.  Starting from \((\emptyset,H)\), each iteration considers one-step exact-subtree promotions and optional prefix-depth increases, reruns the abstract semantics for each candidate, and accepts the feasible candidate with the largest objective improvement per added cost.  A promotion candidate adds one eligible active node to \(F\) and removes its descendants from the active abstract computation; a depth candidate replaces \(H\) by a larger horizon for the remaining active nodes.

The candidate score is \(\Delta O/\Delta C\), where \(\Delta O\) is the decrease in the selected objective and \(\Delta C\) is the added analysis cost.  Candidates that exceed the budget or fail to improve the objective are discarded.  The loop stops when no improving feasible candidate remains.  Every candidate is independently analyzed, so the result is sound.

\begin{proposition}[Monotonicity of the error-bound recurrence]
\label{prop:monotonicity}
Fix the operator contracts and active expression structure.  Both root error bounds \(\epsilon_r^\kappa\) and \(\delta_r^\kappa\) are nondecreasing in child error bounds and in local abstraction-error bounds at every active site.  Hence weakly decreasing any subset of those quantities cannot increase either root bound.
\end{proposition}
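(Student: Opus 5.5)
We prove \Cref{prop:monotonicity} by structural induction on the active expression tree, using the bottom-up rules of \Cref{sec:error-bounded-semantics} (and their vectorized form in \Cref{sec:moment-semantics}). Fix the profile's active structure and the operator contracts, so that the coefficients \(L_{\omega,i},A_{\omega,i},C_{\omega,i}\) (or their moment-indexed versions) are fixed nonnegative numbers at every node. Regard each node's pair \((\epsilon_v,\delta_v)\) as a function of the leaf or oracle inputs \((\bar\eta_H(X),0)\) and of the local terms \((\eta_v,\beta_v)\) at every internal site. The claim is then that this composite function is coordinatewise nondecreasing.

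\emph{Base case.} An atom or promoted oracle \(f\) has \(\epsilon_f=\bar\eta_H(X_f)\) and \(\delta_f=0\). These are trivially nondecreasing in the local abstraction bound at that site, which is the only input there.

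\emph{Inductive step.} At \(e=\omega(e_1,\ldots,e_m)\) the rule gives
\[
\epsilon=\sum_i L_{\omega,i}\epsilon_i+\eta,\qquad
\delta=\sum_i A_{\omega,i}\epsilon_i+\sum_i C_{\omega,i}\delta_i+\beta .
\]
This is an affine map with nonnegative coefficients in \((\epsilon_1,\ldots,\epsilon_m,\delta_1,\ldots,\delta_m,\eta,\beta)\), so it is nondecreasing in each argument. By the induction hypothesis, each \(\epsilon_i\) and \(\delta_i\) is nondecreasing in every local bound inside subtree \(e_i\). A composition of nondecreasing maps is nondecreasing, so \(\epsilon\) and \(\delta\) are nondecreasing in every local bound in the subtree rooted at \(e\). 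At the root this gives the statement. The ``hence'' clause follows because weakly decreasing any subset of inputs is a coordinatewise move downward, and monotone maps preserve that order. As a cross-check, unrolling the \(\epsilon\) recurrence gives exactly \Cref{eq:amp-generic}: a nonnegative combination \(\sum_\ell W_\kappa(\ell)\eta_\ell\), with each \(W_\kappa(\ell)\) a product of nonnegative \(L\)'s. The same unrolling of \(\delta\) yields a nonnegative combination of the \(\eta\)'s and \(\beta\)'s.

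The main obstacle is the higher-moment case of \Cref{sec:higher-operator-contracts}. There, the convolutional coefficients \(L^{(j,i')}\) and \(C^{(j,i')}\) are not constants: they depend on interval bounds for lower moments of the operands, such as \(M_{j-a}(Z)\) or \(M_1(X)+M_1(Y)\). Those intervals are obtained from the analysis itself and widen as the lower-order \(\delta\)'s grow. In that setting I would argue monotonicity in two steps:
\begin{enumerate}
\item Show that the coefficients are taken as upper endpoints of the moment intervals, e.g.\ \(\widehat M_1+\delta^{(1)}\). These are nondecreasing in the lower-order error bounds, and they are nonnegative because moments of \(\mathbb N\)-valued costs are nonnegative.
\item Show that a product or sum of nonnegative nondecreasing quantities is nondecreasing, so each updated bound is again nondecreasing in all earlier inputs. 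Then run the same induction, ordered by moment index \(j\) and by tree depth.
\end{enumerate}
One subtlety should be stated explicitly. The point estimates \(\widehat M_{j}(s)\) are held fixed, since the surrogate itself is fixed by the active structure and only the bound values vary. Under that convention, the coefficient dependence is only through the error bounds, and monotonicity goes through. If the proposition is read only for the fixed-coefficient expectation instance, the first inductive argument already suffices.
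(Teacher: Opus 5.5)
Your proof is correct and is essentially the paper's argument: the paper reads monotonicity of \(\epsilon_r^\kappa\) off the unrolled identity \(\epsilon_r^\kappa=\sum_\ell W_\kappa(\ell)\eta_\ell\) with nonnegative weights, and notes that the \(\delta\) recurrence is likewise a nonnegative combination of child and local bounds, which is exactly what your structural induction over nonnegative affine maps establishes. Your additional treatment of the state-dependent higher-moment coefficients goes beyond the paper's proof, which covers only the fixed-coefficient expectation recurrence, and it is sound: the coefficients are upper interval endpoints that are nonnegative and nondecreasing in the lower-order bounds, with the surrogates held fixed.
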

For \(\epsilon_r^\kappa\), monotonicity follows immediately from \Cref{eq:amp-generic}; the query recurrence is also a nonnegative combination of child and local bounds.  Exact-subtree promotion is recomputed rather than assumed monotone, because it removes internal losses but introduces one boundary loss.

\subsection{Using Error Bounds for Design Feedback}
\label{sec:design-feedback}

The same decomposition identifies which parts of an input structure are responsible for the remaining uncertainty.  In repeater trees, local losses below small swap probabilities are multiplied by large products of \(1/a\), making them natural exact-promotion targets.  The reported intervals can also compare designs: if \(\widehat Q_1+\delta_1<\widehat Q_2-\delta_2\), the analysis reports that design 1 has smaller expected cost.  If intervals overlap, the local attribution indicates where additional budget is most likely to separate them.

\section{Evaluation}
\label{sec:evaluation}
\label{sec:experiment}

We implemented our framework in a tool called \tool.  We evaluate \tool primarily on quantum-repeater waiting-time analysis~\cite{repeater-paper}.  The main research questions study how the reported analysis results change as we vary the abstraction controls, the analysis budget, the input tree, and the controlled moment order.  We then present RFID binary-query collision resolution and probabilistic response-time analysis of fork--join (series--parallel) real-time tasks~\cite{saifullah2013parallel,lakshmanan2010forkjoin,davis2019survey} as two non-quantum case studies.

The evaluation is organized around the following research questions:
\begin{description}
\setlength{\itemsep}{0pt}\setlength{\parsep}{0pt}\setlength{\topsep}{2pt}
  \item[\textsc{RQ1}] How are the analysis results affected by precision controls, scale, tree shape, and parameter regime?
  \item[\textsc{RQ2}] How much can the refinement budget and a greedy refinement loop improve the precision of analysis results?
  \item[\textsc{RQ3}] By how much can analysis-guided search improve a repeater-tree scheme?
  \item[\textsc{RQ4}] For higher moments, how does controlling one moment affect the analysis results on other moments?
\end{description}
Whenever exact evaluation is feasible, we use it as the reference.  For larger repeater chains, where exact evaluation is infeasible, we compare the analysis results against a Monte Carlo simulation ($20$ batches of $2000$ samples) and report the Monte Carlo standard error.

\subsection{Experimental Setup}
\label{subsec:exp-setup}

\paragraph{Quantum-repeater benchmarks.}
We use three families of fixed repeater trees. \emph{Doubling trees} are the balanced power-of-two schemes used as a special scheme in previous work~\cite{repeater-paper}.  \emph{Pair-and-carry trees} repeatedly join adjacent pairs from left to right and carry an unmatched subtree to the next round.  \emph{Random full trees} recursively choose random split points to test non-balanced fixed schemes.  A repeater instance is \emph{exact-feasible} when the exact oracle of \Cref{sec:instantiations} can compute its full waiting-time distribution, and hence its exact expectation and higher moments.  We use a fixed 72-instance suite for mean-bound and accuracy checks and a heterogeneous random-tree suite for robustness checks.

\paragraph{RFID collision-resolution benchmark.}
For RFID collision resolution, we use the binary query-tree model with uniformly distributed identifiers and unit query costs.  We evaluate the parallel critical-path variant: after a useful split, the two children are resolved in parallel, so the parent cost is the maximum of the two child costs rather than their sum.  As with the quantum repeater, this maximum is the non-compositional point: the expected parent cost is not determined by the child expected costs alone.  We eliminate no-progress splits as in \Cref{sec:instantiations}, use an exact set for populations of size at most $4$, and test $n\in\{8,16,32,64\}$.

\paragraph{Fork--join parallelism waiting-time benchmarks.}
We model fork--join applications as random series--parallel task trees with $16$ leaves, following standard parallel real-time task models~\cite{lakshmanan2010forkjoin,saifullah2013parallel}.  Each leaf carries a discrete, right-skewed execution-time distribution motivated by probabilistic timing analysis~\cite{davis2019survey}; internal nodes represent sequential composition ($+$), fork--join synchronization ($\max$), or probabilistic branching ($\mathsf{Mix}$).  Exact discrete-distribution propagation provides reference results.  For each task, we increase the prefix depth until the relative width of the error-bounded mean interval falls below $1\%$, and report the mean, variance, and deadline-miss probability $\Pr(t>D)$.

\subsection{\textsc{RQ1}: How are the analysis results affected by precision controls, scale, tree shape, and parameter regime?}
\label{subsec:exp-error-bounds}
\label{subsec:exp-certificates}
\label{subsec:exp-baselines}

\Cref{tab:rq1-precision-knobs} isolates the two main precision controls on representative exact-feasible $n=8$ repeater trees.  Increasing the prefix depth $H$ improves both the analysis result and the bounded interval: the mean relative error drops from $1.8857\%$ at $H=4$ to $0.0008\%$ at $H=64$, and the error-bound radius contracts to $2.3\%$ of the $H=4$ value.  The exact-oracle threshold is a second, independent precision knob: at fixed $H=12$, increasing $S_{\max}$ from $2$ to $16$ tightens the relative error-bound radius from $45.43\%$ to $19.35\%$ while preserving coverage $1.0$.

\begin{table}[H]
\centering
\scriptsize
\setlength{\tabcolsep}{2.6pt}
\caption{Two precision controls on exact-feasible $n=8$ repeater benchmarks.  Left: prefix depth at fixed $S_{\max}=4$.  Right: exact-oracle threshold at fixed $H=12$; \textsc{Prom.} is the average number of promoted oracle-frontier subtrees.}
\label{tab:rq1-precision-knobs}\label{tab:firstpass-convergence}
\begin{minipage}[t]{0.47\textwidth}
\centering
\textbf{(a) Prefix depth}\vspace{1mm}

\begin{tabular}{@{}crr@{}}
\toprule
$H$ & Mean rel. err. & Bound contr. \\
\midrule
$4$  & $1.8857\%$ & $1.000$ \\
$8$  & $1.0206\%$ & $0.675$ \\
$16$ & $0.3513\%$ & $0.369$ \\
$32$ & $0.0452\%$ & $0.141$ \\
$64$ & $0.0008\%$ & $0.023$ \\
\bottomrule
\end{tabular}
\end{minipage}\hfill
\begin{minipage}[t]{0.50\textwidth}
\centering
\textbf{(b) Exact-oracle threshold}\vspace{1mm}

\begin{tabular}{@{}crrr@{}}
\toprule
$S_{\max}$ & \textsc{Prom.} & Rel. err. & Rel. bound \\
\midrule
$2$  & $0.00$ & $1.1852\%$ & $45.43\%$ \\
$4$  & $4.00$ & $1.1852\%$ & $40.62\%$ \\
$8$  & $4.00$ & $1.1852\%$ & $40.62\%$ \\
$16$ & $6.00$ & $1.0014\%$ & $19.35\%$ \\
$32$ & $6.00$ & $1.0014\%$ & $19.35\%$ \\
\bottomrule
\end{tabular}
\end{minipage}
\end{table}

\Cref{tab:rq1-baseline-scale} compares against hand-designed homogeneous doubling approximations from the physics literature and evaluates a larger $n=64$ regime against Monte Carlo.  The Monte Carlo rows report the sample mean and one standard error.  Prefix--tail analysis is substantially more accurate than both the classical \texttt{kprime} approximation and the nested effective-probability baselines: in the exact-feasible setting it reaches $0.0165\%$ mean relative error at $H=32$, while at $n=64$ the $H=32$, frontier-$4$ configuration has relative differences from the Monte Carlo mean of $10.77\%$ at $p=0.1$ and $3.58\%$ at $p=0.37$.

\begin{table}[H]
\centering
\scriptsize
\setlength{\tabcolsep}{3pt}
\caption{Accuracy comparisons for homogeneous doubling repeaters.
Left: exact-feasible comparison against hand-designed baselines.
Right: large-$n$ comparison against Monte Carlo at $n=64$; Monte
Carlo values report the mean $\pm$ one standard error, and
\texttt{prefix-tail} abbreviates
\texttt{prefix\_tail\_H32\_frontier\_4}.}
\label{tab:rq1-baseline-scale}
\label{tab:repeater-baselines-small}
\label{tab:large-n-mc64}

\begin{minipage}[t]{0.45\textwidth}
\centering
\textbf{(a) Exact-feasible doubling}\vspace{1mm}

\begin{tabular}{@{}lr@{}}
\toprule
Method & Mean rel. err. \\
\midrule
\texttt{kprime}          & $21.6000\%$ \\
\texttt{nested\_2}       &  $6.5543\%$ \\
\texttt{nested\_4}       &  $2.9855\%$ \\
\texttt{prefix\_tail\_H16} & $0.1000\%$ \\
\texttt{prefix\_tail\_H32} & $0.0165\%$ \\
\bottomrule
\end{tabular}
\end{minipage}\hfill
\begin{minipage}[t]{0.52\textwidth}
\centering
\textbf{(b) $n=64$ Monte Carlo comparison}\vspace{1mm}

\begin{tabular}{@{}clrr@{}}
\toprule
$p$ & Method & Value & Rel. diff. \\
\midrule
$0.1$  & Monte Carlo          & $5584.87 \pm 24.07$ & -- \\
$0.1$  & \texttt{kprime}      & $7290.00$            & $30.53\%$ \\
$0.1$  & \texttt{nested\_4}   & $7100.82$            & $27.14\%$ \\
$0.1$  & \texttt{prefix-tail} & $6186.12$            & $10.77\%$ \\
\addlinespace
$0.37$ & Monte Carlo          & $1414.17 \pm 6.14$  & -- \\
$0.37$ & \texttt{kprime}      & $1970.27$            & $39.32\%$ \\
$0.37$ & \texttt{nested\_4}   & $1757.44$            & $24.27\%$ \\
$0.37$ & \texttt{prefix-tail} & $1464.79$            &  $3.58\%$ \\
\bottomrule
\end{tabular}
\end{minipage}
\end{table}

\Cref{fig:rq1-error-hetero} adds two further checks.  The log-scale $H$-sweep is close to a straight line, which is consistent with an approximately exponential decrease in distributional error as the retained prefix grows.  On heterogeneous random-tree exact-feasible instances, prefix--tail at $H=32,S4$ obtains $0.0464\%$ mean relative error with coverage $1.0$, while a scalar interval baseline has $28.4672\%$ mean relative error.  This heterogeneous suite is important because the hand-designed repeater baselines are tailored to homogeneous doubling structures; prefix--tail analysis uses the same operator contracts for irregular trees and does not require a closed-form rate approximation for each shape.

\begin{figure}[H]
\centering
\begin{minipage}[t]{0.46\textwidth}
\centering
\vspace{0pt}
\includegraphics[width=0.8\linewidth]{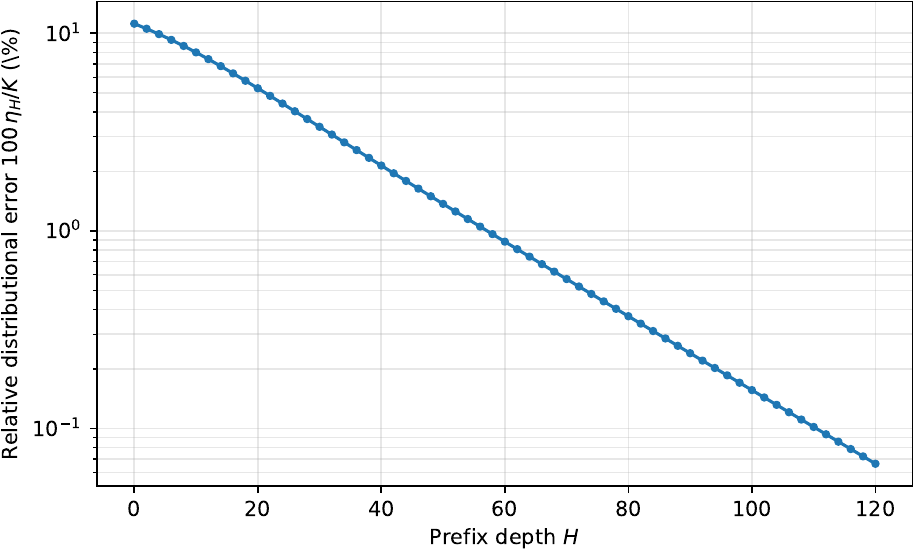}
\end{minipage}\hfill
\begin{minipage}[t]{0.48\textwidth}
\centering
\vspace{0pt}
\scriptsize
\textbf{Heterogeneous random trees}\vspace{1mm}

\begin{tabular}{@{}lrr@{}}
\toprule
Method & Mean rel. err. & Cov. \\
\midrule
\texttt{interval\_S4} & $28.4672\%$ & $1.000$ \\
\texttt{prefix\_tail\_H16} & $0.1001\%$ & $1.000$ \\
\texttt{prefix\_tail\_H32} & $0.0464\%$ & $1.000$ \\
\bottomrule
\end{tabular}
\end{minipage}
\caption{Additional RQ1 evidence.  Left: prefix depth versus relative distributional error for a representative $n=8$ benchmark instance on a log-$y$ scale; near-linearity indicates approximately exponential decay in $H$.  Right: heterogeneous random-tree repeater benchmarks with exact-reference coverage.}
\label{fig:rq1-error-hetero}
\end{figure}

\begin{tcolorbox}[title=Finding,findingstyle]
Increasing $H$ rapidly improves accuracy and tightness,
while exact-oracle refinement provides a complementary
cost--precision control; prefix--tail estimates remain accurate
across scales and tree shapes.
\end{tcolorbox}

\subsection{\textsc{RQ2}: How much can the refinement budget and the greedy refinement loop improve the precision of analysis results?}
\label{subsec:exp-trajectories}

Greedy refinement tightens intervals when the budget reaches high-amplification losses.  On the $n=10,p=0.3,a=0.5$ pair-carry tree, increasing the budget to $1.5C_0$ reduces the root query bound from $47.38$ to $5.845$, a contraction to $12.34\%$.  On an AVL-shaped tree, the bound starts lower at $18.53$ and saturates after two promotions at $17.56$.  This behavior matches the amplification analysis of \Cref{sec:error-amplification}: payoff depends on where local losses sit relative to retry amplification.  \Cref{tab:rq2-trajectories} reports the representative trajectories.

\begin{table}[H]
\centering
\scriptsize
\setlength{\tabcolsep}{4pt}
\caption{Representative greedy refinement trajectories on structurally distinct $n=10$ repeater trees.  The budget is normalized by the initial profile cost $C_0$; the objective is the root query bound $\delta_r^\kappa$.}
\label{tab:rq2-trajectories}
\begin{tabular}{@{}lrrrr@{}}
\toprule
Tree & $B/C_0$ & Steps & Final $\delta$ & Contraction \\
\midrule
Pair-carry & $1.00$ & $0$ & $47.38$ & $1.0000$ \\
Pair-carry & $1.05$ & $2$ & $47.37$ & $0.9998$ \\
Pair-carry & $1.50$ & $4$ & $5.845$ & $0.1234$ \\
AVL & $1.00$ & $0$ & $18.53$ & $1.0000$ \\
AVL & $1.05$ & $2$ & $17.56$ & $0.9478$ \\
AVL & $4.00$ & $2$ & $17.56$ & $0.9478$ \\
\bottomrule
\end{tabular}
\end{table}

\begin{tcolorbox}[title=Finding,findingstyle]
Greedy refinement is effective when additional budget reaches
highly amplified local losses, and otherwise quickly saturates.
\end{tcolorbox}

\subsection{\textsc{RQ3}: By how much can analysis-guided search improve a repeater-tree scheme?}
\label{subsec:exp-tree-design}

Prefix--tail summaries also guide tree rewrites.  For the $n=10,p=0.3,a=0.5$ running example, analysis-guided search improves the exact mean from the pair-and-carry value $153.948$ to $126.663$, beating an AVL baseline at $127.627$.  The final intervals are separated,
$126.663+0.010 < 127.627-0.007$, so the improvement is witnessed by the error bounds along with the analysis results.  In contrast, when intervals overlap the analyzer does not claim an ordering; the attribution scores from \Cref{eq:amp-generic} identify which local losses would need additional refinement before such a design comparison could be analyzed.  \Cref{tab:rq3-tree-design} shows the final comparison.

\begin{table}[H]
\centering
\scriptsize
\setlength{\tabcolsep}{4pt}
\caption{Tree-design analysis on the $n=10$, $p=0.3$, $a=0.5$ running example.  All trees are reanalyzed with the final error-bound profile; dominance means interval separation.}
\label{tab:rq3-tree-design}
\begin{tabular}{@{}lrrl@{}}
\toprule
Tree & Split & Exact mean & Error-bounded interval \\
\midrule
$\tau_{\mathrm{pc}}$ & $8+2$ & $153.948$ & $153.948\pm0.111$ \\
$\tau_{\mathrm{AVL}}$ & $4+6$ & $127.627$ & $127.627\pm0.007$ \\
$\tau_{\mathrm{found}}$ & $4+6$ & $126.663$ & $126.663\pm0.010$ \\
\bottomrule
\end{tabular}
\end{table}

\begin{tcolorbox}[title=Finding,findingstyle]
Analysis-guided search finds a repeater tree whose error-bounded
interval witnesses improvement over both the initial scheme and
the AVL baseline.
\end{tcolorbox}

\subsection{\textsc{RQ4}: For higher moments, how does controlling one moment affect the analysis results on other moments?}
\label{subsec:exp-moment}

For higher moments, we compare a cheap $M_1$-directed geometric tail against a richer $M_2$-directed quadratic tail on matched $k=2$ repeater runs.  At $H=4$, the $M_2$-directed abstraction lowers $M_2$ relative error from $21.6616\%$ to $1.2027\%$; at $H=128$, it lowers the error from $0.0824\%$ to below $10^{-6}\%$.  Across $12$ exact-feasible configurations, switching from the $M_1$-directed to the $M_2$-directed abstraction reduces or ties the $M_2$ error in every configuration and preserves coverage $1.0$ for $M_1$, $M_2$, and variance.

\begin{tcolorbox}[title=Finding,findingstyle]
The $M_2$-directed abstraction consistently improves $M_2$
accuracy, usually preserves or improves $M_1$ accuracy, and
retains coverage for $M_1$, $M_2$, and variance.
\end{tcolorbox}

\subsection{Case Studies}
\label{subsec:exp-case-studies}
\label{subsec:exp-transfer}

In RFID collision resolution, the scalar baseline that replaces $\mathbb E[\max(X,Y)]$ by $\max(\mathbb E[X],\mathbb E[Y])$ degrades with population size, reaching $37.19\%$ relative error at $n=64$.  Prefix--tail analysis at $H=16$ has mean relative error $0.0081\%$, and at $H=32$ agrees with exact means at the displayed precision.  In fork--join response-time analysis, deadlines inside the stored prefix yield exact deadline-miss probabilities; tail deadlines yield nonzero but sound intervals.  For rare-event deadlines at $q_{0.99}$ and $q_{0.999}$, Monte Carlo intervals are unstable on every tested task, while the prefix--tail analysis still returns a bounded bracket.

\section{Extensions}
\label{sec:extensions}
\label{sec:unbounded-extension}
\label{sec:threshold-extension}
\label{sec:kernel-limitations}
\label{sec:cutoff}

The main theorem covers finite cost-expression trees built from independent, prefix-complete operators.  Three common variations use the same prefix--tail carrier but require different surrounding contracts.

\paragraph{Unbounded recursive generators.}
An unbounded generator has recursive contexts whose meanings form a distributional fixed point $\vec X=\vec F(\vec X)$.  The abstract analysis iterates the induced transformer on vectors of summaries and uses the one-unfolding error-flow matrix $M$ to bound residual error: if $M\ge0$ and $\rho(M)<1$, local losses $\vec\eta$ give $\vec\epsilon\le (I-M)^{-1}\vec\eta$.  For expectation queries this also bounds accumulated mean error; for the branching generator $G=\mathsf{Mix}_{(r,q)}(\mathsf{Atom}(0),1+G_1+G_2)$, the condition reduces to $2q<1$.

\paragraph{Threshold and tail-probability queries.}
For a deadline $\tau$, $Q_\tau(X)=\Pr(X>\tau)=S_X(\tau)$.  The Kolmogorov survival distance $d_K(X,Y)=\sup_t|S_X(t)-S_Y(t)|$ directly bounds this query.  A prefix--tail summary answers thresholds inside the stored prefix exactly in the prefix-complete fragment; thresholds beyond $H$ are read from the compressed tail and receive a query-specific error bound.  The operator contracts follow the same interface as the expectation contracts, but use tail-probability-specific constants.

\paragraph{Kernel boundary and memory cutoffs.}
The core kernel assumes independent unary summaries over integer-valued costs.  Operators without prefix completeness can still be sound if they provide a concrete transformer and error contracts, but their stored prefixes are surrogate prefixes.  Shared randomness or memory correlation requires joint summaries or dependence annotations; continuous-time costs require gridded prefixes and discretization-error bounds.  Quantum-repeater memory cutoffs are handled by phase-indexed summaries $\mathbf s_v=(s_v^{(0)},\ldots,s_v^{(m)})$, where the retry recurrence is a finite renewal system with amplification controlled by $(I-T_v)^{-1}$ rather than the scalar factor $1/a$.

\section{Related Work}
\label{sec:related}
\label{sec:related-work}

\paragraph{Distributional approximation.}
Moment matching, maximum-entropy fitting, expectation propagation, and phase-type fitting replace intractable distributions by finite surrogates~\cite{johnson1989matching,asmussen1996em,osogami2006ph,jaynes1957maxent,minka2013ep}.  Those techniques are usually used as numerical approximations: the fitted distribution is evaluated directly, and any error is empirical or problem-specific.  In contrast, the fitted tail in our analysis is an abstract value inside a compositional semantics.  Each fit is paired with a local shape-loss bound, and the global result contains both a distributional error bound and a query-specific error bound.  This distinction matters for extremal operators: two surrogates with accurate low-order moments can still give different expected maxima, so the analysis must remember and propagate shape information rather than only matched scalars.

\paragraph{Probabilistic program and resource analysis.}
Automated, amortized, and abstract-interpretation-based analyses derive cost bounds from program structure, local annotations, expectation transformers, or martingale arguments~\cite{hoffmann2012resource,ngo2018bounded,wang2020raising,avanzini2020modular,leutgeb2022probabilistic,batz2023amortized,wang2021central,kahn2025efficient,wang2018pmaf,wang2024newtonian,monniaux2000abstract}.  Algebraic frameworks such as semiring-based constraint logic programming and weighted rewriting expose a similar design principle: local semantic data should compose through an operator algebra~\cite{bistarelli2001semiring,ahrens2025weighted}.  Our carrier is different because the semantic object being composed is an error bounded distribution surrogate, not a scalar expectation or a single potential function.  The additional distributional component is what allows the same analysis to pass through sums, choices, retries, and barriers without replacing a future maximum by an expectation-only approximation.

\paragraph{Recursive stochastic systems and stochastic recurrences.}
Recursive Markov chains and probabilistic pushdown systems compute termination probabilities, expectations, variances, and tail bounds by solving global nonlinear systems~\cite{esparza2005ppda,etessami2009rmc,brazdil2015runtime}.  Probabilistic recurrence analyses derive asymptotic or one-sided tail bounds for families of recurrences, including span recurrences with maxima~\cite{karp1994prr,tassarotti2017span,tassarotti2018verified,sun2023automated}.  The contraction method studies recursive distributional equations of sum and max type using Lipschitz estimates in transport-style metrics~\cite{ruschendorf2006summax}.  Our finite-tree kernel is bottom-up rather than a global solver, but the same contract viewpoint reappears in the extension to unbounded generators: the one-unfolding error-flow matrix must be contractive for local projection losses to yield a finite residual bound.

\paragraph{Quantum-repeater performance analysis.}
The closest application literature studies quantum-repeater waiting times and rates.  Early analyses derive latency and rate formulas for fixed protocol structures~\cite{briegel1998,dcz2001,sangouard2011quantum,bernardes2011rate}.  Shchukin et al.~\cite{repeater-paper} give exact Markov-chain results for small systems and hand-designed approximations for larger homogeneous chains, explicitly showing both state-space blowup and the inadequacy of mean-only summaries.  Subsequent work improves exact waiting-time computation, incorporates memory cutoffs and imperfect memories, and optimizes repeater policies by dynamic programming or reinforcement learning~\cite{brand2020waitingtime,kamin2023,goodenough2025,inesta2023,jiang2007,shchukin2022,haldar2024}.  Our contribution is orthogonal to those specialized solvers: exact repeater analyses can serve as promoted oracles inside the prefix--tail semantics, while the surrounding analyzer handles arbitrary fixed trees and provides comparisons for design search.

\paragraph{Timing, fork--join, and collision-resolution analysis.}
The fork--join instantiation connects to probabilistic response-time analysis for parallel and DAG real-time tasks~\cite{lakshmanan2010forkjoin,saifullah2013parallel,melani2015response,diaz2002stochastic,cucu2012measurement,davis2019survey}.  Copula, dependency-bound, and probability-box methods bracket timing distributions under unknown dependence~\cite{bernat2005copulas,williamson1990parith,ferson2003pbox}.  These enclosures quantify epistemic uncertainty over possible couplings, whereas our bounds quantify controllable abstraction error under an explicit independence discipline.  Other sound compressions of response-time distributions are one-sided and primarily convolutional~\cite{markovic2021convolution}.  Classical splitting-tree collision-resolution analyses solve additive total-cost recurrences~\cite{capetanakis1979tree}; the RFID case study instead analyzes a parallel critical-path cost, whose parent operation is extremal and therefore depends on distributional shape.
             
\section{Conclusion}
\label{sec:conclusion}

This paper presented a compositional analysis for probabilistic cost programs with addition, probabilistic choice, random repetition, and extremal operators such as $\max$ and $\min$.  The analysis stores prefix--tail surrogate distributions, propagates them through per-operator contracts, and reports distributional and query-specific error bounds.  The same architecture handles fixed raw moments, exact-oracle refinement, recursive generators, threshold queries, and phase-indexed memory cutoffs.  The \tool implementation was evaluated on quantum repeater waiting times, RFID collision resolution, and fork--join response times; retaining distributional shape gives more accurate estimates than scalar or hand-designed approximations on the tested benchmarks while preserving sound intervals.

\newpage
\bibliographystyle{ACM-Reference-Format}
\bibliography{src/main}

\appendix
\section{Additional Details for the Expectation Analysis}
\label{method:sec:method}

This section presents the expectation instance of our analysis kernel, with target query \(Q(X)=\mathbb E[X]\) for non-negative integer-valued cost distributions. Building on the stochastic cost programs and cost-composition patterns of \Cref{sec:background},

we define the three pieces the analyzer needs: the operator interface, the prefix--tail abstract domain
of
compact surrogate distributions, and the error-bounded abstract semantics that computes the analysis result together with error bounds.
These instantiate the kernel of \Cref{sec:background-error-bounded-summarization} with concrete plugin choices: the tail family is geometric, and the prefix depth is a single uniform horizon \(H\) used throughout the analysis, with local prefix depths a possible engineering variant that does not affect the main soundness theorem.
\Cref{sec:moment-method} then lifts the same architecture to arbitrary fixed raw-moment order~\(k\).

\subsection{Cost Operators}
\label{method:sec:operators-contracts}

In this section, we focus on finite cost-expression trees and
the operators from which they are built.
A cost expression is built from $m$-ary cost operators $\omega_\theta(e_1,\dots,e_m)$,
where $\omega$ is parameterized by parameter-set $\theta$,
creating a tree with finitely many nodes.
(Its semantics is given operator-by-operator below.)
Here \emph{finite} constrains only the expression tree, not the induced distributions, whose support may be infinite---for instance the geometric tails of the prefix--tail domain. 
Viewed through its abstract syntax, a cost expression \emph{is} such a tree:
each operator application forms an internal node and each atom a leaf, and the
resulting tree is exactly the cost-expression tree of \Cref{sec:overview}
(cf.\ Fig.~\ref{fig:framework}); we use the two terms interchangeably.

We now turn to
the domain on which operators act, and to the distance used to measure distributional error.  Because the target query in this section is the expectation, the natural domain is the non-negative integer-valued cost distributions with finite mean:
\[
  \mathcal{D}_1 = \{\, X \mid X \text{ is an } \mathbb{N}\text{-valued random variable and }
  \mathbb{E}[X] < \infty \,\}.
\]
The subscript anticipates the moment-parametric generalization of \Cref{sec:moment-method}, where $\mathcal{D}_k$ requires a finite 
$k^\textit{th}$ moment; the expectation instance is the case $k=1$.
 
Distributional error on $\mathcal{D}_1$ is measured by the survival form of the Wasserstein-1 distance,
\[
  d_W(X,Y) = \sum_{t \geq 0} \left\lvert S_X(t) - S_Y(t) \right\rvert,
  \qquad S_X(t) := \Pr(X > t),
\]
where $S_X$ is the survival function of $X$.  On non-negative integer supports, this sum equals the usual cumulative form $\sum_{t \geq 0} \lvert F_X(t) - F_Y(t)  \rvert$ of the Wasserstein-1 distance, because $\lvert F_X(t) - F_Y(t) \rvert = \lvert S_X(t) - S_Y(t) \rvert$; we use the survival form because it aligns directly with the survival identity for the expectation (and, in \Cref{sec:moment-method}, for higher moments).  Concretely, for $X,Y \in \mathcal{D}_1$, the identity $\mathbb{E}[X] = \sum_{t \geq 0} S_X(t)$ gives
\(
  \left\lvert \mathbb{E}[X] - \mathbb{E}[Y] \right\rvert \leq d_W(X,Y),
\)
so a Wasserstein bound always yields a sound expectation-error bound.  This bound is often loose, because it charges for the whole shape change rather than its effect on the expectation; the analysis therefore also reports a query-specific expectation bound, which controls the expectation more directly.

\subsection{Prefix--Tail Abstract Domain}
\label{method:sec:prefix-tail-domain}

The concrete semantics of a cost expression is a full distribution in \(\mathcal D_1\).  The analyzer stores a compact surrogate distribution.  For a uniform prefix depth \(H\), a prefix--tail summary is
\[
  s=(f_0,\ldots,f_H,\rho,\lambda).
\]
The prefix entries \(f_t(0\leq t\leq H)\) represent probability masses at costs \(0,\ldots,H\);
the number \(\rho\) is the residual mass beyond \(H\);
and \(\lambda\in[0,1)\) is the parameter of a geometric residual tail.
The concretization \(\gamma_H(s)\) is the distribution
\begin{equation}
  \label{method:Eqn:Concretization}
  \Pr(\gamma_H(s)=t)=
  \begin{cases}
    f_t, & 0\le t\le H,\\
    \rho(1-\lambda)\lambda^{t-H-1}, & t\ge H+1.
  \end{cases}
\end{equation}
We require \(\sum_{t=0}^{H}f_t+\rho=1\).  For positive-time models, the entry \(f_0\) is zero.

\paragraph{Abstraction.}
Given an exact or intermediate distribution \(X\in\mathcal D_1\), abstraction \(\Pi_H(X)\) preserves the prefix and tail mass:
\[
  f_t=\Pr(X=t)\quad(0\le t\le H),
  \qquad
  \rho=\Pr(X>H).
\]
If \(\rho=0\), we set \(\lambda=0\).  Otherwise define the conditional residual mean
\[
  m_H(X)=\mathbb E[X-(H+1)\mid X>H].
\]
The geometric-tail parameter is chosen as
\[
  \lambda=\frac{m_H(X)}{1+m_H(X)}.
\]
Bacause a geometric residual distribution with parameter \(\lambda\) has mean \(\lambda/(1-\lambda)\), this choice matches the conditional tail mean.

The extracted expectation of a 
surrogate distribution
is
\[
  \widehat Q_H(s)
  =
  \sum_{t=0}^{H}t f_t
  +
  \rho\left(H+1+\frac{\lambda}{1-\lambda}\right).
\]
Abstraction preserves the expectation query:
\[
  \widehat Q_H(\Pi_H(X))=\mathbb E[X].
\]
Consequently,
abstraction may change the shape of the tail, but it introduces no local mean bias.  This closed-form query-preserving abstraction is the reason why the expectation instance is the simplest member of the moment-parametric framework presented in \Cref{sec:moment-method}.

\paragraph{Local abstraction error.}
The ideal distributional error
from
abstracting \(X\) is
\[
  \eta_H(X)=d_W(X,\gamma_H(\Pi_H(X))).
\]
The implementation may use any sound upper bound \(\bar\eta_H(X)\ge\eta_H(X)\). One executable bound chooses an evaluation horizon \(J\ge H\) and uses
\[
\begin{aligned}
  \bar\eta_H(X;J)
  ={}&
  \sum_{t=0}^{J}
  \left|\Pr(X>t)-\Pr(\gamma_H(\Pi_H(X))>t)\right| \\
  &+
  \left(\mathbb E[X]-\sum_{t=0}^{J}\Pr(X>t)\right)
  +
  \left(\mathbb E[\gamma_H(\Pi_H(X))]
        -\sum_{t=0}^{J}\Pr(\gamma_H(\Pi_H(X))>t)\right).
\end{aligned}
\]
The two final terms bound the uncomputed survival tails beyond \(J\).  The bound is sound because \(\mathbb E[X]=\sum_{t\ge0}\Pr(X>t)\) for \(X\in\mathcal D_1\).

\subsection{Cost-Operator Contracts}
\label{method:sec:cost-operator-contracts}

Each cost operator $\omega$ of arity $m$ must satisfy the requirements of a four-part contract, covering a concrete transformer, a finite-prefix transformer, an error-propagation contract, and a finite-computability property;
the requirements for each part are discussed below.

\paragraph{Concrete transformer.}
Definition~\ref{def:cost-operators} specifies each operator as an operation on random variables (for instance, $\max(X,Y)$ is their pointwise maximum).  The analyzer, however, computes with distributions rather than samples, so each operator must also be given as a map on distributions: a \emph{concrete distribution transformer}
\(
  F_\omega : \mathcal{D}_1^{m} \to \mathcal{D}_1,
\)
taking the $m$ input distributions of an $m$-ary operator to the output distribution. We call it \emph{concrete} to distinguish it from the abstract transformer of \Cref{method:sec:error-bounded-semantics}, which acts on prefix--tail surrogates; $F_\omega$ acts on full distributions with no approximation.  For example, $F_{+}$ is convolution and $F_{\max}$ is the survival product $S_{\max(X,Y)}(t) = 1 - F_X(t)F_Y(t)$.
The ``Transformer'' column of \Cref{method:tab:operator-contracts} lists $F_\omega$ for each built-in operator.  Operands are interpreted under the independence discipline of \Cref{sec:background}: sibling subcomputations are sampled independently, and repeated executions use fresh independent copies unless sharing is explicitly  modeled.

\begin{table}[!tb]
\centering
\small
\begin{tabular}{@{}llll@{}}
\toprule
Operator & Transformer & \(L_{\omega,i}\) & \((A_{\omega,i},C_{\omega,i})\) \\
\midrule
\(\mathsf{Mix}_w\) & \(X_I,\ \Pr(I=i)=w_i\) & \(w_i\) & \((0,w_i)\) \\
\(+\) & \(\sum_i X_i\) & \(1\) & \((0,1)\) \\
\(\max\) & \(\max_i X_i\) & \(1\) & \((1,0)\) \\
\(\min\) & \(\min_i X_i\) & \(1\) & \((1,0)\) \\
\(\mathsf{Repeat}_N\) & \(\sum_{j=1}^{N}X^{(j)}\) & \(\mathbb E[N]\) & \((0,\mathbb E[N])\) \\
\bottomrule
\end{tabular}
\caption{Sound expectation-query contracts for the built-in cost operators. For \(\mathsf{Repeat}_N\), the count \(N\) is independent of the repeated cost and has finite mean.  Geometric retry is the derived instance \(\mathsf{Retry}_a=\mathsf{Repeat}_N\) with \(N\sim\mathsf{Geom}_{\ge1}(a)\), so
\(\mathbb E[N]=1/a\).}
\label{method:tab:operator-contracts}
\end{table}

\paragraph{Finite-prefix transformer.}
Let
\[
  \alpha_H(X)=(\Pr(X=0),\ldots,\Pr(X=H))
\]
be the prefix abstraction at horizon \(H\).
An operator supports exact-prefix computation when it provides a finite-prefix transformer
\(F^{\#}_{\omega,H}\) that satisfies
\[
  \alpha_H(F_\omega(X_1,\ldots,X_m))
  =
  F^{\#}_{\omega,H}
  (\alpha_H(X_1),\ldots,\alpha_H(X_m)).
  \tag{P}
  \label{method:Eqn:PropertyP}
\]
Property \labelcref{method:Eqn:PropertyP} 
is a completeness property of the prefix abstraction for the operator.  It says that the output prefix can be computed exactly from the input prefixes.  It is separate from error propagation: an operator may still be used soundly without property \labelcref{method:Eqn:PropertyP}, provided that its local abstraction
error
is bounded, but then the stored prefix should be read as the prefix of the surrogate rather than as an exact prefix of the concrete distribution.

\paragraph{Error propagation.}
An operator must satisfy two error-propagation conditions:
one bounding the distributional error, and one bounding the query error.
For all inputs $X_i, Y_i \in \mathcal{D}_1$, there 
must exist
constants $L_{\omega,i}, A_{\omega,i},
C_{\omega,i} \ge 0$ such that
\begin{equation}
  d_W\!\big(F_\omega(X_1,\dots,X_m),\, F_\omega(Y_1,\dots,Y_m)\big)
  \;\le\; \sum_{i=1}^{m} L_{\omega,i}\, d_W(X_i, Y_i),
  \tag{D}
  \label{method:Eqn:PropertyD}
\end{equation}
and
\begin{equation}
  \big|\,Q(F_\omega(X_1,\dots,X_m)) - Q(F_\omega(Y_1,\dots,Y_m))\,\big|
  \;\le\; \sum_{i=1}^{m} A_{\omega,i}\, d_W(X_i, Y_i)
  \;+\; \sum_{i=1}^{m} C_{\omega,i}\, \big|Q(X_i) - Q(Y_i)\big|.
  \tag{Q}
  \label{method:Eqn:PropertyQ}
\end{equation}
The $L$-coefficients
of property \labelcref{method:Eqn:PropertyD} 
propagate distributional shape error.
The $A$-coefficients
of property \labelcref{method:Eqn:PropertyQ} 
are used when the expected output depends on distributional shape and must be controlled through the Wasserstein error bound.  The $C$-coefficients
of property \labelcref{method:Eqn:PropertyQ}
propagate tighter query error bounds when the operator is expectation-compositional.

Among these constants, the query-error coefficients $C_{\omega,i}$ record which operators are expectation-compositional.  For sequential composition, probabilistic choice, and random repetition, the expected output is determined by the operands' expectations, by linearity of expectation, the law of total expectation, and Wald's identity, respectively:
\[
  \mathbb{E}[X+Y]=\mathbb{E}[X]+\mathbb{E}[Y],
  \quad
  \mathbb{E}\!\left[\mathsf{Mix}_w(X_1,\dots,X_k)\right]=\sum_{i=1}^{k} w_i\,\mathbb{E}[X_i],
  \quad 
  \mathbb{E}\!\left[\mathsf{Repeat}_N(X)\right]=\mathbb{E}[N]\,\mathbb{E}[X],
\]
the last identity holding whenever $N$ is independent of $X$ and has finite mean. These operators therefore carry a nonzero $C$-coefficient and propagate query error directly through expectations.  Maximum and minimum are not expectation-compositional: as already illustrated in Section~\ref{sec:introduction}, two operands with equal means can yield different expected maxima or minima, so $\mathbb{E}[\max(X,Y)]$ and $\mathbb{E}[\min(X,Y)]$ are not determined by the operand means.  These operators carry $C=0$ and instead control query error through the distributional ($A$-)route, using the Wasserstein bound of contract
property \labelcref{method:Eqn:PropertyQ}.

\paragraph{Finite-computability property.}
The support of the concrete transformer $F_\omega : \mathcal{D}_1^{m} \to \mathcal{D}_1$ can be infinite.
For the analyzer to apply it, $F_\omega$ must be
computable on the finite representations the analysis actually stores.
We therefore require that, applied to the concretizations
$\gamma_H(s_i)$ (from \Cref{method:Eqn:Concretization})
of prefix--tail summaries $s_1,\dots,s_m$, the transformer output is determined by---and computable in finite time from---the finitely many parameters of those summaries---the $H{+}1$ prefix masses and the tail parameters $\rho,\lambda$ of each $s_i$:
\begin{equation}
  F_\omega\!\big(\gamma_H(s_1),\dots,\gamma_H(s_m)\big)
  \text{ is computable from } (s_1,\dots,s_m) \text{ in finite time.}
  \tag{F}
  \label{method:Eqn:PropertyF}
\end{equation}
Property \labelcref{method:Eqn:PropertyF} is what makes the exact operator transformers
executable even though their arguments are distributions on an infinite support: the
analyzer never materializes an infinite object, but computes on the finite summary
parameters directly.  Unlike the constants of \labelcref{method:Eqn:PropertyD}
and \labelcref{method:Eqn:PropertyQ}, it carries no operator-specific data; it is a
computability requirement that every admitted operator must meet.

Table~\ref{method:tab:operator-contracts} lists the built-in operators used in the paper, with one sound choice of constants for the expectation query; all listed operators also satisfy the prefix-completeness property~\labelcref{method:Eqn:PropertyP} and the finite-computability property~\labelcref{method:Eqn:PropertyF}.  The concrete transformers are the standard ones: mixtures pointwise, sums by finite convolution, and $\max$ and $\min$ from the child cumulative or survival functions.  Random repetition is computed by a finite compound-distribution recurrence, given for the geometric-retry instance in~\eqref{Eq:RetryRecurrence}.
Other operators can be added
to the framework by satifying
the same four-part specification.

\paragraph{Example: Geometric retry.}
Here we formalize the $\mathsf{Retry}$ operator used in \Cref{sec:overview} with the four-part contract mentioned above. For $N \sim \mathsf{Geom}_{\ge 1}(a)$, the number of attempts has mean $\mathbb{E}[N] = 1/a$, because each attempt succeeds independently with probability $a$.
$\mathsf{Retry}$ repeats $X$ a random number $N$ of times,
\[
  \mathsf{Retry}_a(X) = \sum_{j=1}^{N} X^{(j)},
\]
where the $X^{(j)}$ are independent copies of $X$.  By Wald's identity, $\mathbb{E}[\mathsf{Retry}_a(X)] = \mathbb{E}[N]\,\mathbb{E}[X] = \mathbb{E}[X]/a$.
Because the $L$-coefficient of $\mathsf{Repeat}_N$ is $\mathbb{E}[N]$ (Table~\ref{method:tab:operator-contracts}), the distributional contract specializes to
\[
  d_W\!\big(\mathsf{Retry}_a(X), \mathsf{Retry}_a(Y)\big) \le \tfrac{1}{a}\, d_W(X, Y),
  \qquad
  \mathbb{E}[\mathsf{Retry}_a(X)] = \frac{\mathbb{E}[X]}{a}.
\]
Both factors of $1/a$ are the expected-attempt count $\mathbb{E}[N]$: one enters the mean through Wald's identity, the other as the error-amplification constant of contract property \labelcref{method:Eqn:PropertyD}.
 
If $b_t = \Pr(X = t)$ and $q_t = \Pr(\mathsf{Retry}_a(X) = t)$, then, for a
positive-integer base distribution,
$q_t$ satisfies the recurrence equation
\[
  q_t = a\, b_t + (1-a) \sum_{u=1}^{t-1} b_u\, q_{t-u}.
\]
Here the attempt either succeeds immediately (probability $a$, contributing $a b_t$) or
fails and restarts (probability $1-a$), so the first $H$ masses of $\mathsf{Retry}_a(X)$
are computed from the first $H$ masses of $X$.  This recurrence witnesses prefix
completeness for
$\mathsf{Retry}_a$.
More general repetition laws can be used when the corresponding transformer, prefix transformer, and error contracts are supplied.

\subsection{Exact-Prefix Propagation}
\label{method:sec:exact-prefix}

The word ``exact'' in exact prefix is a finite-horizon statement.  It does not mean that the analyzer computes the full exact distribution.  Rather, under the prefix-completeness contracts of \Cref{method:sec:operators-contracts}, the first \(H\) masses stored in each summary are the true first \(H\) masses of the corresponding concrete distribution.

\begin{lemma}[Exact-prefix propagation]
\label{method:lem:exact-prefix}
Fix a uniform horizon \(H\).  Consider a finite cost expression built from operators satisfying prefix-completeness property \labelcref{method:Eqn:PropertyP}.  Suppose that every atom summary stores the true prefix of its atom distribution up to \(H\).  Then every summary produced by the bottom-up analysis stores the true prefix of the corresponding concrete distribution up to \(H\):
\[
  f_v[t]=\Pr(\llbracket v\rrbracket=t),
  \qquad
  0\le t\le H.
\]
\end{lemma}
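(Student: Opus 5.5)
The plan is structural induction on the finite cost-expression tree, carrying the invariant that the summary $s_v$ at each node $v$ satisfies $\alpha_H(\gamma_H(s_v))=\alpha_H(\llbracket v\rrbracket)$. In words: the concretized surrogate and the true distribution agree on masses $0,\dots,H$. This is the same as the claimed $f_v[t]=\Pr(\llbracket v\rrbracket=t)$ for $0\le t\le H$. The reason is that the concretization (\Cref{method:Eqn:Concretization}) puts mass exactly $f_t$ at each $t\le H$, and all geometric tail mass sits on $t\ge H+1$. So the prefix of $\gamma_H(s)$ is literally the stored vector $(f_0,\dots,f_H)$.

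The base case is an atom or a promoted oracle leaf with exact distribution $X$. Its summary is $\Pi_H(X)$, which by definition copies $\Pr(X=t)$ for $t\le H$. This is exactly the hypothesis that atom summaries store the true prefix.

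For the inductive step, take a node $e=\omega(e_1,\dots,e_m)$ whose children satisfy the invariant, and let $\widetilde X=F_\omega(\gamma_H(s_1),\dots,\gamma_H(s_m))$ be the intermediate distribution the analyzer forms. The argument proceeds in three steps.
\begin{enumerate}
\item Apply prefix completeness \labelcref{method:Eqn:PropertyP} twice:
\[
\alpha_H(\widetilde X)=F^{\#}_{\omega,H}\bigl(\alpha_H(\gamma_H(s_1)),\dots,\alpha_H(\gamma_H(s_m))\bigr)=F^{\#}_{\omega,H}\bigl(\alpha_H(\llbracket e_1\rrbracket),\dots,\alpha_H(\llbracket e_m\rrbracket)\bigr)=\alpha_H(\llbracket e\rrbracket).
\]
The middle equality uses the induction hypothesis. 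The last equality uses that $\llbracket e\rrbracket=F_\omega(\llbracket e_1\rrbracket,\dots,\llbracket e_m\rrbracket)$ under the independence discipline of \Cref{def:induced-cost}.
\item The parent summary is $s=\Pi_H(\widetilde X)$, and $\Pi_H$ copies the prefix of $\widetilde X$.
\item Hence $f_e[t]=\Pr(\widetilde X=t)=\Pr(\llbracket e\rrbracket=t)$ for $t\le H$.
\end{enumerate}
For composite operators such as $\mathsf{Retry}_a(\max(\cdot,\cdot))$, I treat each stage as its own node, so the argument applies stage by stage. Each intermediate abstraction also copies the prefix, as the max-then-retry example shows.

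I expect the only delicate point to be that \labelcref{method:Eqn:PropertyP} must apply to the surrogate arguments $\gamma_H(s_i)$, not just to the true child distributions. Property \labelcref{method:Eqn:PropertyP} is stated for all inputs in $\mathcal D_1$, so I need the surrogates to lie in $\mathcal D_1$. This holds because the geometric tail with $\lambda<1$ has finite mean. I will note this explicitly and otherwise keep the proof short, since the remainder is bookkeeping.
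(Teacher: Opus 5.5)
Your proposal is correct and matches the paper's proof. Both argue by structural induction: prefix completeness (P) is applied once to the child surrogates \(\gamma_H(s_i)\) and once to the true children \(\llbracket e_i\rrbracket\), the induction hypothesis identifies the two argument prefixes, and \(\Pi_H\) copies the resulting prefix into the parent summary. You also make explicit two points the paper leaves implicit: the prefix of \(\gamma_H(s)\) is exactly the stored vector \((f_0,\ldots,f_H)\) because all tail mass sits at \(t\ge H+1\), and the surrogates lie in \(\mathcal D_1\) (since \(\lambda<1\)), so (P) legitimately applies to them.
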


\begin{proof}
By induction on the expression tree.  The atom case is immediate.  For an internal expression \(e=\omega(e_1,\ldots,e_m)\), the induction hypothesis says that each child summary has the same prefix as the corresponding concrete child distribution.  Prefix completeness of \(\omega\) then implies that the prefix of \(F_\omega\) applied to the child surrogates equals the prefix of \(F_\omega\) applied to the concrete child distributions.  The abstraction \(\Pi_H\) in the operator rule (\Cref{method:sec:error-bounded-semantics}) copies this prefix into the parent summary, so the parent prefix is exact.
\end{proof}

This lemma is independent of the error-bound soundness theorem below 
(\Cref{method:thm:generic-soundness}).
It uses the prefix-completeness property \labelcref{method:Eqn:PropertyP}, whereas the error-bound theorem uses the 
distributional- and query-error-contract
properties \labelcref{method:Eqn:PropertyD} and \labelcref{method:Eqn:PropertyQ}.  If an operator lacks a finite-prefix transformer, the error-bounded semantics still applies, but the stored prefix should be interpreted as part of the surrogate distribution.

\subsection{Error-Bounded Abstract Semantics}
\label{method:sec:error-bounded-semantics}

The abstract judgment is
\[
  \vdash_H e\Downarrow(s,\epsilon,\delta),
\]
where \(s\) is a prefix--tail summary, \(\epsilon\) is a distributional error bound, and \(\delta\) is a query error bound.  Its intended meaning is
\[
  d_W(\llbracket e\rrbracket,\gamma_H(s))\le\epsilon,
  \qquad
  |\mathbb E[\llbracket e\rrbracket]-\widehat Q_H(s)|\le\delta.
\]

\paragraph{Exact atoms and oracle summaries.}
If an exact distribution \(X\in\mathcal D_1\) is available, the analyzer may introduce it as an atom:
\[
  s=\Pi_H(X),
  \qquad
  \epsilon=\bar\eta_H(X),
  \qquad
  \delta=0.
\]
The same rule is used for exact-subproblem promotion.  If an oracle computes the exact distribution \(X=O(e)\) of a closed subexpression \(e\), the subexpression can be replaced by an oracle summary \((\Pi_H(X),\bar\eta_H(X),0)\).  This
step
removes all approximation losses inside the subproblem and leaves one boundary abstraction error.

\paragraph{Operator rule.}

Suppose that we have an $m$-ary cost operator $\omega$
applied
to subexpressions $e_1...e_m$ 
\[
  e=\omega(e_1,\ldots,e_m),
\]
and abstract judgments
\[
  \vdash_H e_i\Downarrow(s_i,\epsilon_i,\delta_i)
  \qquad(1\le i\le m).
\]
Given the child summaries and their bounds, the rule produces the parent summary and its bounds in three steps: it applies the concrete transformer to the concretizations of the child summaries, abstracts the resulting distribution into a summary, and combines the child bounds with the local abstraction error 
via
the contract properties \labelcref{method:Eqn:PropertyD} and \labelcref{method:Eqn:PropertyQ} of \Cref{method:sec:operators-contracts}.

\textbf{Apply the transformer.}
The analyzer first applies the exact operator transformer to the concretizations of the child summaries:
\[
  \widetilde{X} = F_\omega\!\big(\gamma_H(s_1),\dots,\gamma_H(s_m)\big).
\]
Each $\gamma_H(s_i)$ is the concretization of the child summary $s_i$, which is a full distribution in $\mathcal{D}_1$,
but one on which the analyzer can compute:
contract property~\labelcref{method:Eqn:PropertyF} guarantees that $F_\omega$ is computable from the finite parameters of the summaries, even though $\gamma_H(s_i)$ has infinite support.
The transformer $F_\omega$ is exact and it is applied to these concretizations instead of the true child distributions $\llbracket e_i \rrbracket$; because each $\gamma_H(s_i)$ may differ from $\llbracket e_i \rrbracket$, with $d_W(\llbracket e_i \rrbracket, \gamma_H(s_i)) \le \epsilon_i$, the result $\widetilde{X}$ generally differs from $\llbracket e \rrbracket$, and contract property \labelcref{method:Eqn:PropertyD} bounds that difference.
 
\textbf{Abstract the result.}
The analyzer then abstracts $\widetilde{X}$ into a prefix--tail summary,
\(
  s = \Pi_H(\widetilde{X}).
\)
This compression generally changes the tail shape, so it incurs the local distributional
error
of \Cref{method:sec:prefix-tail-domain},
\(
  \eta_H(\widetilde{X}) = d_W\!\big(\widetilde{X}, \gamma_H(s)\big),
\)
instantiated here at the intermediate distribution $\widetilde{X}$. Because the ideal 
error
$\eta_H(\widetilde{X})$ involves the full tail of $\widetilde{X}$, the analyzer records instead the sound, executable upper bound
constructed in \Cref{method:sec:prefix-tail-domain}, namely,
\(
  \eta \;:=\; \bar\eta_H(\widetilde{X}) \;\ge\; \eta_H(\widetilde{X}).
\)

For the expectation query, the geometric prefix--tail abstraction preserves expectation, so it introduces no local query bias, and we set
bias parameter $\beta$ to $0$.
(If the abstraction is not query-preserving, the analyzer instead records a local query-bias bound $\beta \ge |Q(\widetilde{X}) - \widehat{Q}_H(s)|$;
this approach is used in one of the variants of moment-parametric analysis in \Cref{sec:moment-method}.)
 
\textbf{Combine the bounds.}
Finally, each parent bound is the sum of two parts: the child errors propagated through the operator's contract, and the local
error
introduced by the abstraction step above. For a parent node with $m$
children,
the distributional bound combines the 
$\epsilon_i$ values of the children and
the $L$-coefficients of property \labelcref{method:Eqn:PropertyD} with the local abstraction
error
$\eta$, by the triangle inequality:
\[
  \epsilon = \underbrace{\sum_{i=1}^{m} L_{\omega,i}\,\epsilon_i}_{\text{child error via property \labelcref{method:Eqn:PropertyD}}}
  \;+\; \underbrace{\eta}_{\text{local}} .
\]
The query bound combines the child errors through the $A$- and $C$-coefficients of property \labelcref{method:Eqn:PropertyQ} with the local query bias $\beta$:
\[
  \delta = \underbrace{\sum_{i=1}^{m} A_{\omega,i}\,\epsilon_i
                     + \sum_{i=1}^{m} C_{\omega,i}\,\delta_i}_{\text{child error via property \labelcref{method:Eqn:PropertyQ}}}
  \;+\; \underbrace{\beta}_{\text{local bias}} .
\]
The $A$-terms route query error through the child distributional bounds (used when the operator is not expectation-compositional, as for $\max$ and $\min$), 
whereas
the $C$-terms propagate query error directly (used when the operator is expectation-compositional, as for $+$, $\mathsf{Mix}$, and $\mathsf{Repeat}$). We show the soundness of this rule by the following theorem: 

\begin{theorem}[Generic soundness for the expectation instance]
\label{method:thm:generic-soundness}
Under the finite-cost tree setting, for any closed cost expression \(e\) built from operators satisfying the
contract properties \labelcref{method:Eqn:PropertyD} and \labelcref{method:Eqn:PropertyQ}, if
\[
  \vdash_H e\Downarrow(s,\epsilon,\delta),
\]
then
\[
  d_W(\llbracket e\rrbracket,\gamma_H(s))\le\epsilon
  \qquad\text{and}\qquad
  |\mathbb E[\llbracket e\rrbracket]-\widehat Q_H(s)|\le\delta.
\]
The result remains valid when ideal local losses are replaced by any sound upper bounds.
\end{theorem}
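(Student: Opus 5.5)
The argument is a structural induction on the derivation of \(\vdash_H e\Downarrow(s,\epsilon,\delta)\), which follows the finite expression tree. I prove both inequalities together at each node, because the query bound at a parent uses the children's distributional bounds (through the \(A\)-coefficients) as well as their query bounds (through the \(C\)-coefficients).

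\emph{Base case: atoms and oracle leaves.} Here \(s=\Pi_H(X)\) with \(X=\llbracket e\rrbracket\), \(\epsilon=\bar\eta_H(X)\), and \(\delta=0\). The distributional bound holds because \(\bar\eta_H(X)\ge\eta_H(X)=d_W(X,\gamma_H(\Pi_H(X)))\). The query bound holds because \(\widehat Q_H(\Pi_H(X))=\mathbb E[X]\), which is the mean-matching choice \(\lambda=m_H(X)/(1+m_H(X))\) from \Cref{method:sec:prefix-tail-domain}. If \(\rho=0\), then \(\lambda=0\) and the prefix is the whole distribution.

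\emph{Inductive step.} Let \(e=\omega(e_1,\ldots,e_m)\), and write \(X_i=\llbracket e_i\rrbracket\) and \(Y_i=\gamma_H(s_i)\). By the induction hypothesis, \(d_W(X_i,Y_i)\le\epsilon_i\) and \(|\mathbb E X_i-\widehat Q_H(s_i)|\le\delta_i\). Under the independence discipline, \(\llbracket e\rrbracket=F_\omega(X_1,\ldots,X_m)\); set \(\widetilde X=F_\omega(Y_1,\ldots,Y_m)\).

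For the distributional bound, use the triangle inequality for \(d_W\), then property \labelcref{method:Eqn:PropertyD} and the nonnegativity of \(L_{\omega,i}\):
\[
d_W(\llbracket e\rrbracket,\gamma_H(s))\le d_W(\llbracket e\rrbracket,\widetilde X)+d_W(\widetilde X,\gamma_H(s))\le\sum_i L_{\omega,i}\,d_W(X_i,Y_i)+\eta\le\epsilon.
\]

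For the query bound, split the error as
\[
|\mathbb E\llbracket e\rrbracket-\widehat Q_H(s)|\le|\mathbb E\llbracket e\rrbracket-\mathbb E\widetilde X|+|\mathbb E\widetilde X-\widehat Q_H(s)|.
\]
The second term is at most \(\beta\), and \(\beta=0\) for the geometric abstraction. The first term is bounded by property \labelcref{method:Eqn:PropertyQ}, giving \(\sum_i A_{\omega,i}\,d_W(X_i,Y_i)+\sum_i C_{\omega,i}\,|\mathbb E X_i-\mathbb E Y_i|\).

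The main obstacle is the term \(|\mathbb E X_i-\mathbb E Y_i|\). The induction hypothesis controls \(|\mathbb E X_i-\widehat Q_H(s_i)|\), not \(|\mathbb E X_i-\mathbb E Y_i|\), so I must show that \(\widehat Q_H(s_i)=\mathbb E[\gamma_H(s_i)]\). This follows directly from the concretization \eqref{method:Eqn:Concretization}: the prefix contributes \(\sum_{t\le H}t f_t\), and the geometric tail contributes \(\rho\,(H+1+\lambda/(1-\lambda))\), which is its mean. Hence \(\widehat Q_H\) is exactly the mean of the surrogate. Substituting, and using the nonnegativity of \(A_{\omega,i}\) and \(C_{\omega,i}\) to replace each child quantity by its bound, gives the stated \(\delta\).

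One more point must be checked: every intermediate object lies in \(\mathcal D_1\), so that the contracts apply. For leaves this is given. Each \(\gamma_H(s_i)\) has finite mean because \(\lambda<1\). Each \(F_\omega\) maps \(\mathcal D_1^m\) into \(\mathcal D_1\) by assumption.

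The final remark of the theorem, that sound upper bounds may replace the ideal losses, follows from monotonicity. Every bound above is a nonnegative combination, so replacing \(\eta\), \(\beta\), \(\epsilon_i\), or \(\delta_i\) by larger values only weakens inequalities that already hold; this is the same observation as \Cref{prop:monotonicity}.
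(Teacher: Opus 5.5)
Your proof is correct and follows the paper's structural induction: the base case comes from the definition of $\bar\eta_H$ and mean preservation of $\Pi_H$, the inductive step uses the triangle inequality with (D) for $\epsilon$ and with (Q) plus $\beta$ for $\delta$, and monotonicity handles the final remark. You also state two steps the paper's proof leaves implicit: the identity $\widehat Q_H(s_i)=\mathbb E[\gamma_H(s_i)]$, which is needed to feed the induction hypothesis into the $C$-terms of (Q), and the $\mathcal D_1$-membership of every intermediate distribution. Spelling these out is a gain in rigor, not a different route.
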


\begin{proof}
By structural induction on the cost expression $e$.  The abstract
judgment $\vdash_H e\Downarrow(s,\epsilon,\delta)$ is defined by the exact-atom rule
on leaves and the operator rule on internal nodes
(\Cref{method:sec:error-bounded-semantics}), so the structure of $e$ determines the
derivation; 
the induction is performed over that structure.

\emph{Base case (atoms).}  For an atom, the judgment is produced by the
exact-atom rule.  It is sound by the definition of the local abstraction
error and by expectation preservation of the prefix--tail abstraction.

\emph{Inductive case (operator application $e=\omega(e_1,\dots,e_m)$).}  The
induction hypotheses give the child distributional and query bounds
$d_W(\llbracket e_i\rrbracket,\gamma_H(s_i))\le\epsilon_i$ and
$|\mathbb E[\llbracket e_i\rrbracket]-\widehat Q_H(s_i)|\le\delta_i$.  Contract
property~\labelcref{method:Eqn:PropertyD} propagates the child distributional errors through
the exact transformer, and the triangle inequality adds the local abstraction
error $\eta$, giving the parent distributional bound $\epsilon$.  Contract
property~\labelcref{method:Eqn:PropertyQ} propagates the child query errors, and the local
query bias $\beta$ accounts for any bias introduced by abstraction, giving the
parent query bound $\delta$.

Finally, replacing an ideal local abstraction error by a
larger executable upper bound preserves both inequalities, because the error-bound
recurrences of \Cref{method:sec:error-bounded-semantics} are monotone in the local
abstraction-error terms.
\end{proof}

\paragraph{Recovering the max--retry rule.}
Here we demonstrate how we obtain the error-propagation rules
given
in \Cref{sec:overview-compute-summary}.
The semantics of a quantum-repeater construction operation is not a primitive operator:
it is the composite operator $\mathsf{Retry}_a(\max(X,Y))$.
The analyzer abstracts after the maximum step and again after
$\mathsf{Retry}_a$.
Let \(\eta_B\) be the local abstraction error
after the maximum and \(\eta_T\) the error after retry.
Because $\max$
has distributional coefficients \((1,1)\) and $\mathsf{Retry}$ amplifies distributional error by \(1/a\), the distributional error bound is
\(
  \epsilon
  =
  \frac{\epsilon_l+\epsilon_r+\eta_B}{a}
  +
  \eta_T.
\)
For the expectation query, $\max$ is not expectation-compositional, so its query error is controlled through the child Wasserstein error bounds: \(\delta_B=\epsilon_l+\epsilon_r\).
$\mathsf{Retry}$ is expectation-compositional and scales this query error by \(1/a\).
Abstraction preserves expectation;
hence,
\(
  \delta
  =
  \frac{\epsilon_l+\epsilon_r}{a}.
\)
This value is the same error bound used in \Cref{sec:overview-compute-summary}, derived here
from the generic operator contracts.

\subsection{Instantiations}
\label{method:sec:front-ends}

The core reasoning component of the framework operates on cost-expression trees, so instantiating the framework for an application amounts to building a cost-expression tree that models the probabilistic behavior of a system under study.
Typically, operators are selected from a library of pre-built operators.
A new operator can be added by satisfying the contract conditions of \Cref{method:sec:cost-operator-contracts}.

To create large cost-expression trees, it may be preferable to construct them in a programmatic manner, by creating a \emph{tree generator}.
(See the left side of \Cref{fig:framework}.)
In \Cref{method:sec:general-generator-constructions}, we describe the tree and tree-generator constructions available.
In \Cref{method:sec:instantiations}, we discuss the three concrete instantiations used in our evaluation of the framework.

\subsubsection{Tree and tree-generator constructions.}
\label{method:sec:general-generator-constructions}

We
use three constructions to produce cost-expression trees, from explicit tree-construction operations to parameterized tree-generator operations.\footnote{
Here the generated cost-expression tree is finite;
  unbounded recursive generators, whose unfolding may not terminate, require a fixed-point treatment and are deferred to \Cref{sec:unbounded-extension}.
}
 
\paragraph{Explicit cost trees.}
Cost trees are defined by the following regular-tree grammar, which defines a set of constructor functions:
\[
  \tau ::= \mathsf{Atom}(\mu)
        \mid \omega_\theta(\tau_1,\ldots,\tau_m).
\]
Here $\theta$ denotes a parameter (or set of parameters) of $\omega$, such as the retry-success probability $a$ in $\mathsf{Retry}_a$ (the value $a = 1/2$ used throughout \Cref{sec:overview} is one example) or the mixture weights $w$ in $\mathsf{Mix}_w$; operators such as $\max$ and $+$ take no parameters.
 
\paragraph{Size-budgeted generators.}
A size-budgeted generator for operator $\omega_\theta$ is written $\mathsf{Gen}_{\omega_\theta}(c,n)$, where $n$ is a size budget and $c$ denotes other parameters to the generator.
The base case returns an atom:
\[
  \mathsf{Gen}_{\omega_\theta}(c,1)=\mathsf{Atom}(\mu_c).
\]
For $n>1$, a deterministic split policy creates the expression
\[
  \omega_\theta(
    \mathsf{Gen}_{\omega_\theta}(c_1,n_1),
    \ldots,
    \mathsf{Gen}_{\omega_\theta}(c_m,n_m)),
\]
with \(n_i\ge1\), \(n_i<n\), and \(\sum_i n_i=n\).

If the split policy is randomized, the root of generated expression is a mixture node.
If rule \(r\) is chosen with probability \(w_r\), then the generated cost is
\(
  \mathsf{Mix}_{(w_r)_r}(e_r)_r,
\)
where the subscript \((w_r)_r\) is the family of rule probabilities used
as mixture weights, and \((e_r)_r\) is the family of expressions produced by the
rules, both indexed by the rule \(r\).
The error bound is for the generator-induced cost distribution: first sample the rule (or generated tree), then sample its cost.
It is not a worst-case guarantee over all generated trees. Finite families can be analyzed by memoizing summaries for reachable indices \((c,n)\).
 
\paragraph{Ordered-interval generators.}
An ordered-interval generator is written \(\mathsf{IGen}_{\omega_\theta}(i,j)\).
It is used when subproblems are contiguous intervals in an ordered structure. 
The base case is
\[
  \mathsf{IGen}_{\omega_\theta}(i,i+1)=\mathsf{Atom}(\mu_i).
\]
A split policy returns an ordered partition of $[i..j]$:
\[
  (i=t_0<t_1<\cdots<t_m=j),
\]
and the generated expression is
\[
  \omega_\theta(
    \mathsf{IGen}_{\omega_\theta}(t_0,t_1),
    \ldots,
    \mathsf{IGen}_{\omega_\theta}(t_{m-1},t_m)).
\]

A parameterized program for generating quantum-repeater chains of different sizes and shapes (of the kind discussed in \Cref{sec:overview}) would use 
a binary ordered-interval generator.
 
\subsubsection{Concrete instantiations.}
\label{method:sec:instantiations}

We instantiated the framework, as implemented in our tool \tool, for three applications.
Each specializes the shared prefix--tail kernel through a choice of tree generator (one of the constructions above), leaf distribution, per-node operator assignment, query, and exact oracle; only these choices differ across applications.
The benchmark families and parameter ranges are evaluated in \Cref{sec:evaluation}.
 
\paragraph{Quantum-repeater waiting time.}
The quantum-repeater instantiation was
presented in \Cref{sec:overview};
we give the formal instantiation here.
Cost-expression trees are generated using a binary ordered-interval generator:
leaves are elementary links, and each internal node waits for its two child repeaters and then retries the swap on failure.
\begin{itemize}
  \item \textbf{Leaf.} An elementary link with success probability $p_i$ is an atomic positive geometric distribution.
The generated expression is $\mathrm{Atom}(\mathsf{Geom}_{\ge 1}(p_i))$.
  \item \textbf{Internal node.}
An internal node with success probability $a$ involves a $\max$ followed by a geometric retry, i.e.,
  \[
    ({\mathsf{Retry}_\boxempty} \circ \max)_a)(L, R),
    \qquad \textrm{which equals} \qquad 
    \mathsf{Retry}_a(\max(L, R)).
  \]
  The operator plugins are atomic positive geometric distributions,
$\max$, and $\mathsf{Retry}_a$.
  \item \textbf{Query.} The expected waiting time.
  \item \textbf{Exact oracle.} A small-instance Markov-chain or exact-recurrence backend
whose answers are converted to the prefix--tail domain for comparison with the ones computed by \tool.
\end{itemize}
 
\paragraph{Tree-splitting collision resolution.}
An RFID reader must identify a batch of tags that respond simultaneously. When several tags respond at once, their signals collide, the reader must repeatedly query the population to separate them one by one; the cost is the query effort needed to finish. The reader runs a binary query tree: it broadcasts an identifier prefix, and the matching tags respond; if two or more respond (a collision), the reader extends the prefix by one bit, splitting the colliding population into two subpopulations of sizes $K$ and $n - K$, and recurses. Under uniformly distributed identifiers,
we have that
$K \sim \mathrm{Binom}(n, \tfrac12)$. If all tags fall on one side, the split makes no
progress---i.e., is not ``useful''---and
is retried; this no-progress outcome has probability $s_n = 2(\tfrac12)^n$. We study the parallel critical-path variant, in which the two subpopulations are resolved in parallel, so the parent cost is the maximum of the two child costs.
 
To map this problem into the framework, we use a size-budgeted generator $\mathrm{Gen}_\sigma(c, n)$: the no-progress retries become a geometric $\mathsf{Retry}$, the random useful split becomes a $\mathsf{Mix}$ over child sizes, and the parallel resolution becomes a $\max$. Writing $T_n$ for the cost of resolving
a batch of
size $n$,
\[
  T_n = \underbrace{2(G_n - 1)}_{\text{no-progress retries}}
      + \underbrace{1}_{\text{collision query}}
      + \quad\underbrace{\max(T_K, T_{\,n-K})}_{\text{parallel children}},
  \qquad\quad G_n \sim \mathsf{Geom}(1 - s_n),
\]
where each failed attempt costs $2$ and $K$ is the useful split size. The instantiation is:
\begin{itemize}
  \item \textbf{Leaf.} An empty or singleton subpopulation costs $1$:
The generated expression is $\mathrm{Atom}(\delta_1)$.
  \item \textbf{Internal node.} A geometric retry over the no-progress rounds, a probabilistic choice $\mathsf{Mix}_w$ over the useful split-size $K$, with conditional binomial weights $w_k = \Pr(K = k) = \binom{n}{k}(\tfrac12)^n / (1 - s_n)$ for $1 \le k \le n-1$, and a 
  $\max$ over the two children.
  The generated expression is
\[
  \mathrm{Gen}_\sigma(c, n)
  \;=\;
  \mathsf{Repeat}_{N_n}\!\bigl(\mathrm{Atom}(\delta_2)\bigr)
  \;+\; \mathrm{Atom}(\delta_1)
  \;+\; \mathsf{Mix}_{w}\Bigl(
      \max\bigl(\mathrm{Gen}_\sigma(c,k),\,\mathrm{Gen}_\sigma(c,n-k)\bigr)
    \Bigr)_{k=1}^{\,n-1},
\]
where $N_n \sim \mathsf{Geom}_{\ge 0}(1-s_n)$ is the number of no-progress
rounds, each of cost $2$---so that
$\mathsf{Repeat}_{N_n}(\mathrm{Atom}(\delta_2))$ has the same distribution as
$2(G_n-1)$ with $G_n \sim \mathsf{Geom}_{\ge 1}(1-s_n)$---the unit atom is the
collision query, and the mixture weights are the conditional binomial
probabilities $w_k = \Pr(K = k) = \binom{n}{k}(\tfrac12)^n / (1 - s_n)$ for
$1 \le k \le n-1$.
  The operator plugins are atomic unit-cost distributions, probabilistic choice, $\max$, sum, and geometric repetition.
  \item \textbf{Query.} The expected parallel critical-path query cost.
  \item \textbf{Exact oracle.} Exact discrete-distribution propagation for populations of size at most $4$,
with answers converted to the prefix--tail domain for comparison with the ones computed by \tool.
\end{itemize}
Like the repeater, its internal nodes combine a $\max$ with a $\mathsf{Retry}$, here over a binomial split, showing that the analysis is not tied to
the quantum-repeater problem.

\paragraph{Fork--join response time.}
A parallel real-time task must finish before a deadline.  The query of
interest is the deadline-miss probability
\[
  Q_D(R)=\Pr(R>D)=S_R(D),
\]
where \(R\) is the end-to-end response time and \(D\) is a given deadline.
The input is an annotated series--parallel task graph.  Leaves are code
segments with discrete, right-skewed execution-time distributions: most mass
is placed on a typical cost, with a thin upper tail for rare slow runs
motivated by measurement-based probabilistic timing analysis~\cite{
cucu2012measurement,davis2019survey}.  Internal nodes represent sequential
composition, fork--join synchronization, and data-dependent conditional
branching.

To map this problem into the framework, we use a structure-indexed front end
over the annotated task graph.  Each leaf \(i\) carries a parameter vector
\[
  \theta_i=(q_i,c_i,s_i,\rho_i).
\]
These annotations are task parameters supplied to the benchmark generator.
Here \(c_i\) is the typical execution cost, \(q_i\) is the probability of
taking a slow path, \(s_i\) is the slow-path base cost, and \(\rho_i\) is the
geometric tail parameter.  The generated leaf expression is
\[
  e_i
  =
  \mathsf{Mix}_{(1-q_i,\,q_i)}
  \Bigl(
    \mathrm{Atom}(c_i),
    \mathrm{Atom}(s_i+\mathsf{Geom}_{\ge0}(\rho_i))
  \Bigr),
\]
where \(s_i+\mathsf{Geom}_{\ge0}(\rho_i)\) denotes a geometric excess shifted
by \(s_i\).

For an internal node, let \(L\) and \(R\) be the recursively generated child
expressions.  Sequential composition generates \(L+R\).  A fork--join barrier,
which waits for both children to finish, generates \(\max(L,R)\).  A binary
data-dependent conditional with branch probability \(w\) generates
\[
  \mathsf{Mix}_{(w,\,1-w)}(L,R).
\]
The operator plugins used in this case study are atomic discrete
distributions, probabilistic choice, sum, and maximum.  No
\(\mathsf{Repeat}\), \(\mathsf{Retry}\), or replicated-branch \(\min\) operator
is used in the reported experiments.

The deadline-miss query is read from the survival function of the root
summary.  If \(D\) lies within the stored prefix, the answer is exact; if
\(D\) lies in the tail, the analyzer returns an error-bounded two-sided
bracket for \(\Pr(R>D)\).  Exact references are computed by exact
discrete-distribution propagation: convolution for \(+\), CDF product for
\(\max\), and weighted mixture for \(\mathsf{Mix}\).

\section{Additional Details for Moment-Parametric Analysis}
\label{moment:sec:moment-method}

The previous section presented the expectation instance of the framework.  This section lifts the same architecture to 
fixed raw-moments of arbitrary order.
The lift is systematic: the concrete domain is strengthened to require finite $k^\textit{th}$ moment,
require a finite $k^\textit{th}$ moment $\mathbb E[X^k]<\infty$ (which, because costs are non-negative, makes every lower-order moment finite as well),
the scalar error bounds become vectors, Wasserstein-1 
distance
is replaced by moment-weighted survival distances, and abstraction preserves moments up to order \(k\) when the tail family is matched to the target order, or, as a cheaper fallback, reports local moment-bias terms when a simpler tail is used.  The prefix--tail architecture, the exact-prefix property, and oracle promotion carry over unchanged;
the operator-contract discipline is also the same, although for convolutional operators the contracts acquire cross-order structure (\Cref{moment:sec:higher-operator-contracts}).

The expectation analysis of \Cref{sec:method} is the \(k=1\) instance of this moment-parametric semantics.  Its special simplicity comes from two facts: \(d_1\) is the usual survival form of Wasserstein-1, and the geometric tail admits a closed-form abstraction that preserves the first raw moment.  Higher moments are more tail-sensitive and generally require vector-valued error bounds and a richer moment-matching tail---or, if a simpler tail is kept, a bounded local bias.

\subsection{Moment Domains and Weighted Survival Distances}
\label{moment:sec:moment-domain}

For a fixed order \(k\ge1\), let
\[
  \mathcal D_k
  =
  \{X \mid X \text{ is an } \mathbb N\text{-valued random variable and }
        \mathbb E[X^k]<\infty\}.
\]
For \(1\le j\le k\), write
\(
  M_j(X)=\mathbb E[X^j]
\)
for the \(j^\textit{th}\) raw moment.  Because costs are non-negative, \(\mathcal D_k\subseteq\mathcal D_{k-1}\subseteq\cdots\subseteq\mathcal D_1\).
Thus a finite \(k^\textit{th}\) raw moment ensures that all lower-order raw moments are finite.

For \(j\ge1\), define the moment-weighted survival distance
\[
  d_j(X,Y)
  =
  \sum_{t\ge0}
  \bigl((t+1)^j-t^j\bigr)
   \left|S_X(t)-S_Y(t)\right|.
\]
The discrete survival identity
\[
  M_j(X)
  =
  \sum_{t\ge0}
  \bigl((t+1)^j-t^j\bigr)S_X(t)
\]
implies
\[
  |M_j(X)-M_j(Y)|\le d_j(X,Y).
\]
For \(j=1\), the weight is identically one, so \(d_1=d_W\).  For \(j>1\), the weight grows like \(j t^{j-1}\); consequently, higher-moment error bounds penalize tail mismatch more strongly than expectation error bounds.\footnote{
  The reader may notice that \(d_1\) coincides with the survival form of the Wasserstein-1 distance \(d_W\) introduced in \Cref{sec:operators-contracts}. This agreement is special to \(j=1\), and should not be extrapolated: for \(j\ge2\), \(d_j\) is \emph{not} the optimal-transport Wasserstein-\(j\) distance. The two are different objects---\(d_j\) is a linear survival-weighted sum, whereas Wasserstein-\(j\) is a transport cost---and they agree only at \(j=1\). The purpose of \(d_j\) is not to measure transport cost but to directly bound raw-moment error on unbounded discrete supports.

The notation $d_j$ is used only in \Cref{moment:sec:moment-method}.
  Elsewhere in the paper, $d_W$ is used (and denotes $d_1$).
  .
}

\subsection{Moment-Parametric Prefix--Tail Summaries}
\label{moment:sec:moment-tail}

The summary shape remains prefix--tail.  For a tail family \(\mathcal T_k\), a summary at horizon \(H\) has the form
\[
  s=(f_0,\ldots,f_H,\rho,\theta),
\]
with concretization
\[
  \Pr(\gamma_H(s)=t)=
  \begin{cases}
    f_t, & 0\le t\le H,\\
    \rho\,\tau_\theta(t-H-1), & t\ge H+1.
  \end{cases}
\]
Here \(\tau_\theta\in\mathcal T_k\) is the residual tail distribution over offsets \(r\in\mathbb N\), parametrized by \(\theta\); it generalizes the single geometric tail of the expectation instance (\Cref{sec:prefix-tail-domain}), where \(\mathcal T_1\) is the 
family of geometric distributions
and \(\theta=\lambda\).  The prefix \(f_0,\ldots,f_H\) and the tail mass \(\rho\) keep the same meaning as in the expectation instance; only the tail shape changes from a fixed geometric
distribution
to an arbitrary member \(\tau_\theta\) of \(\mathcal T_k\). 

Given \(X\in\mathcal D_k\), abstraction first preserves
\[
  f_t=\Pr(X=t)\quad(0\le t\le H),
  \qquad
  \rho=\Pr(X>H).
\]
Let
\[
  R_X=X-(H+1)\mid X>H
\]
be the conditional residual tail.  A \emph{moment-preserving} abstraction chooses
\(\theta\) so that
\[
  \mathbb E_{\tau_\theta}[R^j]
  =
  \mathbb E[R_X^j],
  \qquad
  1\le j\le k.
\]
When this matching succeeds, the abstracted surrogate preserves all raw moments up to order \(k\):
\[
  M_j(\gamma_H(\Pi_H^{(k)}(X)))=M_j(X),
  \qquad
  1\le j\le k.
\]

As a lighter-weight alternative, the framework also permits \emph{bias-bounded} abstraction.  If one prefers a simpler tail family that does not match the first \(k\) residual moments exactly, the abstraction returns local moment-bias bounds
\[
  \beta_j(X)
  \ge
  |M_j(X)-M_j(\gamma_H(\Pi_H(X)))|,
  \qquad
  1\le j\le k.
\]
These terms are added to the corresponding query error bounds.  This fallback is useful in practice: the geometric tail from \Cref{sec:method} can be reused for higher moments without changing the tail family, at the cost of a bounded local bias for \(j\ge2\), because it preserves only the first moment; the moment-preserving route instead adopts a richer tail that matches all \(k\) residual moments.

\paragraph{Maximum-entropy tail hierarchy.}
A principled moment-parametric family is the maximum-entropy tail
under constraints on the tail-distribution moments of the prefix--tail surrogate (\Cref{moment:Eqn:MomentMatchingConstraints}), where $R_X$ denotes the residual
after the stored prefix:
\[
  \tau_\theta(r)
  =
  \frac{1}{Z(\theta)}
  \exp\!\left(\sum_{i=1}^{k}\theta_i r^i\right),
  \qquad r\in\mathbb N,
\]
for
exponential-family parameters
\(\theta\) such that the normalizer \(Z(\theta)\) is finite. 
Abstraction solves the moment-matching equations
\begin{equation}
  \label{moment:Eqn:MomentMatchingConstraints}
  \mathbb E_{\tau_\theta}[R^j]=\mathbb E[R_X^j],
  \qquad 1\le j\le k.
\end{equation}
Equivalently, when the exponential-family dual is well behaved, it solves \(\nabla_\theta \log Z(\theta)=(\mu_1,\ldots,\mu_k)\), where \(\mu_j=\mathbb E[R_X^j]\).
 
The geometric tail is the \(k=1\) member of this hierarchy.  When \(k=1\),
\[
  \tau_{\theta_1}(r)\propto e^{\theta_1 r}.
\]
Setting \(\lambda=e^{\theta_1}\) gives \(\tau(r)=(1-\lambda)\lambda^r\), and matching the residual mean yields the closed form \(\lambda=\mu_1/(1+\mu_1)\).
 
For general \(k\), the fit is a convex moment-matching problem: minimizing \(\Phi(\theta)=\log Z(\theta)-\sum_{j=1}^{k}\theta_j\mu_j\), whose gradient is \(\bigl(\mathbb E_\theta[R^j]-\mu_j\bigr)_{j=1}^{k}\) and whose Hessian is the covariance matrix of \((R,\ldots,R^k)\), hence positive definite.  A damped Newton iteration therefore converges in a few steps, with \(Z(\theta)\) and its moments evaluated by a truncated sum carrying a sound bound on the discarded tail, so the reported moments---and the
prefix--tail surrogate---stay
error-bounded.  The kernel fixes only the contract: a tail-fitting routine returns either an exact match of the first \(k\) residual moments or a sound local bias bound \(\beta_j\).  The instantiations in this paper exercise \(k=1\), where the fit is the closed form \(\lambda=\mu_1/(1+\mu_1)\) above, and \(k=2\), detailed next; higher orders reuse the same contract with a
more expressive maximum-entropy tail.

\paragraph{Fitting the \(k=2\) tail.}
For the quadratic member \(\tau_\theta(r)\propto\exp(\theta_1 r+\theta_2 r^2)\), the parameters range over the normalizable domain \(\theta_2<0\), and the Newton step uses a backtracking line search to stay inside it.  The normalizer and its moments specialize to
\[
  Z(\theta)=\sum_{r\ge0} a_r,\qquad
  \mathbb E_\theta[R^m]=\frac{S_m}{S_0},\qquad
  S_m=\sum_{r\ge0} r^m a_r,\qquad a_r=\exp(\theta_1 r+\theta_2 r^2),
\]
each evaluated up to a cutoff with a geometric remainder bound from the eventual monotone ratio \(a_{r+1}/a_r<1\).
The geometric tail reappears as the boundary case \(\theta_2=0\), reached exactly when \(\mu_2=\mu_1+2\mu_1^2\);
off this boundary, the fit returns
a strictly quadratic tail,
and when the target residual moments lie outside the range achievable by this family (for instance at the boundary $\theta_2\to0^-$, where the quadratic family degenerates to the geometric one and the second moment can no longer be raised freely), the analyzer switches to the bias-bounded route: it keeps the simpler geometric tail of \Cref{sec:method} and reports the resulting local moment biases $\beta_j$ instead.  The two routes thus use different tail families: the moment-preserving route the quadratic maximum-entropy tail, the bias-bounded route the geometric one.

\subsection{Moment-Aware Abstract Semantics}
\label{moment:sec:moment-semantics}

This section lifts the error-bounded abstract semantics of \Cref{sec:method} to all moment orders up to \(k\).  The main structure is unchanged: the same bottom-up judgment over cost expressions, the same exact-atom and operator rules, the same separation of a distributional bound from a query bound. What changes is that the two scalar error bounds of the expectation instance become \emph{vectors}---one component per moment order \(1\le j\le k\)---and the Wasserstein-1 distance \(d_W\) is replaced order-by-order by the moment-weighted distance \(d_j\).  Concretely, the scalar judgment \(\vdash_H e\Downarrow(s, \epsilon,\delta)\) of \Cref{sec:method}, whose \(\epsilon\) bounds the distributional error and \(\delta\) the query error, becomes a judgment carrying a distributional bound \(\epsilon^{(j)}\) and a moment-query bound \(\delta^{(j)}\) for each order \(j\).

The moment-parametric judgment is
$
  \vdash_H^{(k)} e\Downarrow(s,\boldsymbol\epsilon,\boldsymbol\delta),
$
where
$
  \boldsymbol\epsilon=(\epsilon^{(1)},\ldots,\epsilon^{(k)})
$
and
$
  \boldsymbol\delta=(\delta^{(1)},\ldots,\delta^{(k)}).
$
Its intended meaning is
\[
  d_j(\llbracket e\rrbracket,\gamma_H(s))\le\epsilon^{(j)}
  \qquad 
  |M_j(\llbracket e\rrbracket)-\widehat M_j(s)|\le\delta^{(j)},
  \qquad
  1\le j\le k,
\]
where
\[
  \widehat M_j(s)=M_j(\gamma_H(s)).
\]

The exact atom rule is the vector form of the expectation rule:
\[
  s=\Pi_H(X),
  \qquad
  \epsilon^{(j)}=\bar\eta_H^{(j)}(X),
  \qquad
  \delta^{(j)}=\beta_j(X)
  \qquad 1\le j\le k,
\]
where \(\bar\eta_H^{(j)}\) is a sound upper bound on the \(d_j\)-abstraction 
error,
constructed as in \Cref{sec:prefix-tail-domain} with the \(d_j\)-weight. If abstraction preserves moments up to order \(k\), then all \(\beta_j\) are zero.

For an operator \(e=\omega(e_1,\ldots,e_m)\), the analyzer computes the intermediate distribution
\[
  \widetilde X
  =F_\omega(\gamma_H(s_1),\ldots,\gamma_H(s_m)),
\]
abstracts it to \(s=\Pi_H(\widetilde X)\), and records abstraction 
errors
\(\eta^{(j)}\)---and, where the abstraction is not moment-preserving, local moment biases \(\beta^{(j)}\).  The operator contract is the vector generalization of the expectation contract: for each \(1\le j\le k\), it supplies coefficients \(L^{(j,i')}_{\omega,i}\), \(A^{(j)}_{\omega,i}\), and \(C^{(j,i')}_{\omega,i}\) with \(i'\le j\).  The parent bounds are
\[
  \epsilon^{(j)}
  =
  \sum_{i=1}^{m}\sum_{i'=1}^{j}
  L^{(j,i')}_{\omega,i}\,\epsilon_i^{(i')}
  +
  \eta^{(j)},
\]
and
\[
  \delta^{(j)}
  =
  \sum_{i=1}^{m}A^{(j)}_{\omega,i}\,\epsilon_i^{(j)}
  +
  \sum_{i=1}^{m}\sum_{i'=1}^{j}
  C^{(j,i')}_{\omega,i}\,\delta_i^{(i')}
  +
  \beta^{(j)}.
\]
For same-index operators, these coefficients are order-diagonal and constant (only \(i'=j\) contributes, with the constants of Proposition~\ref{moment:prop:order-uniform}); for convolutional operators they are genuinely cross-order and state-dependent, as detailed in \Cref{moment:sec:higher-operator-contracts}.

\begin{theorem}[Moment-parametric soundness]
\label{moment:thm:moment-soundness}
Fix \(k\ge1\).  For any finite closed
cost expression \(e\) (one with no free variables, so that its semantics
\(\llbracket e\rrbracket\) is a fully determined distribution)
built from operators that provide moment-aware distributional and query-error contracts up to order \(k\), if
\[
  \vdash_H^{(k)} e\Downarrow(s,\boldsymbol\epsilon,\boldsymbol\delta),
\]
then, for every \(1\le j\le k\),
\[
  d_j(\llbracket e\rrbracket,\gamma_H(s))\le\epsilon^{(j)}
  \qquad\text{and}\qquad
  |M_j(\llbracket e\rrbracket)-\widehat M_j(s)|\le\delta^{(j)}.
\]
The theorem remains valid when ideal abstraction
errors
and moment biases are replaced by sound computable upper bounds.
\end{theorem}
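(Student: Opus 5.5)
The proof is a structural induction on the closed cost expression \(e\), proving all \(k\) component inequalities simultaneously. It mirrors \Cref{method:thm:generic-soundness}. The only new feature is that the convolutional contracts are cross-order. For that reason the induction hypothesis at each child must supply the whole vectors \(\boldsymbol\epsilon_i\) and \(\boldsymbol\delta_i\), not just the \(j\)-th components. We do not need to induct on \(j\) as well, because the contract for order \(j\) only refers to child quantities of order \(i'\le j\), and these are all available from the child hypotheses at once.

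\emph{Base case (atoms and promoted oracles).} Here \(s=\Pi_H(X)\) with \(X=\llbracket e\rrbracket\) exact, \(\epsilon^{(j)}=\bar\eta^{(j)}_H(X)\) and \(\delta^{(j)}=\beta_j(X)\).
\begin{itemize}
  \item The distributional bound holds because \(\bar\eta^{(j)}_H\) upper-bounds \(d_j(X,\gamma_H(\Pi_H(X)))\). The truncated construction of \Cref{method:sec:prefix-tail-domain} with weight \((t+1)^j-t^j\) is sound by the survival identity \(M_j(X)=\sum_t((t+1)^j-t^j)S_X(t)\), which lets the two residual tails beyond \(J\) be charged.
  \item The query bound holds because \(\beta_j(X)\) bounds \(|M_j(X)-\widehat M_j(s)|\), using \(\widehat M_j(s)=M_j(\gamma_H(s))\). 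In the moment-preserving case \(\beta_j=0\) by the matching equations \labelcref{moment:Eqn:MomentMatchingConstraints}.
\end{itemize}

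\emph{Inductive case \(e=\omega(e_1,\dots,e_m)\).} Write \(X_i=\llbracket e_i\rrbracket\), \(Y_i=\gamma_H(s_i)\) and \(\widetilde X=F_\omega(\vec Y)\). By the independence discipline, \(\llbracket e\rrbracket=F_\omega(\vec X)\).
\begin{itemize}
  \item \emph{Distributional bound.} Since \(d_j\) is a weighted \(\ell^1\) distance on survival functions, it satisfies the triangle inequality. Hence
  \[d_j(\llbracket e\rrbracket,\gamma_H(s))\le d_j(F_\omega(\vec X),F_\omega(\vec Y))+d_j(\widetilde X,\gamma_H(s)).\]
  The second term is at most \(\eta^{(j)}\). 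For the first term, the order-\(j\) distributional contract gives at most \(\sum_i\sum_{i'\le j}L^{(j,i')}_{\omega,i}d_{i'}(X_i,Y_i)\). The coefficients are nonnegative, so monotonicity and the hypotheses \(d_{i'}(X_i,Y_i)\le\epsilon_i^{(i')}\) give \(\epsilon^{(j)}\).
  \item \emph{Query bound.} Similarly,
  \[|M_j(\llbracket e\rrbracket)-\widehat M_j(s)|\le|M_j(F_\omega(\vec X))-M_j(\widetilde X)|+|M_j(\widetilde X)-M_j(\gamma_H(s))|.\]
  The second term is at most \(\beta^{(j)}\). The first is handled by the query contract with the \(A\)- and \(C\)-coefficients, followed by the child hypotheses.
  \item \emph{Replacing ideal losses.} Substituting larger sound upper bounds for the ideal losses preserves everything, because both recurrences are monotone (nonnegative combinations).
\end{itemize}

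The main obstacle is that for \(+\) and \(\mathsf{Repeat}_N\) the coefficients \(L^{(j,i')}\) and \(C^{(j,i')}\) are state-dependent. For example, the convolution bound \(d_j(X+Z,Y+Z)\le\sum_a\binom{j}{a}M_{j-a}(Z)\,d_a(X,Y)\) involves moments of the true operands, which the analyzer does not know. I would handle this in three steps.
\begin{itemize}
  \item First, reduce the two-argument change to one argument at a time via the hybrid \(F(X_1,X_2)\to F(Y_1,X_2)\to F(Y_1,Y_2)\). Each step is then an instance of the displayed single-argument bound.
  \item Second, require that the contract coefficients used by the analyzer be evaluated as upper endpoints of intervals for the operand moments. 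These intervals are derived from lower-order components already certified by the induction hypothesis: \(M_{j-a}(X_i)\le \widehat M_{j-a}(s_i)+\delta_i^{(j-a)}\), with \(j-a<j\).
  \item Third, by outward-rounded interval arithmetic, the evaluated coefficient dominates the true one. Since every \(d_a\) term is nonnegative, the bound persists.
\end{itemize}
For \(\mathsf{Repeat}_N\) at order \(j\) I would expand \((\sum_{\ell\le N}X^{(\ell)})^j\) by conditioning on \(N\). Replacing the copies one at a time and using \(\mathbb E[N^j]<\infty\) gives cross terms that are controlled in the same way. This is the one place where the argument genuinely goes beyond the vectorized copy of \Cref{method:thm:generic-soundness}.
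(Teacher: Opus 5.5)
Your proposal is correct and follows the paper's proof. Both run a structural induction that vectorizes the expectation-instance argument: the order-$j$ contracts plus the triangle inequality add $\eta^{(j)}$ and $\beta^{(j)}$, and monotonicity lets sound upper bounds replace the ideal losses. Your extra treatment of the state-dependent convolutional coefficients, which encloses operand moments using the children's already-certified lower-order bounds, matches the paper's ``discharging the contract'' remark rather than its proof; your general-order $\mathsf{Repeat}_N$ sketch goes beyond what the theorem requires, since the contracts are hypotheses there.
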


\begin{proof}
The proof is the vectorized version of Theorem~\ref{thm:generic-soundness}.  The induction hypotheses provide child bounds for all raw moments up to order \(k\).  The moment-aware distributional contracts propagate the child \(d_{i'}\)-bounds to an output \(d_j\)-bound, and the triangle inequality adds the local abstraction
error
\(\eta^{(j)}\).  The query contracts propagate moment-query errors through the operator, and \(\beta^{(j)}\) accounts for local abstraction bias.  Monotonicity of the error-bound recurrences permits replacing ideal local 
errors
by sound upper bounds.
\end{proof}

For \(k=1\), this judgment and theorem reduce to the expectation instance in \Cref{sec:method}: \(d_1=d_W\), the vectors have one component, and the geometric abstraction has \(\beta^{(1)}=0\).

\paragraph{Exact prefix is moment-invariant.}
The prefix-completeness contract~(P) of \Cref{sec:operators-contracts} does not refer to the query.  Lemma~\ref{lem:exact-prefix} therefore holds verbatim for the moment-parametric analysis: under a uniform horizon \(H\) and~(P), every summary stores the true prefix up to \(H\), simultaneously for every moment order \(1\le j\le k\).  Hence, every \(d_j\)-abstraction 
Hence, for every order \(j\), all \(d_j\)-abstraction error originates in the compressed tail.
This
effect also explains
why higher moments are more tail-sensitive---because the prefix is exact, raising the moment order only increases the weight placed on the approximated tail.
In particular, when the bias-bounded route is used, the local moment bias \(\beta_j\) it reports is likewise a tail-only quantity: the exact prefix contributes nothing to it. 
(The moment-preserving route has \(\beta_j=0\),
since it matches all \(k\) residual moments exactly and so introduces no local moment bias.)

\subsection{Operator Contracts for Higher Moments}
\label{moment:sec:higher-operator-contracts}

The operator interface is unchanged,
compared
to \Cref{sec:operators-contracts}, but the shape of the contracts splits the operators into two classes.  \emph{Same-index} operators (mixture, $\max$, and $\min$) act pointwise on survival functions at a single time index, and their distributional contracts are order-uniform.  \emph{Convolutional} operators ($+$ and random repetition) combine inputs across time indices, and
computing the $j^\textrm{th}$ output moment requires lower-order moments of the inputs once $j \ge 2$.

\paragraph{Same-index operators are order-uniform.}
The following bounds hold with the \emph{same} constants for every moment order.

\begin{proposition}[Order-uniform contraction]
\label{moment:prop:order-uniform}
Let the inputs be independent.  For every \(j\ge1\),
\[
  d_j(\mathsf{Mix}_w(\vec X),\mathsf{Mix}_w(\vec Y))
  \le
  \sum_i w_i\,d_j(X_i,Y_i),
\]
and, for the binary maximum and minimum,
\[
  d_j(\max(X,Z),\max(Y,Z))\le d_j(X,Y),
  \qquad
  d_j(\min(X,Z),\min(Y,Z))\le d_j(X,Y).
\]
The constants (\(w_i\) for mixture, \(1\) for maximum and minimum) do not depend on \(j\).  The two-sided forms, such as \(d_j(\max(X,Y),\max(X',Y'))\le d_j(X,X')+d_j(Y,Y')\), follow by the triangle inequality.
\end{proposition}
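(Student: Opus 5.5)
The whole argument runs on survival functions. Every $d_j$ has the form $\sum_{t\ge0} w_j(t)\,|S_X(t)-S_Y(t)|$ with a nonnegative weight $w_j(t)=(t+1)^j-t^j$ that does not depend on the distributions. So it suffices to prove, for each operator, a \emph{pointwise} inequality in $t$ between survival differences. Multiplying by $w_j(t)\ge0$ and summing over $t$ then gives the claim for every $j$ at once, with constants independent of $j$. The constants come from the pointwise inequality, not from the weight; that is why they are order-uniform.

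\emph{Mixture.} Since $\mathsf{Mix}_w(\vec X)=X_I$ with $I$ independent of the $X_i$, we have $S_{\mathsf{Mix}_w(\vec X)}(t)=\sum_i w_i S_{X_i}(t)$. Hence
\[
  |S_{\mathsf{Mix}_w(\vec X)}(t)-S_{\mathsf{Mix}_w(\vec Y)}(t)|\le\sum_i w_i|S_{X_i}(t)-S_{Y_i}(t)|
\]
by the triangle inequality. Weighting by $w_j(t)$ and exchanging the finite sum over $i$ with the sum over $t$ (justified by nonnegativity, Tonelli) yields $\sum_i w_i d_j(X_i,Y_i)$.

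\emph{Maximum.} By independence of $Z$ from $X$ and from $Y$, $1-S_{\max(X,Z)}(t)=F_X(t)F_Z(t)$, where $F=1-S$ denotes the CDF. Therefore
\[
  |S_{\max(X,Z)}(t)-S_{\max(Y,Z)}(t)|=F_Z(t)\,|F_X(t)-F_Y(t)|\le|S_X(t)-S_Y(t)|,
\]
using $0\le F_Z\le1$.

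\emph{Minimum.} Similarly $S_{\min(X,Z)}(t)=S_X(t)S_Z(t)$, so the survival difference equals $S_Z(t)\,|S_X(t)-S_Y(t)|\le|S_X(t)-S_Y(t)|$.

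In both the maximum and minimum cases, summing against $w_j(t)$ gives $d_j(\cdot,\cdot)\le d_j(X,Y)$. The $k$-ary versions and the mixture follow the same pattern.

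\emph{Two-sided forms.} These come from the triangle inequality for $d_j$. Note that $d_j$ is a weighted $\ell^1$ distance between survival functions, so it satisfies the triangle inequality, and it is symmetric in its operands. Write
\[
  d_j(\max(X,Y),\max(X',Y'))\le d_j(\max(X,Y),\max(X',Y))+d_j(\max(X',Y),\max(X',Y')),
\]
apply the one-sided bound to each term (with $Z=Y$ and $Z=X'$ respectively, using symmetry of $\max$), and do the same for $\min$. Each application needs the independence of the held-fixed operand from the varying one. The only point needing care is this independence bookkeeping: it should be read distributionally, as independent couplings of each pair. Because $d_j$ depends only on the marginals of the operator outputs, and those are determined by the product formulas, no joint coupling across the two sides is needed.

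The main obstacle is mild: making sure the sums converge or are handled when $d_j(X,Y)=\infty$. If the right-hand side is infinite the inequality is trivial. Otherwise every termwise comparison is between nonnegative series, so no rearrangement issues arise.
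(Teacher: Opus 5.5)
Your proof is correct and matches the paper's argument: a pointwise survival-function estimate (linearity in the weights for $\mathsf{Mix}_w$, the factor $F_Z(t)\in[0,1]$ for $\max$, the factor $S_Z(t)\in[0,1]$ for $\min$), after which the nonnegative weight $(t+1)^j-t^j$ factors out, so the constants do not depend on $j$. Your remarks on the two-sided forms via the triangle inequality, on independence as a property of the intermediate pairs, and on the case $d_j(X,Y)=\infty$ fill in details the paper leaves implicit.
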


\begin{proof}
For independent variables, \(\Pr(\max(X,Z)>t)=1-F_X(t)F_Z(t)\), so, writing \(S=1-F\) for survival functions,
\[
  \Pr(\max(X,Z)>t)-\Pr(\max(Y,Z)>t)
  =F_Z(t)\bigl(F_Y(t)-F_X(t)\bigr)
  =F_Z(t)\bigl(S_X(t)-S_Y(t)\bigr).
\]
Because \(0\le F_Z(t)\le1\), the pointwise survival gap can only shrink: 
\[\left|\Pr(\max(X,Z)>t)-\Pr(\max(Y,Z)>t)\right|\le|S_X(t)-S_Y(t)|\]  
Multiplying by the non-negative weight \((t+1)^j-t^j\) and summing over \(t\) gives the maximum bound.  The minimum case is symmetric, using \(\Pr(\min(X,Z)>t)=S_X(t)S_Z(t)\); the mixture bound is immediate from linearity of the survival function in the mixture weights.  In each case, the weight \((t+1)^j-t^j\) factors out of the pointwise estimate, so the constant is independent of \(j\).
\end{proof}

The moment query is moment-compositional for mixtures, but not for maximum or minimum: \(M_j(\max(X,Y))\) and \(M_j(\min(X,Y))\) are not determined by the input raw moments alone, at any order \(j\).  Maximum and minimum therefore
propagate
distributional (\(d_j\)) error bounds, exactly as at \(j=1\).

\paragraph{Convolutional operators introduce dependencies across moment orders.}
Sums and random repetitions are expectation-compositional at \(j=1\), but become cross-moment operators for \(j\ge2\).  For independent \(X,Y\),
\[
  M_2(X+Y)=M_2(X)+2M_1(X)M_1(Y)+M_2(Y),
\]
and for \(T=\mathsf{Repeat}_N(X)=\sum_{i=1}^{N}X^{(i)}\), with \(N\) independent of the copies of \(X\),
\[
  M_2(T)
  =
  \mathbb E[N]\,M_2(X)
  +
  \bigl(\mathbb E[N^2]-\mathbb E[N]\bigr)M_1(X)^2,
\]
which for geometric retry (\(N\sim\mathsf{Geom}_{\ge1}(a)\)) specializes to
\[
  M_2(\mathsf{Retry}_a(X))
  =
  \frac{1}{a}M_2(X)
  +
  \frac{2(1-a)}{a^2}M_1(X)^2.
\]
In general, \(M_j\) of a sum or repeated sum is a polynomial in the lower raw
moments of the operands.

These identities are not linear in the operands' moments, so the query error bound for a convolutional operator is \emph{not} a fixed linear combination of child query errors.  At order two, the bilinear cross-term contributes an error whose coefficient is itself a moment of an operand: with \(U_1^{(i)}\) an error-bounded upper bound on \(M_1\) of operand \(i\),
\[
  \bigl|M_2(\widetilde X+\widetilde Y)-M_2(X+Y)\bigr|
  \le
  \delta_X^{(2)}+\delta_Y^{(2)}
  +2\bigl(U_1^{(Y)}\delta_X^{(1)}+U_1^{(X)}\delta_Y^{(1)}\bigr).
\]
Thus, the order-\(j\) coefficients \(C^{(j,i')}_{\omega,i}\) for a convolutional operator denote state-dependent enclosures, evaluated on the operands' error-bounded lower-moment intervals, rather than fixed constants; the implementation discharges them by sound interval arithmetic on the moment polynomial.  This dependence on a \emph{vector} of error-bounded lower moments is the main technical cost of moving from expectations to higher moments.

The same coupling appears at the distributional level.
For expectations, convolution contracts the distance: $d_1(X+Z,Y+Z)\le d_1(X,Y)$, so it is \emph{nonexpansive} (it never increases the distance).  At order $j\ge2$ this fails---convolution can increase $d_j$, so the distributional contract of a convolutional operator is cross-order as well, as the following proposition records.

\begin{proposition}[Cross-order contraction for convolutional operators]
\label{moment:prop:conv-dist}
Let the inputs be independent, and write $M_r(\cdot)=\mathbb{E}[(\cdot)^r]$ for raw moments, with the convention $M_0\equiv 1$.
\begin{enumerate}
\item \emph{(Sequential composition.)} If $Z$ is independent of $X$ and $Y$, then for every $j\ge 1$
\[
  d_j(X+Z,\,Y+Z)\;\le\;\sum_{a=1}^{j}\binom{j}{a}\,M_{j-a}(Z)\,d_a(X,Y).
\]
\item \emph{(Random repetition, order two.)} If $N$ is independent of the copies of $X$ and $Y$, then
\[
  d_2\!\left(\mathrm{Repeat}_N(X),\,\mathrm{Repeat}_N(Y)\right)
  \;\le\;\mathbb{E}[N]\,d_2(X,Y)
  \;+\;\mathbb{E}\!\left[N(N-1)\right]\bigl(M_1(X)+M_1(Y)\bigr)\,d_1(X,Y),
\]
which for geometric retry $N\sim\mathrm{Geom}_{\ge 1}(a)$ specializes to
\[
  d_2\!\left(\mathrm{Retry}_a(X),\,\mathrm{Retry}_a(Y)\right)
  \;\le\;\tfrac{1}{a}\,d_2(X,Y)
  \;+\;\tfrac{2(1-a)}{a^{2}}\bigl(M_1(X)+M_1(Y)\bigr)\,d_1(X,Y).
\]
\end{enumerate}
In each case, the coefficient multiplying $d_j$ on the right---the term that maps the
input $d_j$-distance to the output $d_j$-distance,
i.e.\ the diagonal entry of the cross-order coefficient matrix
$(C^{(j,i')})_{i'}$---is
the same constant as at $j=1$ ($1$ for the sum, $\mathbb{E}[N]$ for repetition), while the remaining terms are strictly lower order ($a<j$) and carry raw moments of the operands. The two-sided forms (both operands perturbed) follow by the triangle inequality, 
as for the same-index operators.
Thus, unlike the operators of Proposition~\ref{moment:prop:order-uniform},
convolutional operators can increase $d_j$ for $j\ge2$: the order-$j$ distributional
error of the output is controlled only by the \textbf{full vector} of lower-order
distributional errors of the inputs.
\end{proposition}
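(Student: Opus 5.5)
The plan is to work directly with survival functions, since $d_j$ is a weighted $\ell^1$ norm of the survival gap, $d_j(X,Y)=\sum_{t\ge0}w_j(t)\,|S_X(t)-S_Y(t)|$ with $w_j(t)=(t+1)^j-t^j\ge0$. Two structural facts will be used throughout. First, $d_j$ satisfies the triangle inequality, because it is a weighted $\ell^1$ distance. Second, mixtures are handled by linearity: if $U=\mathsf{Mix}$ of $U_n$ and $V=\mathsf{Mix}$ of $V_n$ with common weights, then $d_j(U,V)\le\sum_n \Pr(N=n)\,d_j(U_n,V_n)$. This is the mixture case of Proposition~\ref{moment:prop:order-uniform}, applied with the weights indexed by the value of $N$. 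All sums involved have nonnegative terms, so interchanges of summation are justified by Tonelli, and the stated bounds hold in $[0,\infty]$ (finite when the operand moments are finite).

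\emph{Part 1 (sequential composition).} Condition on $Z$. With the convention $S_X(s)=1$ for $s<0$, independence gives
\[
S_{X+Z}(t)-S_{Y+Z}(t)=\sum_{z}\Pr(Z=z)\bigl(S_X(t-z)-S_Y(t-z)\bigr),
\]
and the summand vanishes for $t<z$. Taking absolute values inside, substituting $s=t-z$, and multiplying by $w_j$ yields
\[
d_j(X+Z,Y+Z)\le\sum_z\Pr(Z=z)\sum_{s\ge0}w_j(s+z)\,|S_X(s)-S_Y(s)|.
\]
The key algebraic step is the binomial expansion
\[
w_j(s+z)=(s+1+z)^j-(s+z)^j=\sum_{a=0}^{j}\binom{j}{a}z^{j-a}\bigl((s+1)^a-s^a\bigr),
\]
in which the $a=0$ term cancels, leaving $\sum_{a=1}^{j}\binom{j}{a}z^{j-a}w_a(s)$ with all coefficients nonnegative. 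Averaging over $z$ turns $z^{j-a}$ into $M_{j-a}(Z)$, and summing over $s$ turns each $w_a$-sum into $d_a(X,Y)$. This gives exactly the stated bound.

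\emph{Part 2 (random repetition, order two).} Fix $N=n$ and use a hybrid (telescoping) argument. Let $T_k=Y^{(1)}+\cdots+Y^{(k)}+X^{(k+1)}+\cdots+X^{(n)}$, so that $T_0=\sum X^{(i)}$ and $T_n=\sum Y^{(i)}$. Consecutive hybrids differ only in the $k$-th summand, and the remaining summands form an independent $W_k$ with
\[
M_1(W_k)=(k-1)M_1(Y)+(n-k)M_1(X).
\]
Part 1 with $j=2$ gives $d_2(T_{k-1},T_k)\le d_2(X,Y)+2M_1(W_k)\,d_1(X,Y)$. Summing over $k=1,\dots,n$ by the triangle inequality, the coefficient of $d_1$ becomes
\[
2\sum_{k=1}^{n}\bigl[(k-1)M_1(Y)+(n-k)M_1(X)\bigr]=n(n-1)\bigl(M_1(X)+M_1(Y)\bigr).
\]
Hence $d_2(T_0,T_n)\le n\,d_2(X,Y)+n(n-1)\bigl(M_1(X)+M_1(Y)\bigr)d_1(X,Y)$. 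Averaging over the law of $N$ with the mixture inequality yields the general bound. For $N\sim\mathsf{Geom}_{\ge1}(a)$ we have $\mathbb E[N]=1/a$ and $\mathbb E[N(N-1)]=2(1-a)/a^2$, which gives the specialization.

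Finally, the diagonal coefficients are read off directly: the $a=j$ term of Part 1 has coefficient $M_0(Z)=1$, and Part 2 has $\mathbb E[N]$. The two-sided forms follow by perturbing one operand at a time and applying the triangle inequality.

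I expect the main obstacle to be the bookkeeping in Part 1: checking that the binomial re-expansion of the shifted weight really decomposes into nonnegative multiples of the lower-order weights $w_a$, and handling the boundary $t<z$ where the shifted survival gap is zero. Once that identity is established, Part 2 reduces to the telescoping sum and a routine moment computation for $N$.
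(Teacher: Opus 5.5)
Your proposal is correct and follows the paper's proof essentially step for step. Part~1 uses the same survival-function convolution together with the binomial re-expansion $w_j(s+z)=\sum_{a=1}^{j}\binom{j}{a}z^{j-a}w_a(s)$, and Part~2 uses the same one-copy-at-a-time hybrid argument, summed via Part~1 at $j=2$ and then averaged over the law of $N$. Your explicit computation of the $d_1$ coefficient, $2\sum_{k}\bigl[(k-1)M_1(Y)+(n-k)M_1(X)\bigr]=n(n-1)\bigl(M_1(X)+M_1(Y)\bigr)$, is slightly cleaner than the paper's shorthand for that step.
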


\begin{proof}
Because $X,Z\ge 0$ are $\mathbb{N}$-valued, set $S_X(s)=1$ for $s<0$. By independence $S_{X+Z}(t)=\sum_{z}\Pr(Z=z)\,S_X(t-z)$, and likewise for $Y+Z$; terms with $t-z<0$ cancel because both survival values equal $1$. Writing $w_a(s)=(s+1)^a-s^a$ and using $(s{+}z{+}1)^j-(s{+}z)^j=\sum_{a=1}^{j}\binom{j}{a}z^{\,j-a}\,w_a(s)$ (the $a=0$ term vanishes, because $w_0\equiv 0$), substitute $s=t-z$:
\[
\begin{aligned}
  d_j(X+Z,Y+Z)
  &=\sum_{t\ge 0} w_j(t)\,
     \Bigl|\textstyle\sum_z \Pr(Z{=}z)\bigl(S_X(t{-}z)-S_Y(t{-}z)\bigr)\Bigr|\\
  &\le \sum_z \Pr(Z{=}z)\sum_{s\ge 0} w_j(s+z)\,\bigl|S_X(s)-S_Y(s)\bigr|\\
  &= \sum_z \Pr(Z{=}z)\sum_{a=1}^{j}\binom{j}{a}z^{\,j-a}
     \sum_{s\ge 0} w_a(s)\,\bigl|S_X(s)-S_Y(s)\bigr|
  \;=\;\sum_{a=1}^{j}\binom{j}{a}M_{j-a}(Z)\,d_a(X,Y),
\end{aligned}
\]
which proves~(1). For a fixed count $n$, replace the copies of $X$ by copies of $Y$ one at a time; the $k^\textrm{th}$ step perturbs a single copy with the other $n-1$ copies acting as the fixed operand $Z_k$, whose mean is $(k-1)M_1(Y)+(n-k)M_1(X)$. Summing the order-two case of~(1) over $k=1,\dots,n$, with $\sum_{k=1}^{n}\bigl[(k-1)+(n-k)\bigr]=n(n-1)$, gives
\[
  d_2\!\Bigl(\textstyle\sum_{i=1}^{n}X^{(i)},\sum_{i=1}^{n}Y^{(i)}\Bigr)
  \le n\,d_2(X,Y)+n(n-1)\bigl(M_1(X)+M_1(Y)\bigr)d_1(X,Y).
\]
Averaging over $N$ through $S_{\mathrm{Repeat}_N(X)}=\sum_n\Pr(N{=}n)\,S_{\sum^n X}$ and the triangle inequality yields the $\mathrm{Repeat}_N$ bound; substituting $\mathbb{E}[N]=1/a$ and $\mathbb{E}[N(N-1)]=2(1-a)/a^2$ gives the geometric specialization. \qedhere
\end{proof}

\noindent\emph{Discharging the contract.} In the order-$j$ recurrence of \Cref{moment:sec:moment-semantics}, Proposition~\ref{moment:prop:conv-dist} supplies the cross-order distributional coefficients $L^{(j,i')}_{\omega,i}$ for $\omega\in\{+,\mathrm{Repeat}\}$: the diagonal $L^{(j,j)}$ is the constant already used at $j=1$, while each off-diagonal $L^{(j,i')}_{\omega,i}$ with $i'<j$ is a raw moment of the operands rather than a fixed constant. Exactly as for the query coefficients $C^{(j,i')}_{\omega,i}$ above, these operand moments are not known exactly but are enclosed by the error-bounded lower-moment intervals the analysis already maintains for the children (with $M_0=1$), and the implementation discharges them by the same sound interval arithmetic on the moment polynomial. This
approach
is the distributional counterpart of the state-dependent query coefficients, and is what the operator rule of Section~\ref{moment:sec:moment-semantics} invokes whenever a convolutional node sits below an extremal node in a variance ($k=2$) analysis.

\paragraph{Discussion: the difficulty shifts for orders $\geq 2$.}
The two classes
of operators discussed above exhibit a precise role-reversal relative to the expectation-analysis case.
Maximum and minimum---the operators responsible for the compositionality gap at \(k=1\), where \(M_1(\max(X,Y))\) is not a function of the input means---are exactly the operators whose distributional contracts are order-uniform (Proposition~\ref{moment:prop:order-uniform}): a single constant controls every moment. Conversely, sum and random repetition, which are trivially expectation-compositional at \(k=1\), are the operators whose query contracts
involve cross-moment-order dependencies, for \(j\ge2\).
The source of the difficulty therefore 
shifts
with the question being asked: the extremal operators are hard for expectation-compositionality yet uniform across moment orders, whereas the convolutional operators are easy for the mean yet are the origin of cross-order
dependencies.

\subsection{Variance and Derived Queries}
\label{moment:sec:variance-query}

Variance is handled as a derived query, not as a primitive summary component. To bound the variance, the analysis runs with \(k=2\) and 
obtains bound on
both \(M_1\) and \(M_2\).  The surrogate variance is 
$
  \widehat{\mathrm{Var}}(s)
  =
  \widehat M_2(s)-\widehat M_1(s)^2.
$
If the error bounds imply
$
  M_1(\llbracket e\rrbracket)\in[L_1,U_1]
$
and
$
  M_2(\llbracket e\rrbracket)\in[L_2,U_2],
$
then a sound variance interval is
$
  \mathrm{Var}(\llbracket e\rrbracket)
  \in
  \left[
    \max(0,L_2-U_1^2),
    \ U_2-L_1^2
  \right].
$
Thus, a variance query necessarily carries a mean error bound as well.  The reported curve may show a variance estimate and interval, but internally that interval is derived from error-bounded first and second raw moments.  The interval is sound but can be conservative: it combines the \(M_1\) and \(M_2\) error bounds at their worst-case corners and does not exploit that both come from the same surrogate.

\subsection{Cost of Increasing the Moment Order}
\label{moment:sec:moment-cost}

The framework accommodates arbitrary moment orders, but the analysis cost grows with the order in three specific ways.

\paragraph{Larger summaries and harder abstraction.} 
The \(k=1\)
geometric-distribution
instance has a single tail-shape parameter and a closed-form abstraction.  A moment-preserving abstraction up to order \(k\) requires a tail family with enough degrees of freedom to match the first \(k\) residual moments. Maximum-entropy tails provide a principled hierarchy, but for \(k>1\) the fitting problem is generally numerical.

\paragraph{Tail-sensitive error bounds.}
The weight in \(d_j\) is \((t+1)^j-t^j\), which grows like \(j t^{j-1}\).  Therefore, a small mismatch far in the tail can be inexpensive for expectation but expensive for variance or higher moments.  Useful higher-moment error bounds may require larger prefixes, more accurate local error bounds, exact-subproblem promotion, or richer tail families.

\paragraph{Vector-valued propagation.}
For same-index operators, the
generalization to highermoments
is almost direct.  For sums and random repetitions, moment propagation depends on lower moments and therefore requires vector error bounds.
(In addition,
the refinement machinery of \Cref{sec:refinement} is unchanged in structure, but
precision is assessed
against the derived query interval for the selected moment query rather than against a single expectation error bound.)

\section{Additional Details for Refinement}
\label{refinement:sec:refinement}

This section defines how to refine an analysis result that is too coarse.  A run of the analyzer reports an interval, not a final answer; when that interval is too wide to compare two designs, or to give a useful performance estimate, it can be refined at the cost of additional analysis effort.

A run is fixed by a \emph{profile}, which selects which subtrees are solved exactly rather than abstracted and at what prefix depth; we write it \(\kappa=(F,H)\), made precise in \Cref{refinement:sec:refinement-profiles}.  Refinement adjusts the profile through two knobs of different granularity.  The first is the exact set \(F\): a \emph{per-subtree} choice of which subtrees to promote to an exact oracle, removing all approximation 
error
inside the promoted subtree.
The second is the prefix depth \(H\), which trades local abstraction
error
against analysis cost.

The mechanism is independent of any particular instantiation of the analysis kernel.
We illustrate it throughout on the max--retry structure of quantum repeaters, where the 
amplification of errors along a
path to the root is simply a product of retry factors \(1/a\);
this
situation
makes the influence of each 
operator easy to read off, and thus makes
the refinement targets easy to identify.

\subsection{Profiles and Exact Sets}
\label{refinement:sec:refinement-profiles}

We now make 
precise the concept of a
profile \(\kappa=(F,H)\), where \(F\) is the exact set and \(H\) the prefix depth.
The set \(F\) is an antichain of closed subexpressions: no element of \(F\) is a proper subexpression of another. Thus \(F\) is an explicitly chosen \emph{set} of subtrees, whose members need not have the same size; it is not a global size threshold.  
Selecting \(F\) by a size or state-count cutoff---promoting every subtree below the cutoff---is one \emph{policy} for choosing this set (the one used in \Cref{sec:evaluation}), not the definition of \(F\) itself. 
Replacing a subexpression in \(F\) contracts the expression. Each \(f\in F\) is solved exactly by an oracle (which is an exact oracle for a closed subexpression, introduced in \Cref{sec:method}), abstracted into the prefix--tail domain, and then treated as an atomic summary by the rest of the analysis. The depth \(H\) is uniform: the main presentation uses the same prefix depth for all non-oracle nodes, which is the setting in which exact-prefix propagation has the cleanest statement.  Local, node-dependent prefix depths are possible as an implementation variant, but they require additional bookkeeping to ensure that a parent never asks for a prefix longer than the prefixes provided by its children.

Concretely, if the oracle returns the exact distribution \(X_f=\llbracket f\rrbracket\), then the promoted subtree is replaced by the oracle summary \((s_f,\epsilon_f,\delta_f)\) of \Cref{sec:method}, with
\[
  s_f=\Pi_H(X_f),
  \qquad
  \epsilon_f\ge d_W(X_f,\gamma_H(s_f)),
  \qquad
  \delta_f=0.
\]
The equality \(\delta_f=0\) uses the expectation-preserving abstraction of the main prefix--tail instance.  Promotion removes all approximation
errors
that would have been generated inside \(f\); only the boundary abstraction 
error
\(\epsilon_f\) remains.

The profile \((\emptyset,H)\) is the fully abstract run at prefix depth \(H\). If the root is in \(F\), the whole expression is solved exactly and then abstracted once, producing an exact-query error bound but paying the exact cost of the entire expression.  Useful profiles lie between these two extremes.

\subsection{Error Amplification}
\label{refinement:sec:error-amplification}

Refinement is guided by how local
errors
are amplified on their way to the root. The error-bound recurrence of \Cref{sec:method} already determines the root bound, but it accumulates error bottom-up, one operator at a time, so it does not by itself say \emph{which} local
error
is responsible for how much of the root error.
For refinement, we therefore reorganize that same recurrence into a flat,
per-operator
decomposition.  This decomposition will not introduce new error quantities: it reuses the local abstraction
errors
\(\eta\) of \Cref{sec:method} and the distributional coefficients \(L_{\omega,i}\) of contract~(D), and only records, for each individual
error,
how much it contributes to the
error at the
root.
 
The decomposition is
carried out
in three steps.
 
\textbf{Collect the errors.}  For a contracted expression under profile \(\kappa\), let \(\mathcal L_\kappa\) be the multiset of all local distributional
errors
that appear in its abstract derivation---one entry for each abstraction step, including the atom and oracle-boundary
errors.
Each \(\ell\in\mathcal L_\kappa\) has a non-negative magnitude \(\eta_\ell\), the local abstraction
error
recorded at that step (the \(\eta\) of \Cref{sec:method}).
 
\textbf{Weight each
error
by its path.}
Trace the path from the 
error
site up to the root.  If it passes through parent operators \(\omega_1,\ldots,\omega_d\) at child positions \(i_1,\ldots,i_d\), set
\[
  W_\kappa(\ell)
  =
  \prod_{h=1}^{d} L_{\omega_h,i_h},
\]
which is the product of the distributional coefficients along the path from the
error
to the root.
The
error
is thus amplified by the factor \(W_\kappa(\ell)\) by the time it reaches the root.  Inside a composite derived operator, such as the maximum-then-retry repeater constructor, the product starts where the
error
is introduced:
an
error
after the maximum but before retry includes the retry amplification, while an
error
after retry does not.
 
\textbf{Sum.}  Because the recurrence is linear in the child error bounds and the local
errors,
unrolling it to the root gives the exact decomposition
\begin{equation}
\label{refinement:eq:amp-generic}
  \epsilon_r^\kappa
  =
  \sum_{\ell\in\mathcal L_\kappa}
  W_\kappa(\ell)\,\eta_\ell.
\end{equation}
 
\Cref{refinement:eq:amp-generic} is a diagnostic identity designed for refinement: it attributes the root error bound to individual local
errors,
each scaled by the amplification \(W_\kappa(\ell)\) it incurs along its path.
An
error
with a large amplification factor is a better candidate for refinement than an equal-sized
error
in a less influential part of the expression; the same identity underlies the monotonicity result and the refinement heuristic below.

\paragraph{Max--retry instance.}
For a quantum-repeater node \(v=\mathsf{Retry}_{a_v}(\max(l(v),r(v)))\), the $\max$ has amplification one and
the $\mathsf{Retry}$
has amplification \(1/a_v\).  If \(\eta_v^{B}\) is the local
error after the $\max$ step and \(\eta_v^{T}\) is the local error after $\mathsf{Retry}$,
then their weights are
\[
  W_B^\kappa(v)
  =
  \left(\prod_{x\in\operatorname{Anc}(v)}\frac{1}{a_x}\right)
  \frac{1}{a_v},
  \qquad
  W_T^\kappa(v)
  =
  \prod_{x\in\operatorname{Anc}(v)}\frac{1}{a_x}.
\]
For an oracle node \(f\in F\), the boundary abstraction
error
is introduced at the output of that subtree, so its weight is
\(
  W_F^\kappa(f)
  =
  \prod_{x\in\operatorname{Anc}(f)}\frac{1}{a_x}.
\)
Thus, for the $\max$--$\mathsf{Retry}$ instantiation, the generic decomposition above specializes to
\begin{equation}
\label{refinement:eq:amp-maxretry}
  \epsilon_r^\kappa
  =
  \sum_{v\in A_\kappa\setminus F}
  \bigl(
    W_B^\kappa(v)\eta_v^B
    +
    W_T^\kappa(v)\eta_v^T
  \bigr)
  +
  \sum_{f\in F} W_F^\kappa(f)\eta_f^{\mathrm{ex}}.
\end{equation}
Here \(A_\kappa\) is the set of active non-oracle nodes in the contracted expression.  \Cref{refinement:eq:amp-maxretry} is not an additional soundness theorem specific to repeaters; it is the generic \(L\)-product amplification of \Cref{refinement:eq:amp-generic} specialized to the coefficients \(L_{\max}=1\) and \(L_{\mathsf{Retry}_a}=1/a\).

\subsection{Cost Model and Objectives}
\label{refinement:sec:refinement-cost}

Refinement produces a whole space of error-bounded profiles, which promote different subtrees at different prefix depths and therefore require different amounts of analysis effort.  To choose among them we need a way to quantify that effort, so we attach a cost model to each profile.  The cost model is what turns refinement into an optimization problem: minimize the reported error bound subject to a cost budget, the problem solved by the search procedure of \Cref{refinement:sec:greedy-refinement}.  

For a non-oracle active node \(v\), let \(c_v^{\mathrm{abs}}(H)\) be the cost of performing the local abstract computation at prefix depth \(H\).  In the implementation, retry and convolution-style operations are quadratic in the local evaluation horizon, so a simple model is
\[
  c_v^{\mathrm{abs}}(H)=\Theta(J(H)^2),
\]
where \(J(H)\ge H\) is the finite horizon used to upper-bound local abstraction
errors.
For an oracle node \(f\in F\), let \(c_f^{\mathrm{ex}}\) be the cost of exact solving, estimated from the size of the exact state space or measured by the exact oracle.

The total cost of a profile is
\[
  \operatorname{Cost}(\kappa)
  =
  \sum_{f\in F} c_f^{\mathrm{ex}}
  +
  \sum_{v\in A_\kappa\setminus F} c_v^{\mathrm{ap}}(H).
\]
The natural optimization problem is
\begin{equation}
\label{refinement:eq:opt-problem}
  \min_{\kappa} \delta_r^\kappa
  \qquad
  \text{subject to}
  \qquad
  \operatorname{Cost}(\kappa)\le B,
\end{equation}
where \(B\) is an analysis budget.  One may also optimize the more conservative objective \(\epsilon_r^\kappa\), or use a combined diagnostic objective
\[
  \operatorname{Obj}(\kappa)
  =
  \alpha\delta_r^\kappa
  +
  \beta\left|\widehat Q_\kappa-\widehat Q_{\kappa_0}\right|,
\]
which balances the error-bound width against the movement of the point estimate from a baseline profile \(\kappa_0\).

\subsection{Greedy Refinement}
\label{refinement:sec:greedy-refinement}
 
The implementation uses a greedy, CEGAR-style search over profiles, shown in Algorithm~\ref{refinement:alg:greedy}.  Starting from the coarse profile \((\emptyset,H)\), each iteration forms a set of candidate refinements and reruns the error-bounded abstract semantics on each.  The main candidate action is exact-subtree promotion: adding an eligible active node to \(F\) and dropping its descendants from the active abstract computation; a secondary action, used for precision sweeps and ablations, increases the prefix depth \(H\).  Each candidate is scored by its objective improvement \(\Delta O\) per unit of added cost \(\Delta C\); candidates that exceed the budget \(B\) or do not improve the objective are discarded, and the step takes the survivor with the largest ratio \(\Delta O/\Delta C\).  The loop ends when no improving candidate remains or the budget is exhausted.
 
\begin{algorithm}[htbp]
\caption{Greedy refinement}
\label{refinement:alg:greedy}
\begin{algorithmic}[1]
\Require initial profile \(\kappa_0=(\emptyset,H)\), cost budget \(B\), objective \(\operatorname{Obj}\)
\Ensure an error-bounded profile \(\kappa\) with \(\operatorname{Cost}(\kappa)\le B\)
\State \(\kappa \gets \kappa_0\)
\Loop
  \State \(\mathit{Cand} \gets \{\, \kappa' : \kappa' \text{ promotes one eligible active node into } F, \text{ or increases } H \,\}\)
  \State \(\mathit{best} \gets \textsc{None}\), \(\ r^\ast \gets 0\)
  \For{\(\kappa' \in \mathit{Cand}\)}
    \State rerun the error-bounded abstract semantics on \(\kappa'\)
    \State \(\Delta C \gets \operatorname{Cost}(\kappa') - \operatorname{Cost}(\kappa)\)
    \State \(\Delta O \gets \operatorname{Obj}(\kappa) - \operatorname{Obj}(\kappa')\)
    \If{\(\operatorname{Cost}(\kappa') \le B\) \textbf{and} \(\Delta O > 0\) \textbf{and} \(\Delta O / \Delta C > r^\ast\)}
      \State \(r^\ast \gets \Delta O / \Delta C\), \(\ \mathit{best} \gets \kappa'\)
    \EndIf
  \EndFor
  \If{\(\mathit{best} = \textsc{None}\)}
    \State \textbf{break} \Comment{no improving candidate within budget}
  \EndIf
  \State \(\kappa \gets \mathit{best}\)
\EndLoop
\State \Return \(\kappa\)
\end{algorithmic}
\end{algorithm}
 
The search is a heuristic, not part of the soundness argument: every profile it considers is independently error-bounded by the abstract semantics, so even if the search misses the globally optimal profile, the returned interval remains sound. This
property
also makes each greedy step safe to take, because refining a profile never loosens its error bounds, as the following monotonicity property records.
 
\begin{proposition}[Monotonicity of the error-bound recurrence]
\label{refinement:prop:monotonicity}
Fix the operator contracts and the active expression structure.  Both root error bounds \(\epsilon_r^\kappa\) and \(\delta_r^\kappa\) are nondecreasing in the child error bounds and in the local 
abstraction-induced error
bounds at every active site.  In particular, if a candidate profile has no larger child error bounds and no larger local 
error
bound at any active site, then its root error bounds are no larger.
\end{proposition}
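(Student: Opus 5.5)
The plan is to prove the statement by structural induction on the active (contracted) expression under a fixed profile $\kappa$. The recurrence of \Cref{method:sec:error-bounded-semantics} then defines each parent pair $(\epsilon,\delta)$ as an explicit function of the child pairs $(\epsilon_i,\delta_i)$ and the local terms $\eta,\beta$. The key observation is that, with the operator contracts fixed, each rule is a nonnegative linear combination. In the distributional rule
\[
  \epsilon=\sum_i L_{\omega,i}\epsilon_i+\eta
\]
all coefficients are nonnegative. Likewise, in the query rule
\[
  \delta=\sum_i A_{\omega,i}\epsilon_i+\sum_i C_{\omega,i}\delta_i+\beta
\]
the coefficients satisfy $L_{\omega,i},A_{\omega,i},C_{\omega,i}\ge 0$ by the contract requirements (D) and (Q). A nonnegative linear combination is coordinatewise nondecreasing, so each node's map $(\epsilon_1,\delta_1,\ldots,\epsilon_m,\delta_m,\eta,\beta)\mapsto(\epsilon,\delta)$ is monotone in the product order.

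The induction then runs as follows. At a leaf (atom or oracle), the pair is $(\bar\eta_H(X),0)$, or $(\bar\eta_H(X),\beta)$ in a bias-bounded variant, so it is trivially nondecreasing in its own local bound. For an internal node, assume two runs on the same active structure with pointwise ordered local bounds. By the induction hypothesis the child pairs are ordered componentwise. Monotonicity of the node map then gives an ordered parent pair. Composing up to the root yields the claim for both $\epsilon_r^\kappa$ and $\delta_r^\kappa$. As a cross-check, for $\epsilon_r^\kappa$ one can read the result off \Cref{refinement:eq:amp-generic} directly: every weight $W_\kappa(\ell)$ is a product of nonnegative $L$'s, so $\epsilon_r^\kappa$ is nondecreasing in each $\eta_\ell$. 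The ``in particular'' clause follows by applying this to the two profiles sharing the same active structure.

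The main obstacle is the coefficients that are not fixed. In the higher-moment instance (\Cref{moment:sec:higher-operator-contracts}), the $L^{(j,i')}$ and $C^{(j,i')}$ for convolutional operators are state-dependent: they are computed from interval enclosures of lower moments, which themselves widen as child $\delta$'s grow. The statement says ``fix the operator contracts,'' so I would first treat all coefficients as fixed constants, which settles the expectation instance. I would then argue the moment case separately. The enclosures use upper bounds such as $U_1=\widehat M_1+\delta^{(1)}$, which are nondecreasing in the child bounds. Each coefficient is a polynomial with nonnegative coefficients in these upper endpoints, so it is also nondecreasing. Products and sums of nonnegative nondecreasing quantities remain nondecreasing, so the induction survives.

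A final caveat is that changing $H$ changes the summaries $s_i$, and hence the point estimates. The proposition only claims monotonicity for a fixed structure with ordered bounds, so I would explicitly exclude promotion from the claim, matching the remark that promotion is recomputed rather than assumed monotone.
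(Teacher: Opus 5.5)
Your proposal is correct and is essentially the paper's argument: the paper reads off monotonicity of $\epsilon_r^\kappa$ from the unrolled identity $\epsilon_r^\kappa=\sum_\ell W_\kappa(\ell)\eta_\ell$ with $W_\kappa(\ell)\ge 0$, and notes that the query recurrence is likewise a nonnegative combination. That is exactly what your node-by-node induction on the pair $(\epsilon,\delta)$ establishes, with the $A$-coupling of $\delta$ to the child $\epsilon_i$ made explicit. Your extra treatment of state-dependent higher-moment coefficients goes beyond what the statement needs, since it fixes the operator contracts. It is sound provided the summaries, and hence the point moments inside the upper endpoints, are held fixed, as your final caveat indicates.
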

 
\begin{proof}
By \Cref{refinement:eq:amp-generic}, \(\epsilon_r^\kappa=\sum_\ell W_\kappa(\ell)\,\eta_\ell\) with nonnegative weights \(W_\kappa(\ell)\ge 0\), so it is nondecreasing in each \(\eta_\ell\).  The query recurrence of \Cref{sec:method} is likewise a nonnegative combination of child-query error bounds and local
errors.
Holding the structure fixed and weakly decreasing any subset of the local
error
bounds therefore cannot increase either root error bound.
\end{proof}
 
Exact-subtree promotion is the one refinement action that does not fall directly under this proposition.  As noted in \Cref{refinement:sec:refinement-profiles}, promoting a subtree removes every 
abstraction-induced error
inside it, but introduces one new boundary 
error
\(\eta_f^{\mathrm{ex}}\) where the exactly-solved subtree is abstracted back into a 
surrogate distribution.
Because it both removes and adds
errors,
promotion is not monotone a priori: the net effect on the root bound depends on the size of the new boundary
error
relative to the ones removed.  The implementation therefore recomputes the objective for each promoted candidate and accepts it only when that objective improves, exactly the acceptance test already used in Algorithm~\ref{refinement:alg:greedy}. 

\subsection{Using Error Bounds for Design Feedback}
\label{refinement:sec:design-feedback}
 
The result of refinement is not 
just
a tighter estimate.
The local
error
decomposition exposes which parts of the input structure are responsible for the remaining uncertainty.
In the $\max$--$\mathsf{retry}$ instance,
errors
on paths with small swap probabilities are amplified by large products of \(1/a\), so these regions are natural targets for exact promotion.
More generally, the weights \(W_\kappa(\ell)\) identify the parts of a recursive cost structure where improved local 
surrogate distributions
would most affect the root error bound.
 
The error-bounded intervals can also be used to compare designs.  If two candidate
instantiations of the analysis-kernel primitives
or two parameter settings produce intervals
\[
  \widehat Q_1\pm\delta_1
  \qquad\text{and}\qquad
  \widehat Q_2\pm\delta_2,
\]
and the intervals are separated, the analysis bounds which design has the
smaller expected cost.
If the intervals overlap, the local error-bound breakdown indicates where additional analysis budget should be spent before making a comparison.
 
Both uses are exercised in the evaluation: \Cref{sec:evaluation} uses the
synchronization-overhead weights derived from the intermediate distribution
summaries to guide a search over repeater-tree designs, and reports the resulting
tree together with an error-bounded interval that is separated from the baseline
designs, so that the improvement is backed by the error bounds rather than merely apparent.

\section{Additional Evaluation Tables and Protocol Details}
\label{evaluation:sec:evaluation}
\label{evaluation:sec:experiment}

We implemented our framework in a tool called \tool.
We evaluate \tool
primarily on quantum-repeater waiting-time analysis~\cite{repeater-paper}. 
The main research questions study how the reported analysis results change as we vary the abstraction controls, the analysis budget, the input tree, and the controlled moment order.
We then present RFID binary-query collision resolution and probabilistic response-time analysis of fork--join (series--parallel) real-time tasks~\cite{saifullah2013parallel,lakshmanan2010forkjoin,davis2019survey} as two non-quantum case studies.

The evaluation is organized around the following research questions:
\begin{description}
  \item[\textsc{RQ1}] How are the analysis results affected by precision controls, scale, tree shape, and parameter regime?
  \item[\textsc{RQ2}] How much can the refinement budget and
a
  greedy refinement loop improve the precision of analysis results?
  \item[\textsc{RQ3}] By how much can analysis-guided search improve a repeater-tree scheme?
  \item[\textsc{RQ4}] For higher moments, how does controlling one moment affect the analysis results on other moments?
\end{description}

Whenever exact evaluation is feasible, we use it as the reference.  For larger repeater chains, where exact evaluation is infeasible, we compare the analysis results against a Monte Carlo simulation ($20$ batches of $2000$ samples)
and report the Monte Carlo standard error.

\subsection{Experimental Setup}
\label{evaluation:subsec:exp-setup}

\paragraph{Quantum-repeater benchmarks.}
We use three families of fixed repeater trees. \emph{Doubling trees} are the balanced power-of-two schemes used as a special scheme in previous work~\cite{repeater-paper}. \emph{Pair-and-carry trees} repeatedly join adjacent pairs from left to right and carry an unmatched subtree to the next round. \emph{Random full trees} recursively choose random split points to test non-balanced fixed schemes.
repeater experiments use two fixed benchmark suites: a 72-instance first-pass suite
A repeater instance is \emph{exact-feasible} when the exact oracle of
\Cref{sec:instantiations} can compute its full waiting-time distribution, and
hence its exact expectation and higher moments.  We use two such suites: a fixed
72-instance suite for mean-bound and accuracy checks, and a heterogeneous
random-tree suite for robustness checks.  In both, the exact references are
computed by that oracle, following the exact method of prior repeater
analyses~\cite{repeater-paper}.

\paragraph{RFID collision-resolution benchmark.}
For RFID collision resolution, we use the binary query-tree model with uniformly distributed identifiers and unit query costs.  We evaluate the parallel critical-path variant: after a useful split, the two children are resolved in parallel,
so the parent cost is the \emph{maximum} of the two child costs
rather than their sum.  As with the quantum repeater, this maximum is the
non-compositional point: the expected parent cost is not determined by the child
expected costs alone, so a scalar mean-only summary cannot compute it and the
analysis must retain distributional shape.
We eliminate no-progress splits as in \Cref{sec:instantiations}, use an exact set for populations of size at most $4$, and test $n\in\{8,16,32,64\}$.

\paragraph{Fork--join parallelism waiting-time benchmarks.}
We model fork--join applications as random series--parallel task trees with 16 leaves, following standard parallel real-time task models~\cite{lakshmanan2010forkjoin,saifullah2013parallel}.  Each leaf carries a discrete, right-skewed execution-time distribution 
that places most of its mass on a typical cost with a thin heavy tail for rare slow runs
motivated by probabilistic timing analysis~\cite{davis2019survey}, while internal nodes represent sequential composition ($+$), fork--join synchronization ($\max$), or probabilistic branching ($\mathsf{Mix}$).  Exact discrete-distribution propagation provides reference results.  For each task, we increase the prefix depth until the relative width of the error-bounded mean interval falls below $1\%$, and report the mean, variance, and deadline-miss probability $\Pr(t>D)$, where $t$ is the waiting time and $D$ is a given deadline.

\subsection{\textsc{RQ1}: How are the analysis results affected by precision controls, scale, tree shape, and parameter regime?}
\label{evaluation:subsec:exp-error-bounds}
\label{evaluation:subsec:exp-certificates}

RQ1 studies how the analysis results change as we vary the prefix depth, the exact-oracle threshold, the problem scale, and the repeater-tree and parameter regime. 

The methods below all analyze the \emph{same fixed input tree} and are compared on
the accuracy of the reported mean:
\begin{itemize}
  \item \texttt{exact}, used only in the exact-feasible regime;
  \item \texttt{paper\_kprime}, the classical doubling approximation
    \(K'_n=\left(\tfrac{3}{2a}\right)^{\log_2 n}\tfrac{1}{p}\);
  \item the nested effective-probability approximation, which collapses exact
    subrepeaters up to a fixed block size (the ``frontier size'' of the
    hand-designed baseline, distinct from our exact-set threshold);
  \item \texttt{interval}\((S_{\max})\), a scalar mean-interval abstraction with
    the same exact-set threshold \(S_{\max}\); each subtree stores only a mean
    interval \(I_v=[\ell_v,u_v]\), and a join
    \(v=\mathrm{Retry}_a(\max(L,R))\) uses the sound update
    \(I_v=\tfrac{1}{a}[\max(\ell_L,\ell_R),\,u_L+u_R]\), reporting the midpoint
    with half-width as the error bound;
  \item \texttt{prefix\_tail}\((H,S_{\max})\), the proposed prefix--tail
    abstraction at prefix depth \(H\) with exact-set threshold \(S_{\max}\) (the
    largest exact state count promoted to the oracle).
\end{itemize}
Here \(H\) is the uniform prefix depth and \(S_{\max}\) the exact-set threshold;
in table method names, the number after \texttt{H} is \(H\) and the number after
\texttt{S} (or the \(S_{\max}\) column) is the threshold.  We reserve \(k\) for the
moment order, used only in \textsc{RQ4}.

Our threshold \(S_{\max}\) and the \texttt{frontier} size of the nested-rate
baseline both pick which small subtrees are solved exactly, but differ in what
happens next: we keep each solved subtree's full prefix--tail distribution and
propagate it with error bounds, whereas the baseline collapses it to a single
effective probability composed by hand-derived rate formulas without any bound.
We therefore write \(S_{\max}\) (the number after \texttt{S} in method names) and
keep the baseline's name \texttt{frontier}.

The accuracy metric on exact-feasible instances is the relative error
\(r_{\mathrm{err}}=100\,|K-\widehat K|/K\), where \(K\) is the exact mean and
\(\widehat K\) the reported mean.  \emph{Coverage} is the fraction of instances
whose true mean lies inside the reported interval, i.e.\ \(|K-\widehat K|\le\delta\);
coverage \(1.0\) means the error bound is never violated.  We also report the
bound-tightness diagnostic \(r_{\mathrm{bound}}=100\,\delta/K\) (the interval is
\(K\pm\delta\), so a large \(r_{\mathrm{bound}}\) means a conservative bound, not
an inaccurate estimate).  For large \(n\) without exact references we report
relative difference to Monte Carlo instead of coverage.

We first isolate the effect of prefix depth.  We fix an exact set of size $4$, use representative exact-feasible $n=8$ repeater trees, and vary
\[
  H\in\{4,8,16,32,64\}.
\]
For each value of $H$, the whole bottom-up analysis is rerun and the root analysis result $\widehat K_{\mathrm{root}}(H)$ is compared against the exact root mean $K_{\mathrm{root}}$.  

\Cref{evaluation:tab:firstpass-convergence} reports mean relative error and the contraction of the relative error-bound radius
\[
  c_H
  =\frac{r_{\mathrm{bound}}(H)}{r_{\mathrm{bound}}(4)}.
\]
The coarse bound at $H=4$ is used only as the baseline for the contraction ratio $c_H$, not as a representative interval width, since it is the loosest setting by design.

\begin{table}[t]
\centering
\scriptsize
\setlength{\tabcolsep}{2.5pt}
\caption{Effect of prefix depth.  The last column reports error-bound radius contraction $c_H=r_{\mathrm{bound}}(H)/r_{\mathrm{bound}}(4)$; lower is better.}
\label{evaluation:tab:firstpass-convergence}
\begin{tabular}{lrr}
\toprule
Method & Mean rel. error & Bound radius contraction \\
\midrule
\texttt{prefix\_tail\_H4}  & $1.8857\%$ & $1.000$ \\
\texttt{prefix\_tail\_H8}  & $1.0206\%$ & $0.675$ \\
\texttt{prefix\_tail\_H16} & $0.3513\%$ & $0.369$ \\
\texttt{prefix\_tail\_H32} & $0.0452\%$ & $0.141$ \\
\texttt{prefix\_tail\_H64} & $0.0008\%$ & $0.023$ \\
\bottomrule
\end{tabular}
\end{table}

Both quantities improve as $H$ increases.  The reported analysis result improves from $1.8857\%$ mean relative error at $H=4$ to $0.0008\%$ at $H=64$.  The error-bound radius contracts by more than an order of magnitude, ending at about $2.3\%$ of its coarse value.  This experiment supports the interpretation of $H$ as a precision knob: larger prefixes retain more distributional information before tail compression.

The exact-oracle frontier threshold \(S_{\max}\) is a second precision knob:
it controls which small subtrees are solved by the exact backend before being
abstracted at the frontier boundary.  This knob is independent of the prefix
depth \(H\).  To avoid conflating the two effects, \Cref{evaluation:tab:frontier-threshold-ablation}
fixes \(H=12\) and varies only \(S_{\max}\).

The last column of \Cref{evaluation:tab:firstpass-convergence} is normalized with respect
to the coarsest prefix depth \(H=4\).  In contrast,
\Cref{evaluation:tab:frontier-threshold-ablation} reports both the absolute relative
error-bound radius \(100\delta/K\) and a within-table contraction normalized
to \(S_{\max}=2\).  Thus the contraction values in the two tables should be
read as within-knob effects, not as numbers with a common baseline.

\begin{table}[t]
\centering
\scriptsize
\setlength{\tabcolsep}{2.6pt}
\caption{Effect of the exact-oracle frontier threshold at fixed prefix depth
\(H=12\) on a representative exact-feasible doubling repeater with \(n=8\),
\(p=0.1\), and \(a=0.9\).  \textsc{Average prompted subtrees} is the average
number of subtrees that are prompted to an oracle leaf.  Coverage is checked against exact
references; the bound radius is \(100\delta/K\).}
\label{evaluation:tab:frontier-threshold-ablation}
\begin{tabular}{ccccc}
\toprule
$S_{\max}$ & \shortstack{Average prompted subtrees} & Rel. err. & Rel. bound & Cov. \\
\midrule
\(2\)  & \(0.00\) & \(1.1852\%\) & \(45.43\%\) & \(1.000\) \\
\(4\)  & \(4.00\) & \(1.1852\%\) & \(40.62\%\) & \(1.000\) \\
\(8\)  & \(4.00\) & \(1.1852\%\) & \(40.62\%\) & \(1.000\) \\
\(16\) & \(6.00\) & \(1.0014\%\) & \(19.35\%\) & \(1.000\) \\
\(32\) & \(6.00\) & \(1.0014\%\) & \(19.35\%\) & \(1.000\) \\
\bottomrule
\end{tabular}
\end{table}

Increasing \(S_{\max}\) from \(2\) to \(16\) removes six internal nodes from
local abstraction by covering them with exact-frontier subtrees.  The root
analysis-result error decreases from \(1.1852\%\) to \(1.0014\%\), and the
relative error-bound radius tightens from \(45.43\%\) to \(19.35\%\), a
contraction to \(42.6\%\) of the \(S_{\max}=2\) value.  The rows for
\(S_{\max}=4\) and \(8\), and likewise for \(16\) and \(32\), coincide because
no additional subtree exact-state size is crossed between those thresholds.

\paragraph{Homogeneous balanced trees and hand-designed baselines.}
\label{evaluation:subsec:exp-baselines}
We next change the benchmark regime while keeping the query fixed.  We compare against hand-designed approximations from the physics literature ~\cite{repeater-paper, sangouard2011quantum}.  These baselines are naturally defined for homogeneous doubling trees, so this experiment uses doubling instances with exact references when feasible.  \Cref{evaluation:tab:repeater-baselines-small} reports mean relative error in the exact-feasible regime.

\begin{table}[t]
\centering
\small
\caption{Exact-feasible repeater comparison against hand-designed doubling approximations.
}
\label{evaluation:tab:repeater-baselines-small}
\begin{tabular}{lr}
\toprule
Method & Mean rel. error \\
\midrule
\texttt{paper\_kprime} & $21.6000\%$ \\
\texttt{paper\_nested\_rate\_frontier\_2} & $6.5543\%$ \\
\texttt{paper\_nested\_rate\_frontier\_4} & $2.9855\%$ \\
\texttt{prefix\_tail\_H16\_S4} & $0.1000\%$ \\
\texttt{prefix\_tail\_H32\_S4} & $0.0165\%$ \\
\bottomrule
\end{tabular}
\end{table}

The classical scalar approximation \texttt{paper\_kprime} has $21.6000\%$ mean relative error.  The nested effective-probability baselines improve this to $6.5543\%$ and $2.9855\%$.  Prefix--tail analysis is substantially more accurate: $0.1000\%$ at $H=16$ and $0.0165\%$ at $H=32$.  Thus, even in the homogeneous doubling setting for which the 
baseline methods were designed, the prefix--tail abstraction used in our analysis framework
gives a much more accurate analysis result.

\paragraph{Scaling to larger chains.}
\begin{table}[t]
\centering
\small
\caption{Large-$n$ repeater accuracy check against a refreshed
fixed-seed Monte Carlo reference at $n=64$.  We abbreviate
\texttt{prefix\_tail\_H64\_S4} as \texttt{prefix\_tail} and
\texttt{paper\_nested\_rate\_frontier\_4} as \texttt{nested\_4}.
}
\label{evaluation:tab:large-n-mc64}
\begin{tabular}{llrr}
\toprule
$p$ & Method & $\widehat K$ & Rel. diff. to MC \\
\midrule
$0.1$  & \texttt{kprime}      & $7290.00$ & $30.53\%$ \\
$0.1$  & \texttt{nested\_4}    & $7100.82$ & $27.14\%$ \\
$0.1$  & \texttt{prefix-tail} & $6186.12$ & $10.77\%$ \\
$0.37$ & \texttt{kprime}      & $1970.27$ & $39.32\%$ \\
$0.37$ & \texttt{nested\_4}    & $1757.44$ & $24.27\%$ \\
$0.37$ & \texttt{prefix-tail} & $1464.79$ & $3.58\%$ \\
\bottomrule
\end{tabular}
\end{table}

\Cref{evaluation:tab:large-n-mc64} is a large-scale accuracy check: because exact evaluation is unavailable at $n=64$,
we compare with a Monte Carlo algorithm
using $20$ batches of $2000$ samples; the resulting means are $5584.87$ with standard error $24.07$ for $p=0.1$, and $1414.17$ with standard error $6.14$ for $p=0.37$.  Against this Monte Carlo reference,
prefix--tail reduces the relative difference from $27.14\%$ to $10.77\%$ at $p=0.1$, and from $24.27\%$ to $3.58\%$ at $p=0.37$, compared with the nested effective-probability approximation.  This result supports the same conclusion as the exact-feasible comparison: retaining distributional shape gives a more accurate analysis result than scalar effective-rate approximations.

\paragraph{Non-balanced and heterogeneous trees.}
The preceding comparisons favor the physics baselines by staying in the homogeneous doubling regime.  We therefore also evaluate a fixed heterogeneous random-full-tree suite with $n\in\{8,10\}$, using multiplicative jitter on leaf probabilities $p_u$ and internal swap probabilities $a_v$.  These instances are still exact-feasible.  \Cref{evaluation:tab:secondpass-hetero} reports analysis-result error and coverage (the fraction of instances whose exact mean falls inside the reported interval, so $1.0$ means the bound is never violated).

\begin{table}[t]
\centering
\scriptsize
\setlength{\tabcolsep}{2.5pt}
\caption{Heterogeneous random-tree repeater benchmarks.}
\label{evaluation:tab:secondpass-hetero}
\begin{tabular}{lrr}
\toprule
Method & Mean rel. error & Coverage \\
\midrule
\texttt{interval\_S4} & $28.4672\%$ & $1.000$ \\
\texttt{prefix\_tail\_H16\_S4} & $0.1001\%$ & $1.000$ \\
\texttt{prefix\_tail\_H32\_S4} & $0.0464\%$ & $1.000$ \\
\bottomrule
\end{tabular}
\end{table}

The analysis-result advantage persists outside the homogeneous balanced setting.
At $H=32$, prefix--tail reduces mean relative error from $28.4672\%$ for the
interval baseline to $0.0464\%$, with coverage still equal to $1.0$.  This
is the main robustness result for the repeater instantiation: our method computes
precise results even on instances that lack the regular structure the baseline
methods assume.

\begin{tcolorbox}[title=Finding]
 Increasing the prefix depth consistently improves analysis-result accuracy and tightens the error bounds; the exact-oracle threshold changes the cost--precision tradeoff, and prefix--tail analysis remains substantially more accurate than scalar or hand-designed alternatives across homogeneous, large-scale, and heterogeneous repeater regimes.   
\end{tcolorbox}

\subsection{\textsc{RQ2}: How much can the refinement budget and the greedy refinement loop improve the precision of analysis results?}
\label{evaluation:subsec:exp-trajectories}

Unlike \textsc{RQ1}, this experiment does not compare analyzers on a fixed tree;
it varies the analysis \emph{profile} on a fixed tree using the greedy refinement
loop of \Cref{sec:greedy-refinement}.
RQ2 isolates how the available analysis budget and the greedy refinement loop
improve the reported analysis results for a fixed input tree.  The budget $B$ is
measured in the refinement profile-cost units of \Cref{sec:refinement}, not in
wall-clock time.  For a profile $\kappa=(F,H)$, the cost model sums exact-oracle
cost proxies for the promoted nodes in the exact set $F$ and active
approximate-node costs for the remaining nodes.  For each input tree, we write
\[
  C_0 = \mathrm{Cost}(\kappa_0)
\]
for the cost of the initial profile, and report budgets as both absolute cost-model values and normalized ratios $B/C_0$.  The budget is an upper bound on the allowed profile cost, so the optimizer may not use all of it.  In particular, promoting a subtree may reduce the cost proxy if the exact-set node replaces several active approximate nodes.  The root is not eligible for exact-set promotion in this experiment.

We run the refinement loop on two structurally different initial repeater trees,
with $n=10$, $p=0.3$, and $a=0.5$---a pair-carry tree and a tree meeting the AVL-tree balance condition.
The configured objective is $\delta$, and smaller values are better.  \Cref{evaluation:tab:trajectory-summary} reports representative rows: the initial profile, the first accepted refinement level, and the saturation level for each tree.

\begin{table*}[t]
\centering
\small
\setlength{\tabcolsep}{4pt}
\caption{Representative greedy refinement trajectories on structurally distinct
$n=10$ repeater trees.  Budget is measured in the refinement profile-cost units of
\Cref{sec:refinement}; $C_0$ is the initial profile cost for that tree.
The objective is the root query bound $\delta^\kappa_r$, shown in the
Start~$\delta$ and Final~$\delta$ columns; smaller is better.}
\label{evaluation:tab:trajectory-summary}
\begin{tabular}{lrrrrrrr}
\toprule
Tree & $C_0$ & Budget (B) & $B/C_0$ & Steps & Start $\delta$ & Final $\delta$ & Contraction \\
\midrule
Pair-carry $n=10$ & $65616$ & $65616$ & $1.00$ & $0$ & $47.38$ & $47.38$ & $1.0000$ \\
Pair-carry $n=10$ & $65616$ & $68897$ & $1.05$ & $2$ & $47.38$ & $47.37$ & $0.9998$ \\
Pair-carry $n=10$ & $65616$ & $98424$ & $1.50$ & $4$ & $47.38$ & $5.845$ & $0.1234$ \\
AVL $n=10$ & $81984$ & $81984$ & $1.00$ & $0$ & $18.53$ & $18.53$ & $1.0000$ \\
AVL $n=10$ & $81984$ & $86084$ & $1.05$ & $2$ & $18.53$ & $17.56$ & $0.9478$ \\
AVL $n=10$ & $81984$ & $327936$ & $4.00$ & $2$ & $18.53$ & $17.56$ & $0.9478$ \\
\bottomrule
\end{tabular}
\end{table*}

At $B/C_0=1.00$, no refinement is possible, so the objective remains unchanged for both trees.  The pair-carry tree initially has a much larger error-bound objective, $\delta^\kappa_r=47.38$.  A small budget increase to $B/C_0=1.05$ allows two exact-set promotions, but those promotions remove only low-impact losses and barely change the root objective.  At $B/C_0=1.50$, the loop can combine exact-set promotion with an increased prefix on the high-impact left subtree; $\delta^\kappa_r$ drops to $5.845$, a contraction to $12.34\%$ of the initial value.  The AVL tree starts with a smaller objective, $18.53$, and its first two promotions at $B/C_0=1.05$ reduce the objective to $17.56$.  Larger tested budgets do not improve the objective further, so the trajectory saturates.

\begin{tcolorbox}[title=Finding]
The budget determines which profile changes become legal, and the
payoff depends on where each tree's local abstraction losses sit and how strongly
they are amplified on the way to the root: the pair-carry tree has a high-impact
subtree whose loss, once enough budget reaches it, contracts the bound from $47.38$
to $5.845$, whereas the AVL tree has no comparably influential loss and saturates
after a smaller contraction from $18.53$ to $17.56$.
\end{tcolorbox}

\subsection{\textsc{RQ3}: By how much can analysis-guided search improve a repeater-tree scheme?}
\label{evaluation:subsec:exp-tree-design}

This experiment varies the input tree under a fixed analysis profile.  Improvement
is measured relative to the search start tree,
\(\mathrm{Improvement}(\tau)=100\%\cdot\bigl(K(\tau_{\mathrm{start}})-K(\tau)\bigr)/K(\tau_{\mathrm{start}})\),
and we say tree \(\tau_1\) has \emph{error-bounded dominance} over \(\tau_2\)
(with \(\tau_1\) the smaller) when their final error-bounded intervals are
separated, \(\widehat K(\tau_1)+\delta(\tau_1)<\widehat K(\tau_2)-\delta(\tau_2)\).
The search heuristic is not part of the error-bound proof: each candidate tree is
reanalyzed by the same error-bounded semantics before it is compared.
RQ3 measures how much the repeater-tree scheme can be improved when prefix--tail
summaries for each tree node are used to guide local rewrites.  The preceding
experiments treat the input tree as fixed; here the search starts from an initial
repeater tree with a fixed left-to-right leaf order and seeks a lower expected
waiting time by changing only the binary bracketing (the split policy of the
interval generator).

For an internal node $v$ with children $L$ and $R$, the search computes a
synchronization-overhead score
\[
  G_v = \widehat{\mathbb E}[\max(T_L,T_R)] - \max(\widehat K_L,\widehat K_R),
\]
where $\widehat{\mathbb E}[\max(T_L,T_R)]$ is computed from the child prefix--tail
summaries.  This quantity depends on the distributional overlap of the two children
and is not available to a scalar mean-only heuristic.  Each node is
assigned a weight $h_v = W_v G_v$, where $W_v$ is the product of the
distribution-error amplification factors along the path from $v$ to the root; for
the homogeneous repeater tree used here this is a product of retry factors $1/a$.
Guided by these weights, the search applies local rewrites to the
binary bracketing---rotating a node with its sibling, or moving a split point so
that a different pair of adjacent subtrees is joined first---and reanalyzes each
resulting tree with the final error-bound profile.  The search heuristic itself is
not claimed to be optimal; the reported comparisons are made using the
error-bounded intervals of the final trees.

\begin{table*}[t]
\centering
\scriptsize
\setlength{\tabcolsep}{2.4pt}
\caption{Tree-design analysis on the $n=10$, $p=0.3$, $a=0.5$ running example.  All trees are reanalyzed with the final error-bound profile.  Dominance means interval separation.}
\label{evaluation:tab:tree-design-search-running}
\begin{tabular}{lrrrr}
\toprule
Tree & Root split & Exact mean & Error-bounded interval & Dominates \\
\midrule
$\tau_{\mathrm{pc}}$ & $8+2$ & $153.948$ & $153.948\pm0.111$ & -- \\
$\tau_{\mathrm{AVL}}$ & $4+6$ & $127.627$ & $127.627\pm0.007$ & $\tau_{\mathrm{pc}}$ \\
$\tau_{\mathrm{found}}$ & $4+6$ & $126.663$ & $126.663\pm0.010$ & $\tau_{\mathrm{pc}},\tau_{\mathrm{AVL}}$ \\
\bottomrule
\end{tabular}
\end{table*}

\Cref{evaluation:tab:tree-design-search-running} shows the
results of running the search on the running example from \Cref{sec:overview}.
The pair-and-carry tree has root split $8+2$ and exact mean $153.948$.  An AVL-shaped baseline with root split $4+6$
has the better value
$127.627$.
The prefix--tail-guided search finds another $4+6$ tree with exact mean $126.663$.  Its error-bounded interval is separated from both the pair-and-carry start and the AVL baseline:
\[
  126.663+0.010 < 127.627-0.007.
\]
Thus, 
our analysis method can help with more than just evaluating
a fixed tree:
its intermediate distribution summaries can guide tree-design changes that improve the resulting waiting time.  

\begin{tcolorbox}[title=Finding]
Analysis-guided search reduces the exact mean from $153.948$ to $126.663$ ($17.72\%$) and returns an error-bounded interval separated from both the pair-and-carry start and the AVL baseline, which shows its ability to find a better tree scheme.
\end{tcolorbox}

\subsection{\textsc{RQ4}: How does controlling $M_1$ affect $M_2$, and how does controlling $M_2$ affect $M_1$?}
\label{evaluation:subsec:exp-moment}

Both runs use moment order \(k=2\) and hold the benchmark, prefix depth, and
exact-set policy fixed; they differ only in which tail is fitted.  The
\(M_1\)-directed abstraction \(\mathcal A_1\) keeps the simple geometric tail: it
matches the local mean exactly but not the local second moment, so it carries an
explicit bound on the resulting \(M_2\) bias.  The \(M_2\)-directed abstraction
\(\mathcal A_2\) fits a richer quadratic tail
\(\tau_\theta(r)\propto\exp(\theta_1 r+\theta_2 r^2)\) that matches both the local
first and second moments.  The comparison thus isolates the effect of fitting a
moment-matching tail versus reusing the cheaper one, holding the recursive program
and remaining controls fixed.
Writing \(r_{M_j}(\mathcal A_i,H)\) for the relative analysis-result error of
moment \(M_j\) under abstraction \(\mathcal A_i\) at prefix depth \(H\), we report
the directional effects
\[
  \Delta^{\mathrm{err}}_{1\to2}(H)=r_{M_2}(\mathcal A_1,H)-r_{M_2}(\mathcal A_2,H),
  \qquad
  \Delta^{\mathrm{err}}_{2\to1}(H)=r_{M_1}(\mathcal A_2,H)-r_{M_1}(\mathcal A_1,H),
\]
together with the analogous differences of the relative error-bound radii; these
isolate the effect of changing the controlled moment order rather than the prefix
depth or oracle policy.

The two abstractions also correspond to the two tail routes of \Cref{sec:moment-method}: \(\mathcal A_1\) is the cheap, bias-bounded route (the \(k=1\) geometric member of the maximum-entropy hierarchy, which preserves \(M_1\) and carries an explicit \(M_2\) bias bound), while \(\mathcal A_2\) is the moment-preserving route (the quadratic maximum-entropy member, which matches both \(M_1\) and \(M_2\)).  RQ4 therefore also measures the cost, in moment accuracy, of taking the cheaper tail rather than fitting a richer one.

\Cref{evaluation:tab:cross-moment-h-sweep-errors} reports the two directional effects on the exact-feasible $n=8$ doubling repeater with $(p,a)=(0.1,0.5)$.  Positive $\Delta^{\mathrm{err}}_{1\rightarrow2}$ means that changing from $M_1$ control to $M_2$ control reduces the $M_2$ analysis-result error.  Negative $\Delta^{\mathrm{err}}_{2\rightarrow1}$ means that the same change also reduces the $M_1$ analysis-result error.

\begin{table*}[t]
\centering
\scriptsize
\setlength{\tabcolsep}{4.0pt}
\caption{Matched cross-moment analysis-result errors on the representative $H$-sweep.  Directional effects are percentage-point differences.}
\label{evaluation:tab:cross-moment-h-sweep-errors}
\begin{tabular}{r|rr|rr|rr}
\toprule
& \multicolumn{2}{c|}{$M_1$-directed $\mathcal A_1$}
& \multicolumn{2}{c|}{$M_2$-directed $\mathcal A_2$}
& \multicolumn{2}{c}{Directional effects} \\
$H$ & $M_1$ err. & $M_2$ err. & $M_1$ err. & $M_2$ err. & $\Delta^{\rm err}_{1\to2}$ & $\Delta^{\rm err}_{2\to1}$ \\
\midrule
$4$   & $3.1976\%$ & $21.6616\%$ & $1.2017\%$ & $1.2027\%$ & $20.4589$ & $-1.9959$ \\
$8$   & $1.9713\%$ & $16.8668\%$ & $0.7790\%$ & $0.8069\%$ & $16.0598$ & $-1.1922$ \\
$16$  & $0.7016\%$ & $10.5318\%$ & $0.2935\%$ & $0.3208\%$ & $10.2110$ & $-0.4081$ \\
$32$  & $0.0903\%$ & $4.6496\%$  & $0.0392\%$ & $0.0463\%$ & $4.6033$  & $-0.0511$ \\
$64$  & $0.0016\%$ & $1.1354\%$  & $0.0007\%$ & $0.0009\%$ & $1.1345$  & $-0.0009$ \\
$128$ & $<10^{-6}\%$ & $0.0824\%$ & $<10^{-6}\%$ & $<10^{-6}\%$ & $0.0824$ & $-2.7\times10^{-7}$ \\
\bottomrule
\end{tabular}
\end{table*}

Controlling $M_2$ lowers the $M_2$ error at every prefix depth in
the table.
At $H=4$, the error falls from $21.6616\%$ to $1.2027\%$; at $H=128$, it falls from $0.0824\%$ to below $10^{-6}\%$.  The $M_2$-directed abstraction also lowers the $M_1$ error
at every prefix depth,
although the difference becomes negligible once the prefix contains almost all relevant mass.  Under $\mathcal A_2$, the derived variance error falls from $0.4861\%$ at $H=4$ to $2.43\times10^{-7}\%$ at $H=128$.

The error-bounded intervals retain coverage $1.0$ for $M_1$, $M_2$, and variance under both abstractions.  The $M_2$-directed abstraction gives a smaller $M_2$ radius at every prefix depth: at $H=128$, the radius decreases from $1.267\%$ to $0.889\%$, while the two $M_1$ radii agree to three decimal places.  Its variance radius contracts from $911.656\%$ at $H=4$ to $2.746\%$ at $H=128$.

We used
$12$ unique exact-feasible configurations with $n\in\{4,8\}$, $p\in\{0.1,0.37\}$, $a=0.5$, and $H\in\{16,32,64\}$.  At these power-of-two sizes, the doubling and pair-and-carry generators produce the same tree, so each configuration is counted once; effects below $10^{-8}$ percentage points are classified as ties.  Across these configurations, $\mathcal A_2$ lowers the $M_2$ analysis-result error in $91.7\%$, ties in $8.3\%$, and never increases it.  For $M_1$, it lowers the error in $41.7\%$, ties in $50.0\%$, and increases it in one configuration ($8.3\%$), by only $2.76\times10^{-5}$ percentage points.  The error bounds follow the same direction, and neither abstraction has a coverage failure for $M_1$, $M_2$, or variance.

\begin{tcolorbox}[title=Finding]
Across the matched quantum-repeater experiments, switching from the
$M_1$-directed to the $M_2$-directed abstraction reduces or ties the $M_2$
analysis-result error in every configuration and reduces or ties the $M_1$ error in
$91.7\%$ of configurations, while both abstractions retain coverage $1.0$ for
$M_1$, $M_2$, and variance.
\end{tcolorbox}

\subsection{Case Studies}
\label{evaluation:subsec:exp-case-studies}
\label{evaluation:subsec:exp-transfer}

We report two applications outside quantum repeaters as case studies.
Both case studies
combine subcomputations
through a synchronization barrier using $\max$, so a scalar mean-only abstraction again loses information.  These studies examine how the same error-bounded abstraction is instantiated with different front ends, operator mixes, and queries.

\paragraph{RFID collision resolution.}
We evaluate the parallel critical-path binary query tree of \Cref{sec:instantiations}, with uniform identifiers and unit query costs. The scalar baseline replaces $\mathbb{E}[\max(X, Y)]$ by $\max(\mathbb{E}[X], \mathbb{E}[Y])$ at each useful split; we compare it against prefix--tail analyzers with $H \in \{8, 16, 32\}$ and an exact set of size $4$. \Cref{evaluation:tab:rfid-results} reports exact means and relative errors, with the $H = 16$ relative error-bound radius as a diagnostic.
 
The scalar baseline degrades as the population grows, from $18.48\%$ relative error at $n = 8$ to $37.19\%$ at $n = 64$. The prefix--tail analyzer avoids this collapse by retaining distributional prefix information: at $H = 16$ its mean relative error is $0.0081\%$, and at $H = 32$ it agrees with the exact means at the displayed precision.

\begin{table*}[t]
\centering
\scriptsize
\setlength{\tabcolsep}{4pt}
\caption{RFID binary-query collision resolution, parallel critical-path
cost.  \textsc{Scalar err.}\ is the relative error of the scalar baseline that
replaces $\mathbb{E}[\max(X,Y)]$ by $\max(\mathbb{E}[X],\mathbb{E}[Y])$;
\textsc{H8/H16/H32 err.}\ are the relative errors of the prefix--tail analyzer at
prefix depths $8$, $16$, and $32$.  All errors are relative to the exact mean $K$.
The last column reports the relative error-bound radius $100\delta/K$ at $H=16$ as a
representative bound-tightness diagnostic (shown for one depth to save space).}
\label{evaluation:tab:rfid-results}
\begin{tabular}{rrrrrrr}
\toprule
$n$ & Exact $K$ & Scalar err. & H8 err. & H16 err. & H32 err. & H16 rel. bound radius \\
\midrule
$8$  & $9.6908$  & $18.48\%$ & $0.0057\%$ & $0.000017\%$ & $<10^{-6}\%$ & $0.0987\%$ \\
$16$ & $12.6181$ & $26.50\%$ & $0.246\%$  & $0.000102\%$ & $<10^{-6}\%$ & $0.170\%$ \\
$32$ & $15.5822$ & $32.58\%$ & $2.63\%$   & $0.00183\%$ & $<10^{-6}\%$ & $0.331\%$ \\
$64$ & $18.5640$ & $37.19\%$ & $11.66\%$  & $0.0303\%$ & $<10^{-6}\%$ & $0.810\%$ \\
\midrule
Mean & -- & $28.69\%$ & $3.63\%$ & $0.0081\%$ & $<10^{-6}\%$ & $0.352\%$ \\
\bottomrule
\end{tabular}
\end{table*}

The RFID collision-resolution protocol has the same retry-and-max shape
as the quantum repeater, but its randomness comes from a different source.  In the
repeater, the randomness is in the probabilistic entanglement swaps and retries
(governed by the success probabilities $a$ and $p$); in RFID, it is in the
\emph{binomial split} of a colliding population into two random subpopulations
$K\sim\mathrm{Binom}(n,\tfrac12)$, which then recurse in parallel.  The analysis
handles both with the same distribution-valued summary, so the same machinery
transfers from the quantum-repeater setting to a protocol whose recursion is driven
by a different kind of randomness.

\paragraph{Fork--join real-time response-time analysis.}

The second case study analyzes fork--join response times on a suite of
random series--parallel task trees.  For the deadline-miss experiment we use $6$
random tasks at $n=16$ leaves (parallel processes); leaves carry discrete
execution-time distributions that put most mass on a typical cost with a thin heavy
tail on rare slow runs, and internal nodes represent sequential composition ($+$),
fork--join synchronization ($\max$), or probabilistic branching ($\mathsf{Mix}$).
These operators model standard parallel real-time task
structures~\cite{lakshmanan2010forkjoin, saifullah2013parallel}, while the leaf
distributions follow the modeling style used in probabilistic timing
analysis~\cite{davis2019survey}.  Exact discrete-distribution propagation is
feasible at these sizes and provides the reference response-time distribution.
The case-study query is the deadline-miss probability $\Pr(t>D)$, where $t$ is the
end-to-end response time and $D$ is a deadline.  We place the deadline
$D$ at quantiles of the exact response-time distribution: $q_{0.90}$ denotes the
$90\%$ quantile (so the true miss probability $\Pr(t>q_{0.90})$ is $0.10$), and
larger quantiles $q_{0.99},q_{0.999},q_{0.9999}$ probe progressively rarer
deadlines.  The prefix--tail summary answers this query from its survival
function: if $D$ lies inside the stored prefix the answer is exact, and if $D$ lies
in the tail the analyzer returns an error-bounded interval for the survival
probability.  We compare this interval with the exact reference and with a Monte
Carlo estimate.  We do not use the moment-only interval baseline here, because
moments do not determine the tail probability $\Pr(t>D)$.

\begin{table}[t]
\centering
\small
\caption{Deadline-miss query \(\Pr(t>D)\) on the fork--join suite, with deadlines \(D\) chosen at quantiles of the exact response-time distribution. \textsc{Location} indicates whether \(D\) lies inside the stored prefix. \textsc{PT width} is the absolute width \(U-L\) of the prefix--tail error-bounded interval for \(\Pr(t>D)\).  \textsc{MC rel. CI} is the relative half-width of the Monte Carlo confidence interval.  \textsc{MC-unstable} is the fraction of instances for which the Monte Carlo half-width exceeds the Monte Carlo estimate.Each row aggregates the $6$ random $n=16$ tasks by the median across
tasks; the \textsc{tasks} column reports the number of tasks.  Deadlines $D$ are
placed at quantiles $q$ of the exact response-time distribution.}
\label{evaluation:tab:forkjoin-deadline}
\begin{tabular}{llrrrr}
\toprule
\textsc{Deadline} \(D\) & \textsc{Location} & \textsc{Miss prob.} & \textsc{PT width} & \textsc{MC rel. CI} & \textsc{MC-unstable} \\
\midrule
\(q_{0.90}\)   & within prefix & \(9.70{\times}10^{-2}\) & \(0\)              & \(0.043\)      & \(0.000\) \\
\(q_{0.99}\)   & tail          & \(9.74{\times}10^{-3}\) & \(2.85{\times}10^{-3}\) & \(0.140\) & \(1.000\) \\
\(q_{0.999}\)  & tail          & \(9.80{\times}10^{-4}\) & \(2.40{\times}10^{-3}\) & \(0.403\) & \(1.000\) \\
\(q_{0.9999}\) & tail          & \(9.77{\times}10^{-5}\) & \(1.53{\times}10^{-3}\) & \(0.726\) & \(0.500\) \\
\bottomrule
\end{tabular}
\end{table}

\Cref{evaluation:tab:forkjoin-deadline} shows the behavior of the tail-probability query.  At \(q_{0.90}\), the deadline lies inside the prefix, so the prefix--tail interval has width \(0\).  For rarer deadlines, the query falls in the tail and the analyzer returns a nonzero error-bounded bracket.  Monte Carlo becomes unstable in the rare-event regime: at \(q_{0.99}\) and \(q_{0.999}\), every tested instance has a Monte Carlo confidence half-width larger than the estimate itself, while prefix--tail still returns an error-bounded interval for the miss probability.

\section{Additional Extension Details}
\label{extension:sec:extensions}

The main development handles finite recursive cost structures built from
independent, prefix-complete cost operators.  \Cref{sec:moment-method}
then lifts the expectation instance to arbitrary fixed raw-moment order.  This section records two further directions that go beyond the main kernel.
The first extends finite
cost expressions
to unbounded recursive generators, where bottom-up induction is replaced by a fixed-point argument.
The second considers a different query family, deadline and tail-probability queries, which reuse the same prefix--tail summaries but require a different distance plugin.
We close by making explicit the assumptions that
define the scope of
the current kernel.

\subsection{Unbounded Recursive Generators}
\label{extension:sec:unbounded-extension}

In \Cref{sec:method}, a tree generator produces, from its input, a single finite
cost-expression tree: the unfolding terminates, and the analysis proceeds
bottom-up over that one tree.  Some probabilistic protocol specifications are not
size-budgeted in this way.  A recursive call may reappear without a decreasing
structural parameter, so the unfolding need not terminate and there is no finite
bottom-up evaluation order over a single tree.  As an extension, we allow an
\emph{unbounded tree generator}, which is not required to produce a single tree:
its unfolding induces a \emph{distribution over cost-expression trees}, and the
object of interest is the cost distribution obtained from that distribution over
trees.  The exact semantics is then defined as a fixed point of a system of
distributional equations, one equation per context, as we now make precise.

\paragraph{Contexts and the generator.}
An unbounded tree generator is described by a finite set of \emph{contexts}
\(C\).  A context \(c\in C\) is a named recursion point: a symbol that stands for
``the cost-expression tree generated starting from \(c\)'', and that may be
referred to (recursively) inside the bodies of other contexts.  Contexts play the
role of the nonterminals of a probabilistic grammar; each one is defined by a
probabilistic choice among a finite set of bodies.  For each \(c\in C\),
\[
  g_c = \mathsf{Mix}_{\nu_c}\bigl(b_{c,j}\bigr)_{j\in J_c},
  \qquad \sum_{j\in J_c}\nu_{c,j}=1,
\]
where \(J_c\) is the finite index set of bodies available at context \(c\); each
\(\nu_{c,j}\ge 0\) is the probability of selecting body \(b_{c,j}\), and the
weights \(\nu_c=(\nu_{c,j})_{j\in J_c}\) sum to one; and \(\mathsf{Mix}\) is the
probabilistic-choice operator of \Cref{def:cost-operators}, here selecting one
body rather than one cost.  Each body \(b_{c,j}\) is a cost-expression built from
the cost operators of \Cref{def:cost-operators}, the atoms \(\mathsf{Atom}(\cdot)\)
of \Cref{sec:method}, and recursive calls to contexts in \(C\).  A recursive call
to context \(c'\) inside \(b_{c,j}\) stands for an independent instance of the tree
generated from \(c'\).

\paragraph{Concrete semantics as a fixed point.}
Interpreting each context by its induced cost distribution turns the generator
into a system of equations.  Write \(X_c\) for the cost distribution induced by
the tree generated from context \(c\), and collect these into the vector
\(\vec X=(X_c)_{c\in C}\).  Each body \(b_{c,j}\) induces a distribution
transformer \(\llbracket b_{c,j}\rrbracket(\vec X)\): it is evaluated by the
concrete operator transformers of \Cref{sec:method}, with every recursive call to
a context \(c'\) replaced by the corresponding component \(X_{c'}\).  Combining the
bodies of context \(c\) by their choice weights gives the \(c\)-th component of the
system,
\[
  F_c(\vec X) \;=\; \sum_{j\in J_c}\nu_{c,j}\,\llbracket b_{c,j}\rrbracket(\vec X),
\]
and stacking these components yields \(\vec F=(F_c)_{c\in C}\).  The concrete
semantics of the generator is a vector \(\vec X\) satisfying the fixed-point
equation
\[
  \vec X = \vec F(\vec X).
\]
This is a system of equations over cost distributions, one per context; it does
not reduce to the cost of any single finite tree.

\paragraph{Abstract iteration.}
The prefix--tail analysis is lifted to this system by iterating the abstract
transformer on a vector of summaries, one summary per context.  Write
\(\vec s^{(m)}=(s^{(m)}_c)_{c\in C}\) for the vector of summaries after \(m\)
iterations, and start from the empty summary \(\vec s^{(0)}\) (each component the
summary of the everywhere-zero distribution).  One iteration concretizes each
summary, applies the concrete system \(\vec F\), and abstracts the result back
into the prefix--tail do
\[
  \vec s^{(m+1)}
  =
  \Pi_H\bigl(\vec F(\gamma_H(\vec s^{(m)}))\bigr),
\]
where \(\gamma_H\) is the concretization and \(\Pi_H\) the abstraction operator of
\Cref{sec:method}, both applied componentwise.  When this iteration converges, its
limit \(\vec s^{\star}\) is a fixed point of the abstract system, i.e.\ a vector of
summaries reproduced by one abstract step.

\paragraph{From abstract fixed point to a global error bound.}
A fixed point \(\vec s^{\star}\) of the abstract iteration is not by itself a sound
error bound: it is a self-consistent vector of summaries, but reaching it requires
controlling how error introduced at one context propagates around the recursion.
This control is provided by a residual argument.  Each abstract step introduces, at
each context \(c\), a local abstraction error \(\eta_c\) (the error of abstracting
that context's combined body distribution into the prefix--tail domain); collect
these into \(\vec\eta=(\eta_c)_{c\in C}\).  The operator contracts of
\Cref{sec:method} bound how error propagates through one unfolding: they induce a
nonnegative matrix \(M\), the \emph{Lipschitz matrix}, whose entry \(M_{c,c'}\)
bounds the amount by which error in context \(c'\) can affect the summary of
context \(c\) in a single unfolding.  Concretely, \(M_{c,c'}\) is obtained by
summing, over the bodies \(b_{c,j}\) of context \(c\) and weighted by \(\nu_{c,j}\),
the distributional Lipschitz coefficients that those bodies' operators assign to
their recursive call on \(c'\).  If the spectral radius satisfies
\[
  \rho(M)<1,
\]
then the per-context global error bound \(\vec\epsilon=(\epsilon_c)_{c\in C}\)
satisfies
\[
  \vec\epsilon \;\le\; (I-M)^{-1}\,\vec\eta,
\]
where \(I\) is the identity matrix and \((I-M)^{-1}=\sum_{k\ge 0}M^k\) is the
resolvent, which is well defined and nonnegative exactly because \(\rho(M)<1\).
The condition \(\rho(M)<1\) plays the role of a contraction hypothesis: it makes
the accumulated residual bound finite.  In the expectation setting it is also a
domain condition, ensuring that the fixed-point cost distribution has finite mean.
For higher raw moments, the same construction requires a vector-valued Lipschitz
system and correspondingly stronger moment conditions.  We therefore treat
unbounded tree generators as an extension of the main finite-tree analysis rather
than as part of its core soundness theorem.

\paragraph{A simple branching example.}
Consider a one-context generator that stops with probability \(r\) and otherwise
recursively spawns two independent subproblems with probability \(q\):
\[
  G = \mathsf{Mix}_{(r,q)}\bigl(\mathsf{Atom}(0),\; 1 + G_1 + G_2\bigr),
  \qquad r+q=1.
\]
Here the single context generates the tree \(G\); with probability
\(r\) the body is the atom \(\mathsf{Atom}(0)\) (a subtree of cost \(0\), i.e.\ the
base case), and with probability \(q\) the body is \(1 + G_1 + G_2\), where
\(G_1\) and \(G_2\) are two independent instances of the same generated tree
\(G\) and the constant \(1\) is the local cost of the branching step.  At the
level of termination probability \(z\)---the probability that the
recursive spawning eventually stops---this generator satisfies the branching
equation
\[
  z = r + qz^2 .
\]
The two terms mirror the two bodies: with probability \(r\) the
generation stops immediately, and with probability \(q\) it continues into two
independent copies, each terminating with probability \(z\).  For cost analysis,
the corresponding expectation equation has a finite solution only under a
subcriticality condition.  For this one-context generator the Lipschitz
matrix \(M\) is the \(1\times 1\) matrix whose single entry is \(2q\): each of the
two recursive copies contributes a unit distributional Lipschitz coefficient,
weighted by the continuation probability \(q\).  The spectral-radius condition
\(\rho(M)<1\) then reads \(2q<1\), which is exactly the subcriticality condition
for the branching process.  The matrix condition \(\rho(M)<1\) abstracts this
phenomenon for general mutually recursive generators and general operator
libraries.

\subsection{Threshold and Tail-Probability Queries}
\label{extension:sec:threshold-extension}

Moment queries are not the only useful numerical queries over a cost
distribution.  In quantum repeaters, a run that takes longer than the coherence
time of the quantum memories may be unusable even if it eventually succeeds.
This motivates a deadline-miss query \(\Pr(T>T_{\mathrm{coh}})\).  In collision
resolution and RFID-style protocols, one may similarly ask for the probability
of exceeding a slot or frame budget.  These are tail-probability queries.

\paragraph{Query and matched distance.}
For a threshold \(\tau\in\mathbb N\), define
\[
  Q_\tau(X) = \Pr(X>\tau)=S_X(\tau),
\]
where \(S_X(t)=\Pr(X>t)\) is the survival function.  A finite set of thresholds
samples the survival curve; all thresholds together recover the whole survival
function.  The matched distributional error bound is the Kolmogorov, or uniform
survival, distance
\[
  d_K(X,Y)=\sup_{t\ge 0}|S_X(t)-S_Y(t)|.
\]
It controls threshold queries exactly:
\[
  |Q_\tau(X)-Q_\tau(Y)|\le d_K(X,Y),
  \qquad
  \sup_\tau |Q_\tau(X)-Q_\tau(Y)| = d_K(X,Y).
\]
Thus, for the family of all threshold queries, the distributional error bound and
the uniform query error bound coincide.  Since
\[
  d_K(X,Y)\le \sum_{t\ge0}|S_X(t)-S_Y(t)|=d_W(X,Y),
\]
any Wasserstein error bound also gives a sound threshold bound.  However,
\(d_K\) is the sharper matched distance for this query family.  Conversely,
\(d_K\) does not control expectation: a small uniform error spread over a long
tail can have a large survival sum.

\paragraph{Extracting threshold answers from summaries.}
Let
\[
  s=(f_0,\ldots,f_H,\rho,\lambda)
\]
be the geometric prefix--tail summary used in the expectation instance.  The
surrogate threshold answer is the survival probability of its concretization:
\[
  \widehat Q_\tau(s)=S_{\gamma_H(s)}(\tau)
  =
  \begin{cases}
    1-\sum_{u=0}^{\tau} f_u, & \tau\le H,\\[4pt]
    \rho\,\lambda^{\tau-H}, & \tau>H .
  \end{cases}
\]
For a general tail family \(\tau_\theta\) over residual offsets
\(R\in\mathbb N\), the second case becomes
\[
  \widehat Q_\tau(s)
  =
  \rho\,\Pr_{\tau_\theta}(R\ge \tau-H),
  \qquad \tau>H.
\]
This query extractor is cheaper than an expected-value extractor: a threshold
query is a point query on the survival function rather than a sum over the
entire support.

\paragraph{Exactness inside the prefix.}
The exact-prefix lemma has an immediate consequence for threshold queries.  In
the finite, prefix-complete fragment, for every node \(v\) and every threshold
\(\tau\le H\),
\[
  \widehat Q_\tau(s_v)
  =
  Q_\tau(\llbracket v\rrbracket),
  \qquad \delta_\tau=0 .
\]
Indeed, the prefix stores the true masses up to \(H\), and the tail mass
\(\rho=\Pr(X>H)\) is also exact.  Therefore any deadline that lies inside the
prefix is answered exactly, without constructing the full Markov chain or the
full cost distribution.  When \(\tau>H\), the error comes only from the shape of
the conditional residual tail.

\paragraph{Operator contracts under \(d_K\).}
The same built-in operators used in the main analysis are Lipschitz under
\(d_K\), with the same amplification constants as in the expectation instance.
For mixtures,
\[
  d_K(\mathsf{Mix}_w(\vec X),\mathsf{Mix}_w(\vec Y))
  \le
  \sum_i w_i d_K(X_i,Y_i).
\]
For maximum and minimum, holding the other argument fixed,
\[
  d_K(\max(X,Z),\max(Y,Z))\le d_K(X,Y),
  \qquad
  d_K(\min(X,Z),\min(Y,Z))\le d_K(X,Y).
\]
For sums, convolution is nonexpansive in \(d_K\): if \(Z\) is independent of the
inputs, then
\[
  d_K(X+Z,Y+Z)\le d_K(X,Y).
\]
For random repetition,
\[
  d_K(\mathsf{Repeat}_N(X),\mathsf{Repeat}_N(Y))
  \le
  \mathbb E[N] d_K(X,Y).
\]
The proofs are pointwise survival-function calculations.  For instance,
\(S_{X+Z}(t)=\sum_z \Pr(Z=z)S_X(t-z)\), so convolution averages shifted survival
functions and cannot increase their sup distance.  Repeat follows by iterating
the sum bound for a fixed count and then averaging over \(N\).

\paragraph{Abstraction loss in \(d_K\).}
The ideal local loss for threshold analysis is
\[
  \eta_K(X)=d_K\bigl(X,\gamma_H(\Pi_H(X))\bigr).
\]
Because abstraction preserves the prefix and the tail mass, this loss is zero on
all thresholds \(t\le H\).  Let
\[
  R_X = X-(H+1)\mid X>H
\]
be the true conditional residual tail.  For the geometric-tail instance,
\[
  \eta_K(X)
  =
  \rho\cdot
  \sup_{r\ge0}
  \left|S_{R_X}(r)-\lambda^{r+1}\right| .
\]
A finite-horizon executable upper bound is also straightforward.  For any
\(J\ge H\), define
\[
  \bar\eta_K(X;J)
  =
  \max\Bigl(
    \max_{H<t\le J}|S_X(t)-S_{\gamma}(t)|,
    S_X(J+1),
    S_{\gamma}(J+1)
  \Bigr),
\]
where \(\gamma=\gamma_H(\Pi_H(X))\).  This is sound because survival functions
are monotone: beyond \(J\), the sup error is bounded by the larger of the two
remaining survival values.  Thus the local loss for threshold queries is a
maximum rather than a survival sum, and can be cheaper to compute than the
\(d_W\) loss used for expectations.

\paragraph{Sound threshold semantics.}
A threshold-query instance of the analyzer uses the same summaries and the same
operator structure, but replaces \(d_W\) by \(d_K\) in the distributional
error bound.  The judgment
\[
  \vdash^{K}_{H} e \Downarrow (s,\epsilon_K)
\]
means
\[
  d_K(\llbracket e\rrbracket,\gamma_H(s))\le \epsilon_K.
\]
The soundness proof is the same induction as in the main method: use the
\(d_K\) operator contracts, add the local \(d_K\) abstraction loss, and apply the
triangle inequality.  At the root, for every threshold \(\tau\),
\[
  |Q_\tau(\llbracket e\rrbracket)-\widehat Q_\tau(s)|\le \epsilon_K,
\]
with the sharper fact \(\delta_\tau=0\) whenever \(\tau\le H\).

\paragraph{How the compositionality obstacle changes.}
Different query families make different operators hard.  For the mean query,
addition and random repetition are expectation-compositional, but maximum and
minimum require distributional information.  For a fixed threshold, maximum,
minimum, and mixture are compositional at that threshold:
\[
  F_{\max(X,Y)}(\tau)=F_X(\tau)F_Y(\tau),
\]
\[
  F_{\min(X,Y)}(\tau)=1-S_X(\tau)S_Y(\tau),
  \qquad
  F_{\mathsf{Mix}}(\tau)=\sum_i w_iF_{X_i}(\tau).
\]
By contrast, \(F_{X+Y}(\tau)\) cannot be recovered from \(F_X(\tau)\) and
\(F_Y(\tau)\) alone; it requires the whole prefix up to \(\tau\).  Random
repetition has the same convolutional behavior.  The hard operators therefore
shift with the query family.  This is precisely why a distributional
prefix--tail summary is useful: the same summary can support multiple query
plugins.

\begin{center}
\begin{tabular}{lll}
\toprule
Query family & Matched error bound & Operators requiring distributional information \\
\midrule
Mean \(M_1\) & \(d_W=d_1\) & \(\max,\min\) \\
Raw moment \(M_j\) & \(d_j\) & \(\max,\min\), and convolutional cross-moment terms \\
Threshold \(Q_\tau\) & \(d_K\) & \(+,\mathsf{Repeat}\) \\
\bottomrule
\end{tabular}
\end{center}

A full empirical evaluation of threshold queries would use coherence-time
thresholds in quantum repeaters and slot-budget thresholds in collision
resolution.  The formal point is already visible from the contracts above: this
query family reuses the same kernel, changes the distance plugin, and obtains
exact answers for deadlines inside the prefix.

\subsection{Beyond the Main Kernel}
\label{extension:sec:kernel-limitations}

The framework is deliberately modular.  The main theorems cover independent,
integer-valued cost computations built from operators that provide the contracts
required by the chosen query and distance.  Several natural extensions fall
outside this kernel.

\paragraph{Operators without prefix-completeness.}
Prefix-completeness is a precision property, not a soundness requirement.  An
operator may provide a concrete transformer and error-propagation contracts, but
fail to provide an exact finite-prefix transformer.  Such an operator can still
be analyzed soundly by applying it to surrogate distributions and bounding the
resulting abstraction loss, but the stored prefix should then be understood as a
prefix of the surrogate rather than an exact prefix of the concrete distribution.
For example, the transformation \((X-d)^+\) can move tail mass back into small
cost values, so the output prefix may depend on the input tail.

\paragraph{Dependence and shared randomness.}
The current semantics assumes that sibling subcomputations and repeated
attempts are independent unless sharing is explicitly represented.  This
assumption is essential for the unary summaries used by the analyzer.  If two
child costs are correlated through shared randomness, memory constraints, or a
common environment, then a pair of unary summaries is not sufficient to compute
\(\max(X,Y)\) or \(X+Y\).  Such systems require joint summaries, coupling
information, or explicit dependence annotations.

Concretely, in the quantum-repeater setting a physical implementation may
couple branches that in our model are independent---for instance, failed swap
attempts that share the same quantum-memory hardware, or a shared control channel
across sub-repeaters.  Capturing such correlation requires joint summaries or
explicit dependence annotations rather than the unary summaries used here.

\paragraph{Continuous or real-valued costs.}
The prefix--tail domain is discrete: it stores point masses \(\Pr(X=t)\) and uses
survival sums for expectations and moments.  Continuous-time models would
require a different front end to the same idea, for example grid or histogram
prefixes, integral versions of the survival distances, and error bounds for the
discretization error.  We leave these continuous variants to future work.

\subsection{Memory Cutoffs via Phase-Indexed Summaries}
\label{extension:sec:cutoff}
 
Physical quantum-repeater models impose a finite \emph{memory cutoff}: an
elementary link held in memory for more than ${\sim}\,m$ time units is forcibly
reset and must restart entanglement distribution from
scratch~\cite{repeater-paper}. In the exact Markov-chain treatment of that model,
the state of an $n$-segment repeater is a tuple $(i_1,\dots,i_n)$ with
$i_j\in\{0,\dots,m\}$ recording, \emph{per segment}, the time elapsed since that
segment became ready; when one segment reaches age $m$ while its partner is not
yet ready, that segment alone is reset, and its partner keeps the progress it
has already made. The unary cost summaries of the main analysis marginalize this
age away: at a swap node $v=\mathrm{Retry}_a(\max(L,R))$ the summary records the
distribution of $v$'s completion time, but not how long an already-finished
child has been waiting in memory---and whether a cutoff fires depends precisely
on that waiting age. Two children with identical completion-time summaries but
different current ages have different \emph{remaining} completion-time
distributions, so a single summary per node is no longer a sufficient statistic.
 
We stress that this is orthogonal to the unbounded generators of
\Cref{extension:sec:unbounded-extension}. There, the source of difficulty is the
\emph{unfolding} of the tree: a fixed set of contexts is expanded an unbounded
number of times, and the error bound is a fixed point over contexts. Here the
tree may be entirely finite---the running four-segment example already exhibits
the phenomenon---and the difficulty is instead a hidden timer \emph{inside a
single node}. The cutoff is a change to the local operator semantics, not to how
the tree is generated; the two extensions are independent and may be layered (we
return to this at the end of the subsection).
 
\paragraph{Phase-indexed summaries.}
The remedy is to index each node summary by the relevant waiting age. Let
$\phi\in\{0,\dots,m\}$ denote the \emph{phase} of a node: the number of steps the
currently-waiting child has been held in memory, with $\phi=0$ when no child is
waiting (both still distributing, or just synchronized). In place of a single
prefix--tail summary $s_v$, a node now stores the family
\[
  \mathbf s_v \;=\; \bigl(s_v^{(0)}, s_v^{(1)}, \dots, s_v^{(m)}\bigr),
\]
where $s_v^{(\phi)}$ is the prefix--tail summary of $v$'s remaining completion
time conditioned on the node being resumed at phase $\phi$. Each component
$s_v^{(\phi)}=(f_0,\dots,f_H,\rho,\lambda)$ is an ordinary summary of the same
form as in the main development; the phase is only a finite index, and the
abstract domain is the $(m{+}1)$-fold product of the original prefix--tail
domain. The horizon $m$ is a physical constant (the coherence time in units of
$\tau$) and does not grow with the number of segments $n$.
 
\paragraph{The phase recurrence.}
The $(m{+}1)$ components are coupled by a per-node recurrence that is the phased
analogue of the retry recurrence of the main analysis. From phase $\phi$, one
attempt resolves into exactly one of three outcome types, matching the
per-segment reset rule of the source model:
\begin{enumerate}
  \item \emph{Completion.} Both children are ready within the window and the swap
        succeeds (probability $a$); the node completes.
  \item \emph{Pair reset.} Both children are ready but the swap fails (probability
        $1-a$); both children are reset and the node returns to phase $0$.
  \item \emph{Segment reset / phase advance.} The waiting child is not yet joined
        by its partner. If its age is below $m$, the phase advances; if it has
        reached $m$, the \emph{waiting child alone} is reset (the partner keeps
        distributing) and the node returns to phase $0$.
\end{enumerate}
Collecting the probabilities of the non-completing outcomes into an
$(m{+}1)\times(m{+}1)$ sub-stochastic phase kernel $T_v$, where
$T_v[\phi'\!\leftarrow\!\phi]$ is the probability of leaving phase $\phi$ this
attempt without completing and arriving at phase $\phi'$, and writing
$\mathbf g_v=(g_v^{(\phi)})_\phi$ for the per-phase sub-PMFs of the completing
outcome, the node summaries satisfy the vector defective-renewal equation
\[
  s_v^{(\phi)} \;=\; g_v^{(\phi)} \;+\; \sum_{\phi'=0}^{m}
        T_v[\phi'\!\leftarrow\!\phi]\,\ast\, s_v^{(\phi')},
  \qquad 0\le\phi\le m,
\]
solved as a finite linear system over the $m{+}1$ phase components. This is
exactly the structure of the scalar retry recurrence~\eqref{Eq:RetryRecurrence},
with the single self-convolution replaced by a convolution among the $m{+}1$
phase components. Crucially, every entry of $T_v$ and every $g_v^{(\phi)}$ is
determined by the children's prefix--tail surrogates: the events ``a child
becomes ready at step $t$'' and ``a child is still distributing after $t$ steps''
are read from the child prefixes and survival values that the analysis already
stores exactly.
 
\paragraph{Contracts.}
The three-part operator contract of \Cref{sec:operators-contracts} carries over
componentwise.
\emph{(P)~Prefix completeness.} Since $\mathbf g_v$ and $T_v$ depend only on child
prefixes up to $H$, and the renewal recurrence convolves only lower-index masses
back in, each phase component $s_v^{(\phi)}$ stores the true first $H$ masses of
the corresponding conditional distribution. The exact-prefix lemma
(Lemma~\ref{lem:exact-prefix}) therefore holds verbatim, separately for every
phase $\phi$.
\emph{(D)/(Q)~Error propagation.} The solution of the phased renewal is
$\mathbf s_v=\sum_{k\ge0}T_v^{\,k}\ast\mathbf g_v$, so local errors are amplified
by the resolvent $(I-T_v)^{-1}$. The amplification is finite because $T_v$ is
sub-stochastic with $\rho(T_v)<1$: the cutoff caps every phase at $m$, so each
attempt either completes or returns to a phase in $\{0,\dots,m\}$, and no phase
mass can persist indefinitely. This spectral bound is \emph{local to the node}
and follows from the cutoff alone; it is unrelated to the generator
sub-criticality condition of \Cref{extension:sec:unbounded-extension}. The same-index
contraction of the underlying $\max$ (coefficient $\le 1$ on every survival
weight) is unchanged, and the scalar $1/a$ amplification of the cutoff-free swap
is replaced by the matrix resolvent $(I-T_v)^{-1}$. The query bound propagates
through the same resolvent, with the state-dependent coefficients discharged by
sound interval arithmetic on the children's error-bounded moment intervals,
exactly as for the convolutional operators of the main analysis.
 
\paragraph{Reduction to the cutoff-free kernel.}
The construction strictly generalizes the main analysis. As $m\to\infty$ the
segment-reset outcome never fires, so only phase $\phi=0$ is reachable: the kernel
$T_v$ collapses to the scalar $1-a$, the resolvent $(I-T_v)^{-1}$ collapses to
$1/a$, and the $(m{+}1)$ summaries collapse to one. The phase recurrence then
reduces to the retry recurrence and the operator to $\mathrm{Retry}_a(\max(L,R))$
analyzed in the main body. Finite cutoffs are thus a conservative extension of
the kernel: at $m=\infty$ the two analyses coincide.
 
\paragraph{Cost and combination.}
Phase indexing multiplies the storage and the local solve at each swap node by a
factor of $m{+}1$, and replaces the scalar retry solve by an
$(m{+}1)$-dimensional linear system. Since $m$ is a fixed physical constant
independent of $n$, this is a constant-factor overhead that does not affect the
asymptotic scaling of the analysis in the size of the tree. Finally, the two
extensions of this section compose: an unbounded generator over phase-indexed
summaries combines the per-node resolvent $(I-T_v)^{-1}$ (supplied by the cutoff)
with the inter-context fixed point $(I-M)^{-1}$ of \Cref{extension:sec:unbounded-extension}
(supplied by the generator), where $M$ is now defined on the phase-indexed
domain. We do not develop this combination further here.

\end{document}